\renewcommand*\env@matrix[1][*\c@MaxMatrixCols c]{%
  \hskip -\arraycolsep
  \let\@ifnextchar\new@ifnextchar
  \array{#1}}
\tikzstyle{startstop} = [rectangle, minimum width=0.4cm, minimum height=0.4cm,text centered, draw=black, fill=red!30]
\tikzstyle{io} = [rectangle, minimum width=1cm, minimum height=0.5cm, text centered, draw=black, fill=blue!20]
\tikzstyle{process} = [rectangle, minimum width=1cm, minimum height=0.5cm, text centered, text width=2.2cm, draw=black, fill=orange!30]
\tikzstyle{blank} = [rectangle, minimum width=0.5cm, minimum height=0.5cm, text width=2.4cm, draw=white, fill=white!30]
\tikzstyle{arrow} = [thick,->,>=stealth]
\tikzstyle{inputOutput} = [rectangle, draw=white, fill=white!30]
\tikzstyle{operator} = [rectangle, draw=black, fill=white!30]
\tikzset{meter/.append style={draw, inner sep=5, rectangle, font=\vphantom{A}, minimum width=20, line width=.5,
 path picture={\draw[black] ([shift={(.1,.15)}]path picture bounding box.south west) to[bend left=60] ([shift={(-.1,.15)}]path picture bounding box.south east);\draw[black,-latex] ([shift={(0,.1)}]path picture bounding box.south) -- ([shift={(.18,-.08)}]path picture bounding box.north);}}}
\tikzset{cross/.style={cross out, draw=black, minimum size=2*(#1-\pgflinewidth), inner sep=0pt, outer sep=0pt},
cross/.default={1pt}}
\definecolor{darkblue}{rgb}{0,0,0.5}
\definecolor{darkgreen}{rgb}{0,0.5,0}
\newcommand{\ccC}{\mathscr{C}}
\DeclareMathOperator{\tr}{Tr}
\crefname{algorithm}{protocol}{protocols}
\Crefname{algorithm}{Protocol}{Protocols}
\crefname{line}{step}{steps}
\Crefname{line}{Step}{Steps}
\newtheorem{theorem}{Theorem}[section]
\newtheorem{corollary}[theorem]{Corollary}
\newtheorem{lem}[theorem]{Lemma}
\newtheorem{prop}[theorem]{Proposition}
\newtheorem{defn}[theorem]{Definition}
\theoremstyle{remark}
\newtheorem{remark}[theorem]{Remark}
\renewcommand*{\ALG@name}{Protocol}
\providecommand\theHALG@line{\thealgorithm.\arabic{ALG@line}}
\title{Noise-tolerant public-key quantum money from a classical oracle}
\author{Peter Yuen}
\affiliation{University of Ottawa, Department of Mathematics and Statistics}
\email{pyuen103@uottawa.ca}
\date{}
\begin{document}

\maketitle
\begin{abstract}
    Quantum money is the task of verifying the validity of banknotes while ensuring that they cannot be counterfeited. Public-key quantum money allows anyone to perform verification, while the private-key setting restricts the ability to verify to banks, as in Wiesner's original scheme. The current state of technological progress means that errors are impossible to entirely suppress, hence the requirement for noise-tolerant schemes. We show for the first time how to achieve noise-tolerance in the public-key setting. Our techniques follow Aaronson and Christiano's oracle model, where we use the ideas of quantum error correction to extend their scheme: a valid banknote is now a subspace state possibly affected by noise, and verification is performed by using classical oracles to check for membership in ``larger spaces." Additionally, a banknote in our scheme is minted by preparing conjugate coding states and applying a unitary that permutes the standard basis vectors.
\end{abstract}

\tableofcontents

\section{Introduction}

The history of quantum money is essentially as long as quantum information itself. At its core, quantum money aims to address the issue of verifying the validity of banknotes while ensuring that they cannot be counterfeited 
(\emph{i.e.}, \emph{cloned}). Famously, this problem was first addressed by Wiesner in \cite{Wie83} where, put simply, a single banknote consisted of $n$ conjugate coding states, and verification involved measuring each qubit in the basis that it was prepared in (thus, this verification procedure requires the verifier to know how a given banknote was prepared). Without knowledge of how the banknote was prepared, a counterfeiter attempting to produce a second copy of a valid banknote has a success probability of at most $(3/4)^n$, as shown in \cite{MVW13}. Informally, Wiesner's scheme is secure within a model where the verification procedure only says whether the banknote is accepted or rejected (the post-measurement state is not returned). 

Crucially, in Wiesner's scheme, only banks are capable of verifying banknotes (since the ability to verify enables the minting of new, valid banknotes). Quantum money where \emph{anyone} can perform verification of a banknote is known as \emph{public-key quantum money}. Public-key quantum money was first addressed in \cite{Aar09} where it was proved that there exists a quantum oracle relative to which secure public-key quantum money is possible. Then, in the foundational work of \cite{AC12}, it was proven that public-key quantum money relative to a classical oracle is secure. The scheme involves a banknote as a subspace state $\ket{A}$ (a uniform superposition of the vectors in a random $n/2$-dimensional subspace $A \subseteq \mathbb{F}^n_2$) and verification is performed through classical oracles that check for membership in $A$ and the dual subspace $A^\perp$. The technical result from \cite{AC12} is that any quantum algorithm that maps $\ket{A}$ to $\ket{A}^{\otimes 2}$ must make exponentially many queries, implying that a quantum polynomial-time adversary with access to the verification procedure would not be able to counterfeit a banknote.

\paragraph{Noise-tolerant schemes:}
While much work has been devoted to public-key quantum money, that work has been focused on instantiating the oracles of \cite{AC12} or devising entirely new schemes. Regardless of how public-key quantum money is obtained in the plain model (without oracles), it will be necessary for such a scheme to be noise-tolerant (\emph{i.e.}, robust) for it to be practically meaningful. Indeed, one of the most pressing practical issues of quantum money is the necessity of quantum storage for the banknote (for a review of quantum storage, see \cite{HEH+16}; for more recent work, see \cite{WLQ+21} where the authors maintain coherence of a qubit for over one hour). Due to the inherently sensitive nature of quantum information, any amount of time in quantum storage almost guarantees the incursion of quantum errors. If noise affects an honest banknote to the point that it does not pass verification, then it has lost its prescribed value. In the classical oracle scheme of \cite{AC12}, a single error will, in general, cause an honest banknote to no longer pass verification (see \cref{section:construction-of-a-noise-tolerant-mini-scheme}). To the best of our knowledge, we are unaware of any previous work that focuses on noise-tolerant public-key quantum money.

In the private-key setting of quantum money, there has indeed been work towards noise-tolerance (see \cite{PYJ+12, AA17, Kum19}). Notably, the scheme in \cite{AA17} is based off \emph{hidden matching} quantum retrieval games; in this model, the authors prove that, for $n=14$, the maximum achievable noise-tolerance is $23.3\%$, which makes their protocol nearly optimal since it can tolerate up to $23.03\%$ noise. In \cite{BSS21arxiv}, the authors construct a noise-tolerant private tokenized signature scheme (from which one can obtain private-key quantum money; see \cref{section:further-related-work}). To simply describe their approach, a token is a tensor product of conjugate coding states, signing the bit $0$ ($1$) involves measuring all qubits in the computational (Hadamard) basis, and verification amounts to comparing the measured string with a bank's secret string to check for agreement; noise-tolerance comes from allowing inconsistencies in a fraction of the string. Unfortunately, their scheme is not secure against an adversary that can query the classical verification oracle in quantum superposition.

\subsection{Contributions}
We present a scheme for noise-tolerant public-key quantum money and analyze its completeness and soundness error. Intuitively, completeness error quantifies the correctness of the scheme while soundness error quantifies the success probability of a counterfeiter. More precisely, completeness error $\varepsilon$ means that valid banknotes (those minted by the bank and then possibly affected by tolerable noise) are accepted by the verification procedure with probability at least $1-\varepsilon$. Soundness error $\delta$ means that a counterfeiter, who possesses a valid banknote and whose quantum circuit is polynomial in size, can produce another banknote such that both banknotes are accepted with probability at most $\delta$. We show that our scheme has perfect completeness error and has $|\mathcal{E}_q|^2/\textsf{exp}(n)$ soundness error, where $\mathcal{E}_q$ represents the set of tolerated Pauli $X$ and $Z$ errors, each affecting up to $q$ of the $n$ qubits in our banknote (\emph{i.e.}, this corresponds to tolerating $q$ arbitrary errors). Intuitively, perfect completeness error stems from the fact that noise-tolerance is built into our verification procedure (a freshly minted banknote affected by tolerable noise will always be accepted). This verification procedure uses classical oracles to check for membership in spaces that are determined by $q$ and the chosen subspace. As a result, increasing $q$ correlates to increasing the size of the space being checked; intuitively, this improves the success probability of a counterfeiter and is formally reflected in our $|\mathcal{E}_q|^2/\textsf{exp}(n)$ soundness error. A freshly minted banknote in our scheme is a subspace state, as in \cite{AC12}, though we use a relation formalized in \cite{CV22} to show that such a banknote can be prepared by generating conjugate coding states and applying a unitary that permutes the standard basis vectors.

\subsection{Techniques}
\label{section:techniques}
Our approach to noise-tolerance starts with the fact that the banknote state used in \cite{AC12} can be viewed as a codeword for a quantum error correcting code; specifically, a Calderbank-Shor-Steane (CSS) code. With this connection, we ask: is it possible to securely use the corresponding CSS code to detect and tolerate errors? Ignoring security, momentarily, the answer is yes. In the stabilizer formalism \cite{Got97}, error detection involves measuring a set of \emph{stabilizer generators}; the resulting classical string, known as a \emph{syndrome}, can be processed to tell us what errors occurred and on which qubits. This information can be used to correct the noise or, more simply, to accept a state that has been afflicted by correctable noise. When it comes to security, we find that direct application of this error detection procedure, without ``hiding" its properties, can compromise security, as the generators reveal information about the underlying subspace (see \cref{section:direct-error-correction}).

The next and perhaps most obvious approach is to apply a layer of error correction on top of the oracle scheme of \cite{AC12}. With this approach, one would encode the subspace state $\ket{A}$ and then proceed in one of two ways: perform verification in a fault-tolerant manner; or, error correct, decode, and then apply the normal verification procedure. In \cref{section:a-layer-of-error-correction}, we
briefly discuss this approach and outline some of its obstacles, though we do not rule it out as a possibility and instead leave it as one of the open questions listed in \cref{section:future-directions}.

In this paper, the direction we take is to design a scheme that is inherently noise-tolerant (although we focus on tolerating noise, we discuss in \cref{remark:correcting-noise} how our scheme could be easily adapted to also correct the errors). We present two approaches for a noise-tolerant scheme, where one approach can be viewed as a generalization of the other. While the two approaches achieve the same completeness and soundness error, we include both of them here since only one approach seems to have an obvious realization in the plain model with existing techniques, though it is the less general approach (see \cref{section:future-directions} where we discuss this as a direction for future research). Notably, both approaches can be seen as natural extensions of the original classical oracle scheme of \cite{AC12}. The difference between the two approaches, described simply, is that one approach checks for membership in a union of subspaces while the other approach checks membership in a union of subsets. Note that in both approaches, we tolerate up to $q$ arbitrary errors by tolerating up to $q$ Pauli-$X$ errors and up to $q$ Pauli-$Z$ errors, as is done in CSS codes.

Conceptually, the subspace approach closely resembles error detection in the stabilizer formalism. It starts by recognizing that each tolerable syndrome (\emph{i.e.}, tolerable error) corresponds to a subspace. We can then form two unions of disjoint subspaces (one for Pauli-$X$ errors and one for Pauli-$Z$ errors). We then perform verification by using classical oracles to check if the banknote is a member of these two unions.

In the subset approach, we do essentially the same thing except that we check membership in subsets. We recognize that, in the computational basis, a Pauli-$X$ error causes a subspace state to become a superposition of the vectors in a coset (\emph{i.e.}, it becomes a coset state), where each Pauli-$X$ error corresponds to a unique coset. Under this relation, the set of tolerated Pauli-$X$ errors corresponds to a collection of cosets. To then accept a state corrupted by a Pauli-$X$ error, we use a classical oracle to check membership in the subset defined as the union of all cosets from this collection. We tolerate Pauli-$Z$ errors by checking membership in a similarly defined subset. 

We focus primarily on the subset approach, as it is conceptually simpler and the security proof for the subspace approach is almost the same (throughout the security analysis, we remark on any changes that would be necessary for the subspace approach). As one might expect, for a banknote with a fixed size $n$, the more noise we tolerate the easier it will be for a counterfeiter to succeed. This is captured by the $|\mathcal{E}_q|^2/\textsf{exp}(n)$ soundness error, where the quantity $|\mathcal{E}_q|$ (formally given in \cref{eq:number-of-errors}) approaches exponential when $q$ is close to  $n$, and thus the soundness error approaches 1 in this case. On the other hand, if $q=kn$ and $0< k \leq 1/2$, then from \cite{FG06} we see that
\begin{equation}
    |\mathcal{E}_q| \leq 2^{2 H(k) n},
\end{equation}
where $H(\cdot) \in [0,1]$ is the binary Shannon entropy. Thus, if we tolerate only a few errors, then $|\mathcal{E}_q|$ will be sufficiently small to get a ``good" soundness error. This trade-off between security and noise-tolerance is visualized in \cref{fig:soundness-error-trade-off}.

On top of the above construction, we make use of an observation (formalized in \cite{CV22}) on the relation between conjugate coding states and coset states in the generation of our banknotes. Since conjugate coding states are practically simple to prepare, we use this relation with the intent of making banknote generation similarly simple. This relation (\cref{lemma:cosetBB84equivalence}), allows the bank to generate the subspace state banknote by preparing conjugate coding states and then applying a unitary that permutes the standard basis vectors. Intuitively, this does not affect the security of our scheme, as an adversary still only receives a subspace state from the bank.

\subsection{Further background on quantum money}
\label{section:further-related-work}

As noted earlier, Wiesner's scheme is secure in a model where the post-measurement state is not returned. In a model where the verification procedure returns accept/reject \emph{and} the post-measurement state, there is a simple attack on Wiesner's scheme where, by repeatedly submitting banknotes for verification, a counterfeiter can produce a second banknote \cite{Lut10arxiv, Aar09}. More generally, quantum state restoration \cite{FGH+10} can be used to break Wiesner's scheme and schemes like it.

There have been several attempts to obtain public-key quantum money in the plain model, though this has proven to be a highly non-trivial task. Most attempts at instantiating the oracles have subsequently been broken. In the quantum oracle model of \cite{Aar09}, Aaronson proposed an explicit scheme based on random stabilizer states, though this was later broken in \cite{LAF+10}. When the authors of \cite{AC12} introduced the classical oracle model, they also proposed an instantiation that was later broken in \cite{CPFP15, CDF+19, BDG23}. In \cite{Zha19}, Zhandry proposed a new primitive called quantum lightning from which one can obtain public-key quantum money; informally, quantum lightning is like public-key quantum money except that even banknote generation is public, though an adversary is still unable to produce two banknotes with the same serial number. Zhandry also gave a construction of quantum lightning in \cite{Zha19}, though, as noted in a later version of the paper \cite{Zha21}, a weakness in the underlying hardness assumption was detected in \cite{Rob21}. A full attack on the scheme was eventually found in \cite{BDG23}. Also in \cite{Zha19}, Zhandry instantiated the oracles in \cite{AC12} by using indistiguishability obfuscation (iO) \cite{BGI+12, GGH+13, SW14}. The existence of post-quantum iO remains unsolved, as candidate constructions proposed in \cite{WW21, GP21} have been subject to attacks \cite{HJL21}, but results such as \cite{BGMZ18} suggest that it could still exist.

Apart from the previously mentioned schemes that focus on instantiating the oracles, there is a category of proposed public-key quantum money schemes that rely on mostly untested hardness assumptions. A particularly well-known example is a scheme from \cite{FGH+12} that relies on knot theory; while the scheme remains unbroken, it is difficult to analyze and lacks a security proof. Also in this category are the schemes presented in \cite{Kan18arxiv, KSS21arxiv}, though some analysis on \cite{KSS21arxiv} has been done in \cite{BDG23}. 

In \cite{LMZ23, AHY23}, the authors push us towards a better understanding of suitable problems for instantiating public-key quantum money: in \cite{LMZ23}, the authors rule out a natural class of quantum money schemes that rely on lattice problems; and in \cite{AHY23}, the authors rule out a class of money schemes that are constructed from black box access to collision-resistant hash functions (where the verification procedure only makes classical queries to the hash function).

It is also worth mentioning the notion of quantum tokenized signature schemes. The relationship of this primitive to quantum money is characterized in \cite{BDS23} where the authors show that a testable public (private) tokenized signature scheme can give public (private) quantum money. The testable tokenized signature scheme constructed in \cite{BDS23} is based on the oracle scheme in \cite{AC12}. In \cite{CLLZ21}, the authors construct a variant of the \cite{BDS23} scheme (coset states are used instead of subspace states) and show its security in the plain model by using post-quantum iO. The schemes in \cite{BDS23} and \cite{CLLZ21} are not noise-tolerant, and we are unaware of any noise-tolerant public tokenized signature scheme. 

\subsection{Organization}
The paper is organized in the following manner. In \cref{section:preliminaries}, we review relevant material on quantum information, classical oracles, and quantum error correction. In \cref{section:quantum-money}, we review and introduce definitions for quantum money, and we specify our noise-tolerant quantum money scheme. In \cref{section:security}, we present the security analysis of our scheme. In \cref{section:future-directions}, we list a few open questions for future research.

\section{Preliminaries}
\label{section:preliminaries}

We say $f(n) \in \Omega(g(n))$ if there exist constants $c, n_0$ such that
\begin{equation*}
    c g(n) \leq f(n), \quad \forall n > n_0.
\end{equation*}

Given two bit strings $x,y \in \{0,1\}^n$, the notation $x\Vert y$ denotes the concatenation of the two bit strings, \emph{i.e.}, $x\Vert y \coloneqq (x, y) \in \{0,1\}^{2n}$.

For a subspace $A$, we denote its dual (\emph{i.e.}, orthogonal complement) by $A^\perp$.

\subsection{Quantum information}
Let $x \in \{0,1\}^n$ and let $\mathsf{H}$ denote the Hadamard matrix. The $n$-qubit Hadamard transfom is given by
\begin{equation*}
    \mathsf{H}^{\otimes n} \ket{x} = \frac{1}{\sqrt{2^n}}\sum_{z \in \{0,1\}^n} (-1)^{x \cdot z} \ket{z}
\end{equation*}

An arbitrary finite-dimensional Hilbert space is denoted as $\mathcal{H}$. A density operator $\rho$ on $\mathcal{H}$ is a positive semidefinite operator on $\mathcal{H}$ with trace one, $\tr(\rho)=1$.

The single-qubit Pauli-$X$ and -$Z$ operators are, respectively, given by
\begin{equation*}
    X =
    \begin{pmatrix}
    0 & 1\\
    1 & 0
    \end{pmatrix}
    \quad \text{ and } \quad
    Z =
    \begin{pmatrix}
    1 & 0\\
    0 & -1
    \end{pmatrix}.
\end{equation*}

For a linear subspace $A$ of $\mathbb{F}^n_2$ and $t,t' \in \{0,1\}^n$, a subspace state $\ket{A}$ and a coset state $\ket{A_{t,t'}}$ are defined as
\begin{equation}
    \ket{A} = \frac{1}{\sqrt{|A|}}\sum_{v\in A} \ket{v} 
    \quad \text{and} \quad 
    \ket{A_{t,t'}} = X^t Z^{t'} \ket{A} = \frac{1}{\sqrt{|A|}} \sum_{v \in A} (-1)^{v \cdot t'} \ket{v+t},
\end{equation}
where $X^t=X^{t_1} \otimes \cdots \otimes X^{t_n}$, $Z^{t'} = Z^{t'_1} \otimes \cdots \otimes Z^{t'_n}$. The subspace state is a superposition over all the elements in the subspace $A$. In the computational basis, the coset state is a superposition over all elements in the coset $A+t$, while in the Hadamard basis, it is a superposition over all elements in the coset $A^\perp+t'$. 

For $x, \theta \in \{0,1\}^n$, an $n$-tensor product of conjugate coding states is
\begin{equation}
    \ket{x}_{\theta} = \ket{x_1}_{\theta_1} \cdots \ket{x_n}_{\theta_n},
\end{equation}
where $\ket{x_i}_{\theta_i} = \mathsf{H}^{\theta_i}\ket{x_i}$ and $\mathsf{H}$ is the Hadamard gate.

The following lemma formally establishes the relationship between coset states and conjugate coding states. For this lemma, let $\mathcal{B}=\{u_1, \ldots, u_n\}$ be a basis of $\mathbb{F}^n_2$ and let $U_\mathcal{B}$ be a unitary of $(\mathbb{C}^2)^{\otimes n}$ which permutes the standard basis vectors:
\begin{equation}
\label{permuteBasis}
    \forall x \in \{0,1\}^n, \quad \quad U_\mathcal{B}\ket{x} = \Ket{\sum_i x_i u_i}.
\end{equation}

\begin{lem}[\cite{CV22}]
\label{lemma:cosetBB84equivalence}
    Let $\{u_1, \ldots, u_n\}$ be a basis of $\mathbb{F}^n_2$ and $U_\mathcal{B}$ be defined as in \cref{permuteBasis}. Let $\{u^1, \ldots, u^n\}$ be its dual basis, i.e., $u^i\cdot u_j = \delta_{i,j}$. Let $T \subseteq \{1,\ldots,n\}$ be such that $|T|=n/2$ and $A=\text{Span}\{u_i : i\in \overline{T}\}$. Let $\theta \in \{0,1\}^n$ be the indicator of $\overline{T}$ (i.e., $\theta_i=1$ if and only if $i \notin T$), and $x \in \{0,1\}^n$. Let $t=\sum_{i\in T} x_iu_i$ and $t'=\sum_{i \in \overline{T}}x_iu^i$. Then
    \begin{equation}
        \ket{A_{t,t'}} = U_\mathcal{B} \ket{x}_\theta.
    \end{equation}
\end{lem}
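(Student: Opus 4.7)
The plan is to unfold the definition of $\ket{x}_\theta$ directly, push through $U_\mathcal{B}$, and then match both the computational-basis support and the phase against the explicit form of $\ket{A_{t,t'}}$ given in the preliminaries.

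First I would split the tensor product $\ket{x}_\theta = \bigotimes_{i=1}^n \ket{x_i}_{\theta_i}$ according to whether $i\in T$ or $i\in \overline T$. Because $\theta_i = 0$ for $i\in T$ and $\theta_i = 1$ for $i\in \overline T$, the first group of qubits is just $\ket{x_i}$ while the second group is $\mathsf{H}\ket{x_i} = \frac{1}{\sqrt 2}\sum_{y_i\in\{0,1\}} (-1)^{x_i y_i}\ket{y_i}$. Writing $y\in\{0,1\}^n$ with $y_i = x_i$ forced for $i\in T$ and $y_i$ free for $i\in\overline T$, this gives
\begin{equation*}
    \ket{x}_\theta
    = \frac{1}{\sqrt{2^{n/2}}}\sum_{\substack{y\in\{0,1\}^n \\ y_i=x_i,\, i\in T}} (-1)^{\sum_{i\in\overline T} x_i y_i}\,\ket{y}.
\end{equation*}

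Next I would apply $U_\mathcal{B}$ using the defining relation $U_\mathcal{B}\ket{y} = \ket{\sum_i y_i u_i}$. Splitting the exponent $\sum_i y_i u_i = \sum_{i\in T} x_i u_i + \sum_{i\in \overline T} y_i u_i = t + \sum_{i\in\overline T} y_i u_i$, and noting that as the free coordinates $(y_i)_{i\in\overline T}$ range over $\{0,1\}^{n/2}$ the vector $\sum_{i\in\overline T} y_i u_i$ ranges bijectively over $A$, I can re-index the sum by $v\in A$, letting $y_i(v)$ denote the $u_i$-coefficient of $v$ for $i\in\overline T$. This yields
\begin{equation*}
    U_\mathcal{B}\ket{x}_\theta
    = \frac{1}{\sqrt{|A|}}\sum_{v\in A} (-1)^{\sum_{i\in\overline T} x_i\, y_i(v)}\,\ket{v+t},
\end{equation*}
where we used $|A| = 2^{|\overline T|} = 2^{n/2}$.

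The remaining step is to rewrite the phase as $(-1)^{v\cdot t'}$. Using the dual-basis relation $u^i\cdot u_j = \delta_{i,j}$ together with $t' = \sum_{i\in\overline T} x_i u^i$ and $v = \sum_{j\in \overline T} y_j(v)\, u_j$, I would compute
\begin{equation*}
    v\cdot t' = \sum_{i,j\in\overline T} x_i\, y_j(v)\,(u^i\cdot u_j) = \sum_{i\in \overline T} x_i\, y_i(v),
\end{equation*}
so the phase in the expression above is exactly $(-1)^{v\cdot t'}$. Comparing with the definition $\ket{A_{t,t'}} = \frac{1}{\sqrt{|A|}}\sum_{v\in A}(-1)^{v\cdot t'}\ket{v+t}$ gives the claimed identity.

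The proof is essentially bookkeeping, so I do not expect a genuine obstacle; the only subtle point is making sure the duality computation $v\cdot t' = \sum_{i\in\overline T} x_i y_i(v)$ is carried out with the correct index set. In particular, it is important that both $t'$ and the variable coordinates of $v$ live over $\overline T$, so that the Kronecker delta $u^i\cdot u_j = \delta_{i,j}$ collapses the double sum to a single sum over $\overline T$ without any cross terms involving $T$.
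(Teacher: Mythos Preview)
Your argument is correct: expanding $\ket{x}_\theta$ over the free coordinates in $\overline T$, pushing through $U_{\mathcal B}$, and then using the dual-basis identity $u^i\cdot u_j=\delta_{i,j}$ to recognize the phase as $(-1)^{v\cdot t'}$ is exactly the right bookkeeping, and every step checks out. Note that the paper itself does not supply a proof of this lemma---it is quoted from \cite{CV22}---so there is no in-paper argument to compare against; your direct computation is the standard way to verify the identity.
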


The fidelity of quantum states $\rho$ and $\sigma$ is defined as
\begin{equation}
\label{equation:fidelity}
    F(\rho,\sigma) 
    \coloneqq 
    \tr{\sqrt{\rho^{1/2} \sigma \rho^{1/2}}}.
\end{equation}
When we take the fidelity of $\rho$ and a pure state $\ket{\psi}$, \cref{equation:fidelity} reduces to
\begin{equation*}
    F(\rho,\ket{\psi}) = \sqrt{\braket{\psi | \rho | \psi}}.
\end{equation*}
Furthermore, by Uhlmann's theorem,
\begin{equation*}
    F(\rho,\sigma) = \max_{\ket{\psi}, \ket{\phi}} \left|\braket{\psi | \phi}\right|,
\end{equation*}
where the maximization is over all purifications $\ket{\psi}, \ket{\phi}$ of $\rho, \sigma$, respectively. The fidelity of $\rho$ with a subspace $S$ is defined as
\begin{equation}
\label{eq:fidelity-with-subspace}
    F(\rho, S) = \max_{\ket{\psi}, \ket{\phi}} \left|\braket{\psi | \phi}\right|
\end{equation}
where the maximization is over all purifications $\ket{\psi}$ of $\rho$ and all unit vectors $\ket{\phi} \in S$.

\begin{lem}[``Triangle inequality" for fidelity \cite{AC12}]
\label{lemma:triangle-inequality-fidelity}
    Suppose $\braket{\psi | \rho | \psi} \geq 1-\epsilon$ and $\braket{\phi | \sigma | \phi} \geq 1-\epsilon$. Then $F(\rho, \sigma) \leq |\braket{\psi | \phi}| + 2\epsilon^{1/4}$.
\end{lem}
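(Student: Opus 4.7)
The plan is to reduce everything to the trace distance and exploit its triangle inequality, using the Fuchs--van de Graaf inequalities as bridges between fidelity and trace distance.

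First I would convert the two fidelity hypotheses into trace-distance bounds. From $\braket{\psi|\rho|\psi}\geq 1-\epsilon$ one obtains $F(\rho,\ket{\psi})\geq\sqrt{1-\epsilon}$, and so the Fuchs--van de Graaf inequality $D\leq\sqrt{1-F^{2}}$ yields $D(\rho,\ket{\psi})\leq\sqrt{\epsilon}$, and analogously $D(\sigma,\ket{\phi})\leq\sqrt{\epsilon}$. Routing the ordinary trace-distance triangle inequality through $\ket{\psi}$ and $\ket{\phi}$ then gives
\begin{equation*}
D(\rho,\sigma)\ \geq\ D(\ket{\psi},\ket{\phi})-2\sqrt{\epsilon}.
\end{equation*}

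Next I would convert this lower bound on $D(\rho,\sigma)$ back into an upper bound on $F(\rho,\sigma)$ via the converse Fuchs--van de Graaf inequality $F(\rho,\sigma)\leq\sqrt{1-D(\rho,\sigma)^{2}}$. Substituting, using the pure-state identity $D(\ket{\psi},\ket{\phi})^{2}=1-|\braket{\psi|\phi}|^{2}$, and discarding the linear cross term in the expansion of $(D(\ket{\psi},\ket{\phi})-2\sqrt{\epsilon})^{2}$ by means of $D(\ket{\psi},\ket{\phi})\leq 1$, I expect to land on
\begin{equation*}
F(\rho,\sigma)^{2}\ \leq\ |\braket{\psi|\phi}|^{2}+4\sqrt{\epsilon}.
\end{equation*}
A single application of $\sqrt{a+b}\leq\sqrt{a}+\sqrt{b}$ then delivers the advertised bound $|\braket{\psi|\phi}|+2\epsilon^{1/4}$.

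The only real subtlety is the edge case $D(\ket{\psi},\ket{\phi})<2\sqrt{\epsilon}$, where the lower bound on $D(\rho,\sigma)$ becomes vacuous; there $|\braket{\psi|\phi}|\geq\sqrt{1-4\epsilon}$ and $F(\rho,\sigma)\leq 1$ give the inequality directly (and once $\epsilon\geq 1/16$ it is trivial, since $2\epsilon^{1/4}\geq 1$). I do not foresee any deep obstacle---the whole argument is a short chain of standard inequalities---so the main nuisance is simply tracking constants and handling the trivial regime separately. If one prefers to avoid Fuchs--van de Graaf entirely, Uhlmann's theorem lets us instead pick purifications $\ket{\Psi},\ket{\Phi}$ of $\rho,\sigma$ in a common extended space with $F(\rho,\sigma)=|\braket{\Psi|\Phi}|$, split each purification against the subspaces $\ket{\psi}\otimes\mathcal{H}_{R}$ and $\ket{\phi}\otimes\mathcal{H}_{R}$ (so that the components orthogonal to these subspaces have squared norm at most $\epsilon$ by the hypotheses), and bound the four resulting cross terms to obtain the slightly tighter estimate $|\braket{\psi|\phi}|+2\sqrt{\epsilon}+\epsilon$, which also implies the statement.
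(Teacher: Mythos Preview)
The paper does not prove this lemma; it is merely quoted from \cite{AC12} and stated without argument. There is therefore nothing in the present paper to compare your proposal against.

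That said, your primary route is sound. The chain
\[
D(\rho,\ket{\psi})\leq\sqrt{\epsilon},\qquad D(\sigma,\ket{\phi})\leq\sqrt{\epsilon},\qquad D(\rho,\sigma)\geq D(\ket{\psi},\ket{\phi})-2\sqrt{\epsilon},
\]
followed by $F(\rho,\sigma)^2\leq 1-D(\rho,\sigma)^2$ and $D(\ket{\psi},\ket{\phi})^2=1-|\braket{\psi|\phi}|^2$, indeed gives $F(\rho,\sigma)^2\leq|\braket{\psi|\phi}|^2+4\sqrt{\epsilon}$ in the regime $D(\ket{\psi},\ket{\phi})\geq 2\sqrt{\epsilon}$, and your edge-case analysis for the complementary regime also checks out (for $\epsilon<1/16$ one verifies $\sqrt{1-4\epsilon}+2\epsilon^{1/4}\geq 1$ by squaring, and for $\epsilon\geq 1/16$ the bound is vacuous). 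The only point worth flagging is that the inequality $F\leq\sqrt{1-D^2}$ you invoke is just the rearrangement of the upper Fuchs--van de Graaf bound $D\leq\sqrt{1-F^2}$, which does hold for arbitrary mixed states; readers sometimes misremember which direction is which, so it may be worth citing it explicitly.

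Your alternative Uhlmann-based argument---purify, decompose each purification into its component along $\ket{\psi}\otimes\mathcal{H}_R$ (respectively $\ket{\phi}\otimes\mathcal{H}_R$) and its orthogonal remainder of norm at most $\sqrt{\epsilon}$, then bound the four cross terms---is the more direct path and yields the sharper $|\braket{\psi|\phi}|+2\sqrt{\epsilon}+\epsilon$; this is in fact closer in spirit to how \cite{AC12} argues. Either route suffices for the statement as recorded here.
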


\subsection{Classical oracles}
\label{section:classical-oracles}
Let $f: \{0,1\}^* \rightarrow \{0,1\}$ be some Boolean function. Then a classical oracle $U$ is a unitary transformation of the form
\begin{align*}
    U \ket{x} \bigg(\frac{\ket{0} - \ket{1}}{\sqrt{2}}\bigg) 
    &= 
    \ket{x} \bigg(\frac{\ket{0 \oplus f(x)} - \ket{1 \oplus f(x)}}{\sqrt{2}}\bigg)\\
    \Leftrightarrow
    U \ket{x}\ket{-} 
    &= 
    (-1)^{f(x)}\ket{x}\ket{-}.
\end{align*}
Since the ancillary qubit is left untouched by this operation, we refrain from writing it and simply say $U \ket{x} = (-1)^{f(x)}\ket{x}$. Frequently, we use a classical oracle $U_A$ as a means for checking membership in some subset $A$ in the following sense
\begin{equation*}
    U_A \ket{x} =
    \begin{cases}
        -\ket{x}, \quad \text{if } x \in A\\
        \ket{x}, \quad \text{otherwise}.
    \end{cases}
\end{equation*}
As noted in \cite{AC12}, we can implement $U_A$ as a projector $\mathbb{P}_A$. To do this, we initialize a control qubit $\ket{+}$ to control the application of $U_A$ to $\ket{x}$. After this, we measure the control qubit in the Hadamard basis. If the outcome is $\ket{-}$, then $x \in A$ and the state is accepted, otherwise $x \notin A$ and the state is rejected. Indeed,
\begin{equation*}
    CU_A(\ket{x}, \ket{+}) = \frac{\ket{x}\ket{0} + U_A\ket{x}\ket{1}}{\sqrt{2}}=
    \begin{cases}
        \ket{x}\ket{-} \quad \text{if } x \in A\\
        \ket{x}\ket{+} \quad \text{if } x \notin A.
    \end{cases}
\end{equation*}

Note that, unless otherwise stated, a classical oracle can be queried in a quantum superposition.


\subsection{Quantum error correction and CSS codes}
\label{section:QECC}

In this section, we use \cite{Got97, NC10, Fuj15, Rof19} to briefly review the stabilizer formalism of quantum error correcting codes. We then follow \cite{NC10} to review a well-known category of quantum error correcting codes known as Calderbank-Shor-Steane (CSS) codes.

\subsubsection{Discretization of quantum errors}
\label{section:discretization-of-quantum-errors}

We start by observing that an arbitrary single-qubit error can be written as a linear combination of the single-qubit Pauli gates,
\begin{equation}
    E = \alpha_I I + \alpha_X X + \alpha_Z Z + \alpha_Y Y.
\end{equation}
By using the fact that $Y=iXZ$, we can rewrite this as
\begin{equation}
\label{eq:discretization-quantum-error}
    E = \alpha_I I + \alpha_X X + \alpha_Z Z + \alpha_{XZ} XZ,
\end{equation}
where the $i$ factor has been absorbed in the $\alpha_{XZ}$ coefficient. Thus, an arbitrary error can be written as a sum from the set $\{I, X, Z, XZ\}$. This is known as the \emph{discretization} (or, \emph{digitization}) of quantum errors. Informally, the quantum error correction procedure involves measurements which collapse the superposition in \cref{eq:discretization-quantum-error} to either an $X$, $Z$, or $XZ$ error; thus, correcting $X$ and $Z$ errors alone is sufficient. This will be made clearer in \cref{section:the-stabilizer-formalism} below, where we discuss the error correction procedure in detail.

Note that an $X$ error is known as a \emph{bit-flip} error, and a $Z$ error is known as a \emph{phase-flip} error.

Before describing arbitrary errors on $n$-qubit states, we introduce some useful notation. Let $\mathcal{E}_X$ ($\mathcal{E}_Z$) denote the set of error vectors that corresponds to bit-flip (phase-flip) errors affecting up to $q$ out of $n$ qubits:
\begin{equation}
\label{eq:sets-of-X-and-Z-errors}
    \mathcal{E}_X = \{e \in \{0,1\}^n : \text{wt}(e) \leq q\}
    \quad \text{ and } \quad
    \mathcal{E}_Z = \{e' \in \{0,1\}^n : \text{wt}(e') \leq q\},
\end{equation}
where $\text{wt}(\cdot)$ is the Hamming weight. So, for example, a bit-flip error affecting up to $q$ out of $n$ qubits is of the form $X^e$ where $e \in \mathcal{E}_X$.
Let $\mathcal{E}_q$ denote the following set, which we refer to as the set of all correctable error vectors,
\begin{equation}
\label{eq:set-of-error-vectors}
    \mathcal{E}_q \coloneqq \{(e, e') : e,e' \in \{0,1\}^n \text{ and } \text{wt}(e) \leq q \text{ and } \text{wt}(e') \leq q\}.
\end{equation}
We can easily see that,
\begin{equation}
\label{eq:number-of-errors}
    |\mathcal{E}_q|
    =
    |\mathcal{E}_{X}| \cdot |\mathcal{E}_Z|
    =
    \left(
    \sum^q_{j=0} \binom{n}{j}
    \right)^2.
\end{equation}
When it is clear from context, we will sometimes drop the $q$ subscript on $\mathcal{E}_q$.

Now when we say that an $n$-qubit state has been affected by $q$ arbitrary errors, we mean that $q$ of the qubits have been affected by an error of the form in \cref{eq:discretization-quantum-error}. Then the operator describing $q$ arbitrary errors on an $n$-qubit state $\ket{\psi}$ is a superposition of Pauli products in the following sense,
\begin{equation}
    \sum_{(e,e') \in A} \alpha_{e,e'} X^e Z^{e'} \ket{\psi},
\end{equation}
where $A$ is some subset of $\mathcal{E}_q$ which contains at least one pair $(e,e')$ that corresponds to a weight $q$ error.

\subsubsection{The stabilizer formalism}
\label{section:the-stabilizer-formalism}
The stabilizer formalism \cite{Got97} can be used to describe a wide and important class of quantum error correcting codes. 

To start, we recall that the $n$-Pauli group $\mathcal{P}_n$ is defined as
\begin{equation}
    \mathcal{P}_n \coloneqq \{\pm 1, \pm i \} \times \{I, X, Y, Z\}^{\otimes n }.
\end{equation}
An $n$-qubit stabilizer $\mathcal{S}$ is an abelian (\emph{i.e.}, commutative) subgroup of $\mathcal{P}_n$,
\begin{equation}
    \mathcal{S} \coloneqq \{S_i \in \mathcal{P}_n \, | -I \notin \mathcal{S} \text{ and } \forall S_i,S_j \in \mathcal{S}, [S_i,S_j]=0 \}.
\end{equation}

Given a stabilizer, the codespace $C_\mathcal{S}$ is the vector space that is stabilized by $\mathcal{S}$ in the sense that it consists of all $n$-qubit states which are fixed by every element of $\mathcal{S}$. Equivalently, $C_\mathcal{S}$ is the common +1 eigenspace of all elements of $\mathcal{S}$,
\begin{equation}
    C_{\mathcal{S}} = \{\ket{\psi} : S_i \ket{\psi} = \ket{\psi}, \,\, \forall S_i \in \mathcal{S}_g\}.
\end{equation}
When the context is clear, we sometimes drop the $\mathcal{S}$ subscript on $C_\mathcal{S}$. Note that the reason we require $-I \notin \mathcal{S}$ and that the elements of $\mathcal{S}$ commute is to ensure that the vector space stabilized by $\mathcal{S}$ is non-trivial.

A stabilizer is often described by its generators $g_i$, which are elements of its maximum independent set $\mathcal{S}_g$. Independence is understood in the sense that no element of $\mathcal{S}_g$ can be expressed as a product of other elements in $\mathcal{S}_g$. The generators may be denoted as $\mathcal{S}_g = \langle g_1, \ldots, g_m \rangle$. Alternatively, the generators can be described by a \emph{check matrix}. This is an $m \times 2n$ matrix where the $i$th row is constructed according to the following rules:
\begin{itemize}
    \item If $g_i$ contains an $I$ on the $j$th qubit, then the $j$th and $n+j$th column elements are 0.
    \item If $g_i$ contains an $X$ on the $j$th qubit, then the $j$th column element is 1 and the $n+j$th column element is 0.
    \item If $g_i$ contains a $Z$ on the $j$th qubit, then the $j$th column element is 0 and the $n+j$th column element is 1.
    \item If $g_i$ contains a $Y$ on the $j$th qubit, then the $j$th and $n+j$th column elements are 1.
\end{itemize}

Since elements of a stabilizer have $\pm 1$ eigenvalues, each generator divides the $2^n$-dimensional Hilbert space into two orthogonal subspaces (the $+1$ eigenspace and the $-1$ eigenspace). Intuitively, we then expect that with $n-k$ generators, the common $+1$ eigenspace (the codespace $C_\mathcal{S}$) has dimension $2^k = 2^{n-(n-k)}$, which is indeed the case.

\begin{prop}
\label{prop:stabilizer-codespace-dimension}
    Suppose a stabilizer $\mathcal{S}$ has $n-k$ generators, so that \newline $\mathcal{S}_g =~\langle g_1, \ldots, g_{n-k} \rangle$. Then the vector space $C_\mathcal{S}$ stabilized by $\mathcal{S}$ is $2^{k}$-dimensional.
\end{prop}

Given a stabilizer $\mathcal{S}$, we may have a set of operators that do not belong to $\mathcal{S}$ and yet commute with every element of $\mathcal{S}$. These operators, known as \emph{logical operators}, can be interpreted as a means of computing on encoded states, as they move elements of the codespace around without moving them ``out" of $C_\mathcal{S}$. Formally, the logical operators are elements of $N(\mathcal{S}) - \mathcal{S}$, where $N(\mathcal{S})$ is the \emph{normalizer},
\begin{equation}
    N(\mathcal{S}) \coloneqq \{L \in \mathcal{P}_n : L S_i L^\dag \in \mathcal{S} \,\,\,\, \forall S_i \in \mathcal{S} \}.
\end{equation}
With $n-k$ generators, there are $k$ pairs of logical operators (or rather, $2k$ logical operators). 

We now describe how to perform quantum error correction using a stabilizer $\mathcal{S}$. Since the common starting point for most quantum computations is the $\ket{0}$ state, it is standard to first prepare the codeword $\ket{0}_L \equiv \ket{0^{\otimes k}}_L \in C_\mathcal{S}$ by projecting $\ket{0} \equiv \ket{0^{\otimes n}}$ onto the $+1$ eigenspace of all generators (see \cite{NC10, Rof19} for details on how to do this). Note that the other codewords in $C_\mathcal{S}$ can be prepared by applying the logical operators to $\ket{0}_L$.

Now suppose that an error $E$ has occurred on our encoded state $\ket{\psi}_L$, leaving us with the state $E\ket{\psi}_L$. The next step is to measure each stabilizer generator $g_i$, $i=1,\ldots,n-k$ through the circuit in \cref{fig:measuring-generator}. Before measurement, this circuit has the effect of mapping the overall state $E\ket{\psi}_L\ket{0}_{A_i}$ in the following way
\begin{equation}
\label{eq:generalStabilizerCode}
    E\ket{\psi}_L\ket{0}_{A_i} \longrightarrow \frac{1}{2}(I^{\otimes n} + g_i)E\ket{\psi}_L\ket{0}_{A_i} + \frac{1}{2}(I^{\otimes n} - g_i)E\ket{\psi}_L\ket{1}_{A_i}.
\end{equation}
\begin{figure}
    \centering
    \begin{tikzpicture}
        \node[inputOutput] at (0,0) (bla1) {\small $E\ket{\psi}_L$};
        \node[operator]    at (2.5,0) (op1) {\normalsize $g_i$};
        \node[inputOutput] at (6,0) (bla2) {};
    
        \node[inputOutput] at (0,-1) (bla3) {\small $\ket{0}_{A_i}$};
        \node[operator]    at (1.5,-1) (op2) {$H$};
        \node[circle, fill=black, inner sep=0pt, minimum size=4pt] at (2.5,-1) {};
        \node[operator]    at (3.5,-1) (op3) {$H$};
        \node[meter]       at (5,-1) (meter)  {};
        \node[inputOutput] at (6,-1) (bla4) {};
        \draw[thick] (bla1) -- (op1);
        \draw[thick] (op1) -- (bla2);
    
        \draw[thick] (bla3) -- (op2);
        \draw[thick] (op2) -- (op3);
        \draw[thick] (op3) -- (meter);
        \draw[thick, double] (meter) -- (bla4);
        
        \draw[thick] (op1) -- (2.5,-1);
    \end{tikzpicture}
    \caption{\label{fig:measuring-generator} Circuit to measure a generator $g_i$}
\end{figure}
We say that a stabilizer generator will detect the error $E$ if it anti-commutes with $E$, in which case a measurement of the ancilla qubit yields 1; if they commute, the measurement will yield 0. We can equivalently interpret this as: the generator $g_i$ detects the error $E$ if $E \ket{\psi}_L$ is in the $-1$ eigenspace of $g_i$ (the error has moved the state ``out" of the codespace), and the error will not be detected if $E \ket{\psi}_L$ is in the +1 eigenspace of $g_i$. From this perspective, logical operators can be viewed as undetectable errors.

After each of the $n-k$ ancilla qubits $\ket{0}_{A_i}$ have been measured, the results form an $(n-k)$-bit string called the \emph{syndrome}. At this point the syndrome is processed (or, decoded) to determine an appropriate recovery operation $R$ so that $RE\ket{\psi}_L = \ket{\psi}_L$.

At this stage, we can elaborate on a point we raised earlier: the error correction procedure collapses an arbitrary error to a bit-flip or phase-flip error (or a combination of the two), and so it suffices to correct $X$ and $Z$ errors. For simplicity, suppose we have an error of the form $E=\alpha_I I + \alpha_X X + \alpha_Z Z + \alpha_{XZ} XZ$. Further suppose that $\ket{\psi}_L, X\ket{\psi}_L$ are in the $+1$ eigenspace of $g_i$ while $Z\ket{\psi}_L, XZ \ket{\psi}_L$ are in its $-1$ eigenspace. Then applying the circuit in \cref{fig:measuring-generator}, we get the following overall state immediately before measurement,
\begin{equation}
    (\alpha_I I^{\otimes n} + \alpha_X X) \ket{\psi}_L\ket{0}_{A_i} + (\alpha_Z Z + \alpha_{XZ}XZ) \ket{\psi}_L\ket{1}_{A_i}.
\end{equation}
Then measuring the ancilla qubit has the effect of collapsing the original superposition $E$. By repeating this procedure with more generators, we end up with $\ket{\psi}_L, X\ket{\psi}_L$, $Z\ket{\psi}_L,$ or $XZ\ket{\psi}_L$, and thus the procedure has effectively collapsed the arbitrary error $E$ into an $X$, $Z$ or $XZ$ error, where the syndrome can be thought of as indicating which of these errors we have collapsed to. This also applies to the more general case of $q$ arbitrary errors, where the overall error operator is a superposition of Pauli products of the form $X^e Z^{e'}$ where $(e,e') \in \mathcal{E}_q$.

Informally, the {\em distance} $d$ of a code is the minimum number of errors required to transform one codeword into another. For the formal definition, recall that $N(\mathcal{S})-\mathcal{S}$ is the set of logical operators, and that the weight of an error $\textsf{wt}(E)$ is the number qubits being acted upon by a Pauli operator (excluding the identity). Then the distance is defined as,
\begin{equation}
    d = \min_{E \in N(\mathcal{S})-\mathcal{S}} \textsf{wt}(E).
\end{equation}
The distance of a code plays an important role in quantifying how many errors it can correct. If $d \geq 2q+1$, then the code is capable of correcting $q$ arbitrary errors.
 
Quantum error correcting codes are often described in the format $[[n,k,d]]$, where $n$ is the number of qubits, $k$ is the number logical qubits, and $d$ is the code's distance.

\subsubsection{Calderbank-Shor-Steane (CSS) codes}

We now review a well-known category of quantum error correcting codes known as Calderbank-Shor-Steane (CSS) codes. We start by describing their construction from classical linear codes; after this, we examine CSS codes within the stabilizer formalism.

A classical linear code $C$ that encodes $k$ bits into an $n$-bit codespace is specified by an $n \times k$ matrix $G$ called the \emph{generator matrix}. The entries of $G$ are elements of $\mathbb{Z}_2=\{0,1\}$ and its columns are linearly independent and span the codespace. So, for a $k$-bit string $x$, represented as a column vector, we get the corresponding codeword by computing $Gx$.

Equivalently, a classical code can be defined as the kernel of an $(n-k) \times n$ matrix $H$ whose entries are elements of $\mathbb{Z}_2$. That is, the codespace consists of all $x \in \{0,1\}^n$ such that $Hx=0$. The matrix $H$ is known as a \emph{parity check} matrix.

The \emph{distance} of a code $C$ is defined as
\begin{equation*}
    d(C) \coloneqq \min_{x,y \in C, x \neq y} d(x,y).
\end{equation*}
where $d(x,y)$ is the Hamming distance. Since $d(x,y)=\text{wt}(x+y)$, where $\text{wt}(\cdot)$ is the Hamming weight, we equivalently have
\begin{equation}
\label{eq:distance-of-code}
    d(C) = \min_{x \in C, x\neq 0} \text{wt}(x).
\end{equation}
If a code has distance $d \geq 2q+1$ for some integer $q$, then the code is able to correct up to $q$ errors (for a noisy codeword $y'$, there is a unique codeword $y$ that is within Hamming distance $q$ of $y'$). 

A classical linear code is denoted as $[n, k, d]$. Observe that this resembles the notation for quantum codes, except that the latter case uses double square brackets (to distinguish from the classical case).

Given an $[n,k]$ classical linear code $C$ with generator matrix $G$ and parity check matrix $H$, one can define the dual of $C$, denoted as $C^\perp$, as the code with generator matrix $H^T$ and parity check matrix $G^T$.

Now suppose $C_1$ and $C_2$ are $[n,k_1]$ and $[n,k_2]$ classical linear codes with the property that $C_2 \subseteq C_1$, and $C_1$ and $C^\perp_2$ both correct $q$ errors. The CSS code, constructed from $C_1$ and $C_2$, is then an $[[n,k_1-k_2]]$ quantum code that can correct up to $q$ bit-flip and $q$ phase-flip errors (\emph{i.e.}, up to $q$ arbitrary errors), and thus it corrects the set $\mathcal{E}_q$, defined in \cref{eq:set-of-error-vectors}.

The CSS code is an example of a stabilizer code where its check matrix is of the form
\begin{equation}
\label{equation:check-matrix}
\begin{bmatrix}[c|c]
    H_{C^\perp_2} & 0\\
    0 & H_{C_1}
\end{bmatrix}.
\end{equation}
So, from the parity check matrices $H_{C_1}$ and $H_{C_2^\perp}$ we can obtain the stabilizer generators, and vice versa. The code $C_1$ then contributes $n-k_1$ generators, and the code $C^\perp_2$  contributes $k_2$ generators, for a total of $n-k_1+k_2$ generators. By \cref{prop:stabilizer-codespace-dimension}, the codespace has dimension $2^k=2^{n-(n-k_1+k_2)}=2^{k_1-k_2}$ which is exactly what we saw earlier.

Codewords in the CSS code are coset states of the form
\begin{equation}
\label{CSScodeword}
    \ket{w + C_2} \equiv \frac{1}{\sqrt{|C_2|}}\sum_{v \in C_2} \ket{w+v},
\end{equation}
where $w \in C_1$. Note that since $C_2 \subseteq C_1$, $w+v \in C_1$. The codespace is then the vector space spanned by these coset states. For distinct $w_1, w_2 \in C_1$, the coset states $\ket{w_1 + C_2}, \ket{w_2 + C_2}$ are orthonormal. The number of such cosets is given by $|{C_1}| / |{C_2}| = 2^{k_1 - k_2}$ which is precisely the dimension of the codespace.

Intuitively, error correction for a CSS code works by using the error correcting properties of its classical linear codes; $C_1$ is used to correct bit-flip errors and $C^\perp_2$ is used to correct phase-flip errors. To understand this better, we start by examining a noisy codeword; bit-flip errors are described be an $n$-bit vector $e_1$ with 1s where bit-flips have occurred and 0s otherwise. That is, $e_1 = (e_{1,1}, \ldots, e_{1,n})$ corresponds to the error $X^{e_1} = X^{e_{1,1}} \otimes \cdots \otimes X^{e_{1,n}}$. Similarly, phase-flip errors are described by a vector $e_2$. Then the noisy state is
\begin{equation}
    X^{e_1} Z^{e_2}\ket{w+C_2} = \frac{1}{\sqrt{|C_2|}}\sum_{v \in C_2} (-1)^{(w+v)\cdot e_2} \ket{w+v+e_1}.
\end{equation}

First, bit-flips are dealt with by applying the parity check matrix $H_{C_1}$ for the code $C_1$ and storing the result in ancilla qubits; measuring the ancilla qubits yields an error syndrome $H_{C_1} e_1$ that can be classically processed to determine an appropriate correction operation. After correction, the state is
\begin{equation}
    \frac{1}{\sqrt{|C_2|}}\sum_{v \in C_2} (-1)^{(w+v)\cdot e_2} \ket{w+v},
\end{equation}
By moving to the Hadamard basis, phase-flip errors can be handled in the same way. Indeed, by applying the Hadamard transform, it can be shown that the resulting state takes the form
\begin{equation}
    \frac{1}{\sqrt{|C^\perp_2|}}\sum_{z \in C^\perp_2} (-1)^{w \cdot z}\ket{z+e_2}.
\end{equation}
As in the case of bit-flip errors, we now apply the parity check matrix $H_{C_2^\perp}$ for $C^\perp_2$ and obtain the error syndrome $H_{C_2^\perp} e_2$. After correction, another Hadamard transform is applied to return the state to the original codeword. 

For an $[[n,k]]$ CSS code that can correct up to $q$ arbitrary errors, the relationship between the parameters $n, k,$ and $q$ can be formalized with the Gilbert-Varshamov bound for CSS codes. This bound says that an $[[n,k]]$ CSS code which corrects up to $q$ arbitrary errors exists for some $k$ such that
\begin{equation}
\label{eq:GV-bound}
    \frac{k}{n}
    \geq
    1-2H\left(\frac{2q}{n}\right),
\end{equation}
where $H(x) = -x\log(x)-(1-x)\log(1-x)$ is the binary Shannon entropy.

\section{Noise-tolerance for public-key quantum money}
\label{section:quantum-money}

In this section, we formally define what we mean by a noise-tolerant public-key quantum money scheme and we fully describe our approaches to designing such a scheme. We start in \cref{section:definitions-and-the-standard-construction} with recalling how public-key quantum money was defined in \cite{AC12}. Then, in \cref{section:construction-of-a-noise-tolerant-mini-scheme}, we extend this definition to the noisy setting. Before describing our approaches to designing a noise-tolerant scheme (in \cref{section:formal-specification} and \cref{section:verification}), we briefly outline alternative approaches in \cref{section:initial-approaches-to-noise-tolerance}, though these are not the main focus of the paper.

\subsection{Definitions}
\label{section:definitions-and-the-standard-construction}

In this section, we recall how public-key quantum money is defined in \cite{AC12}. Specifically, we focus on what Aaronson and Christiano call \emph{mini-schemes}, since they show that a secure mini-scheme can be combined with a digital signature scheme to construct a full public-key quantum money scheme that is also secure. It is worth noting that the construction uses a mini-scheme in a straightforward manner, so that if we handle noise at the level of the mini-scheme, we will retain the same level of noise-tolerance in the full scheme (from another perspective, the construction does not involve any transformations of the quantum money state that could result in further errors). For additional details on this construction, the reader is referred to \cite{AC12}.

Notably, a \emph{valid banknote} in the context of \cref{def:mini-schemes} refers to a banknote as output by the algorithm \textsf{Bank}. For the noise-tolerant setting, we define a valid banknote differently (see \cref{defn:valid-banknotes-for-a-noise-tolerant-scheme}).

\begin{defn}[Mini-schemes \cite{AC12}]
\label{def:mini-schemes}
A (public-key) \textbf{mini-scheme} $\mathcal{M}$ consists of two polynomial-time quantum algorithms:
    \begin{itemize}
        \item 
            \textsf{Bank}, which takes as input a security parameter $0^n$, and probabilistically generates a banknote $\$=(s, \rho_s)$ where $s$ is a classical \textbf{serial number}, and $\rho_s$ is a quantum money state.
        \item 
            \textsf{Ver}, which takes as input an alleged banknote $\tilde{\$}$, and either accepts or rejects.
    \end{itemize}

We say $\mathcal{M}$ has \textbf{completeness error} $\varepsilon$ if $\textsf{Ver}(\$)$ accepts with probability at least $1-\varepsilon$ for all valid banknotes \$. If $\varepsilon=0$, then $\mathcal{M}$ has perfect completeness. If, furthermore, $\rho_s = \ket{\psi_s}\bra{\psi_s}$ is always a pure state, and $\textsf{Ver}$ simply consists of a projective measurement onto the rank-1 subspace spanned by $\ket{\psi_s}$, then we say that $\mathcal{M}$ is \textbf{projective}.

Let $\textsf{Ver}_2$ (the \textbf{double verifier}) take as input a single serial number $s$ as well as two (possibly entangled) states $\sigma_1$ and $\sigma_2$, and accept if and only if $\textsf{Ver}(s, \sigma_1)$ and $\textsf{Ver}(s, \sigma_2)$ both accept. We say that $\mathcal{M}$ has \textbf{soundness error} $\delta$ if, given any quantum circuit $\ccC$ of size $\textsf{poly}(n)$ (the \textbf{counterfeiter}), $\textsf{Ver}_2(s, \ccC(\$))$ accepts with probability at most $\delta$. Here, the probability is over the banknote \$ output by $\textsf{Bank}(0^n)$, as well as the behaviour of $\textsf{Ver}_2$ and $\ccC$.

We call $\mathcal{M}$ \textbf{secure} if it has completeness error $\leq 1/3$ and negligible soundness error.
\end{defn}

It is worth noting that the purpose of the classical serial number is to index the quantum money state. This point will be made clearer in \cref{section:formal-specification}.

\subsection{Defining a noise-tolerant mini-scheme}
\label{section:construction-of-a-noise-tolerant-mini-scheme}

Put simply, the banknote in the money scheme of \cite{AC12} is a subspace state $\ket{A}$ and verification is performed by projecting onto the subspace spanned by $\ket{A}$. If the state $\ket{A}$ is corrupted by a Pauli error, then this state will, in general, be orthogonal to $\ket{A}$ (see the end of the proof for \cref{lemma:probability-of-acceptance}) and hence rejected by this verification procedure. To extend the ideas of \cite{AC12} to tolerate noise, we propose that verification of a banknote amounts to projecting onto the subspace spanned by the banknote and all its tolerated noisy variants (\emph{i.e.} a subspace spanned by coset states, where the coset states are indexed by the set of tolerated error vectors). This idea, which is a natural extension of projective mini-schemes from \cref{def:mini-schemes}, is formalized by the following definition.

\begin{defn}[Noisy-projective mini-schemes]
     Let $\mathcal{M}=(\textsf{Bank}, \textsf{Ver})$ be a mini-scheme. We say that $\mathcal{M}$ is \textbf{noisy-projective} if $\textsf{Ver}$ consists of a projective measurement onto the subspace spanned by $\{X^e Z^{e'}\ket{\psi_s}\}_{(e, e') \in \mathcal{E}_q}$, where $\mathcal{E}_q$ is the set of tolerated error vectors for a scheme that can tolerate up to $q$ arbitrary errors on an $n$-qubit banknote (see \cref{eq:set-of-error-vectors}).
\end{defn}

\begin{defn}[Valid banknotes for a noise-tolerant scheme]
\label{defn:valid-banknotes-for-a-noise-tolerant-scheme}
     Let $\mathcal{M}$ be a mini-scheme. In a noisy-projective $\mathcal{M}$, we say that a \textbf{valid banknote} is one that has been minted by the bank and then possibly affected by a tolerable error; that is, a pair consisting of a valid serial number $s$ and a state from the set $\{X^e Z^{e'}\ket{\psi_s}\}_{(e, e') \in \mathcal{E}_q}$.
\end{defn}

\begin{defn}[Security of a noisy-projective mini-scheme]
    Let $\mathcal{M}$ be a mini-scheme. We say that a noisy-projective $\mathcal{M}$ has \textbf{completeness error} $\varepsilon$ if $\textsf{Ver}(\$)$ accepts with probability at least $1-\varepsilon$ for all valid banknotes $\$$ (where valid is understood in the sense of \cref{defn:valid-banknotes-for-a-noise-tolerant-scheme}). If $\varepsilon=0$, then $\mathcal{M}$ has perfect completeness.

    Let $\textsf{Ver}_2$ (the \textbf{double verifier}) take as input a single serial number $s$ as well as two (possibly entangled) states $\sigma_1$ and $\sigma_2$, and accept if and only if $\textsf{Ver}(s, \sigma_1)$ and $\textsf{Ver}(s, \sigma_2)$ both accept. We say that $\mathcal{M}$ has \textbf{soundness error} $\delta$ if, given any quantum circuit $\ccC$ of size $\textsf{poly}(n)$ (the \textbf{counterfeiter}), $\textsf{Ver}_2(s, \ccC(\$))$ accepts with probability at most $\delta$. Here, the probability is over the banknote \$ output by $\textsf{Bank}(0^n)$, as well as the behaviour of $\textsf{Ver}_2$ and $\ccC$.

     When $q\geq 1$, we say that a noisy-projective $\mathcal{M}$ is \textbf{$\epsilon$-noise-tolerant}, where $\epsilon \coloneqq q/n$. We say that a noisy-projective $\mathcal{M}$ is \textbf{$\delta$-secure} if the scheme has soundness error $\delta$ and negligible completeness error.
\end{defn}

\subsection{Initial approaches to noise-tolerance}
\label{section:initial-approaches-to-noise-tolerance}

In this section, we briefly discuss two approaches to noise-tolerance that are at first intuitive but then prove to present obstacles. In \cref{section:direct-error-correction}, we treat the banknote of the \cite{AC12} scheme as already being a codeword of a CSS code. In \cref{section:a-layer-of-error-correction}, we consider a different direction by adding a layer of error correction onto the \cite{AC12} scheme.

\subsubsection{Direct error correction}
\label{section:direct-error-correction}

We can view the banknote subspace state in \cite{AC12} as a codeword in a CSS code. Furthermore, from the perspective of a CSS code (with only one codeword), the verification procedure of \cite{AC12} (checking membership in $A$ and $A^\perp$) can be thought of as error detection where one only checks for membership in the codespace (the common $+1$ eigenspace of the generators), as both procedures amount to projecting onto the subspace spanned by $\ket{A}$. If we use this CSS code and the error correction procedure outlined in \cref{section:QECC} for noise-tolerance, then we must measure the generators of the code to determine whether the codeword has been affected by a correctable error vector. The immediate difficulty with this approach is that publishing the generators reveals the subspace used for constructing the banknote.

To formally see this, we first need to elaborate on why our CSS codes can only have a single codeword. If a CSS code has more than one codeword, then the verification procedure where one checks membership in the codespace allows all codewords to be accepted; in the noise-free quantum money setting, this is advantageous for a counterfeiter, as multiple states can now be accepted, as opposed to the single state $\ket{A}$. In a noise-tolerant setting, this problem would only worsen. 

To be precise in our characterization of applicable CSS codes, we are asking for a subspace $C$ such that $\dim(C)=n/2$ and that both $C$ and $C^\perp$ can correct up to $q$ errors. Then, setting $C_1=C$ and $C_2=C$ in the CSS code framework, we get a code that can correct up to $q$ arbitrary errors and where the codespace has $2^{k_1 - k_2}=1$ element, as desired. The single codeword is then the subspace state
\begin{equation*}
    \ket{C} = \frac{1}{\sqrt{|C|}}\sum_{v \in C} \ket{v}.
\end{equation*}

Returning to the task of noise-tolerance, we recall that measuring the generators of a code is how we determine whether a codeword has been affected by a correctable error vector. For an applicable CSS code, the check matrix (which describes the generators), is
\begin{equation*}
\begin{bmatrix}[c|c]
    H_{C^\perp} & 0\\
    0 & H_{C}
\end{bmatrix}.
\end{equation*}
Thus, if we were to publish a description of the generators for a banknote $\ket{C}$, an adversary could determine the parity check matrices which in turn reveals the codespace $C$. It then seems that to use the generators for tolerating noise, one must ``hide" a full description of them.

\subsubsection{A layer of error correction}
\label{section:a-layer-of-error-correction}

Now suppose that we take a more traditional approach by applying a layer of error correction on top of the oracle scheme of \cite{AC12}. The process for this starts by choosing a stabilizer code and then encoding the banknote subspace state $\ket{A}$. Given a state $\ket{\psi}$, let $\ket{\psi}_L$ denote the encoding of $\ket{\psi}$. The natural way to prepare $\ket{A}_L$ is to construct it from $\ket{0}_L$, just as $\ket{A}$ would be constructed from $\ket{0}$. 

The standard encoding procedure for a stabilizer code is to start by encoding the $\ket{0}$ state via the following operation
\begin{equation}
    \ket{0}_L \equiv \ket{0^{\otimes k}}_L = \frac{1}{N}\prod_{g_i \in \mathcal{S}_g} \left(I^{\otimes n} + g_i\right)\ket{0^{\otimes n}}
\end{equation}
where we recall that $\mathcal{S}_g$ denotes the set of stabilizer generators, and $N$ is for normalization. Note that with the codeword $\ket{0}_L$, one can prepare the other codewords by applying logical operators to $\ket{0}_L$. Now, given the circuit which transforms $\ket{0}$ to $\ket{A}$, one must now translate this circuit into one that operates on the encoded state $\ket{0}_L$ to prepare $\ket{A}_L$.

Assuming that $\ket{A}_L$ can be prepared, one can take two directions. The first is to design a fault-tolerant version of the verification procedure which includes the queries to the classical oracle. The other approach is to perform error correction on $\ket{A}_L$, decode it to $\ket{A}$, and then apply the verification procedure. The latter approach is simpler but assumes that the verification procedure is noise-free. In either approach, a formal proof of security does not seem to immediately follow from the original security proof of \cite{AC12}. A main difficulty is that a codeword may be quite different from the state it encodes, and so when a layer of error correction is applied to the \cite{AC12} scheme, the adversary would hold the encoded state $\ket{A}_L$, while the original security proof of \cite{AC12} assumes that the adversary holds the state $\ket{A}$. Thus, to prove security for this approach, it must be shown that $\ket{A}_L$ provides no additional advantage over holding $\ket{A}$. More formally, the proof of security uses the inner product adversary method (\cref{section:inner-product-adversary-method}) where the number of queries needed by the adversary crucially relies on the state that the adversary starts with, and thus giving $\ket{A}_L$ as the starting point instead of $\ket{A}$ could reduce the number of queries needed.

A natural question is if applying a layer of error correction would be preferable to the approach taken in this paper. Applying a layer of error correction often comes with the practical difficulty of requiring many additional qubits, though it could also come with a favourable trade-off (between noise-tolerance and soundness error). To make a formal comparison, we would need a proof of security for a scheme that accounts for a layer of error correction. We return to this question as a future direction in \cref{section:future-directions}.


\subsection{Formal specification of the scheme}
\label{section:formal-specification}

In this section, we give the formal specification of our noisy-projective mini-scheme. The specification given here is made in reference to the subset approach, which relies on classical oracles, denoted as $U_{C_{\mathcal{E}_X}}$ and $U_{C_{\mathcal{E}_Z}}$, to perform verification. The formal specification for the subspace approach is similar except that $U_{Q_{\mathfrak{S}^X}}$ and $U_{Q_{\mathfrak{S}^Z}}$ are used, respectively, in place of $U_{C_{\mathcal{E}_X}}$ and $U_{C_{\mathcal{E}_Z}}$. The behaviour of these oracles and how they actually verify a banknote will be made clear when we describe the two approaches in \cref{section:verification}.

To start, we must clarify how the bank actually provides access to the oracles $U_{C_{\mathcal{E}_X}}$ and $U_{C_{\mathcal{E}_Z}}$ in the subset approach. In the final noisy-projective mini-scheme $\mathcal{M} = (\textsf{Bank}_\mathcal{M}, \textsf{Ver}_\mathcal{M})$, the bank, verifier, and counterfeiter will all have access to a single classical oracle $U$, which consists of the four components described below. In \cref{section:using-conjugate-coding-states}, we show how this specification can be amended for the purpose of using conjugate coding states. 

\begin{itemize}
    \item 
        A \textbf{banknote generator} $\mathcal{G}(r)$ which takes as input a random string $r \in \{0,1\}^n$, and outputs a set of linearly independent generators $\langle C_r \rangle = \{x_1, \ldots, x_{n/2}\}$ for a subspace $C_r \in \mathcal{W}$, as well as a unique $3n$-bit serial number $z_r \in \{0,1\}^{3n}$. The function $\mathcal{G}$ is chosen uniformly at random, subject to the constraint that the serial numbers are all distinct. \footnote{As noted in \cite{AC12}, the banknote generator $\mathcal{G}$ could be implemented with a random oracle. In this implementation, the random oracle is queried with the input $r$ and outputs a fixed but random pair $(\langle C_r \rangle, z_r)$. The constraint that the serial numbers are all distinct would be satisfied with probability $1-\mathcal{O}(2^{-n})$.}
    \item 
        A \textbf{serial number checker} $\mathcal{Z}(z)$, which outputs 1 if $z=z_r$ is a valid serial number for some $\langle C_r \rangle$, and 0 otherwise.
    \item 
        A \textbf{primal subset tester} $\mathcal{T}_{\text{primal}}$ which takes an input of the form $\ket{z}\ket{x}$, applies $U_{C_{\mathcal{E}_X}}$ to $\ket{x}$ if $z=z_r$ is a valid serial number for some $\langle C_r \rangle$, and does nothing otherwise.
    \item 
        A \textbf{dual subset tester} $\mathcal{T}_{\text{dual}}$, identical to $\mathcal{T}_{\text{primal}}$ except that it applies $U_{C_{\mathcal{E}_Z}}$ instead of $U_{C_{\mathcal{E}_X}}$.
\end{itemize}
Then $\mathcal{M} = (\textsf{Bank}_\mathcal{M}, \textsf{Ver}_\mathcal{M})$ is defined as follows:
\begin{itemize}
    \item 
        $\textsf{Bank}_\mathcal{M}(0^n)$ chooses $r \in \{0,1\}^n$ uniformly at random. Following this, it looks up $\mathcal{G}(r)=(z_r, \langle C_r \rangle)$, and outputs the banknote $\ket{\$_r} = \ket{z_r} \ket{C_r}$.
    \item 
        $\textsf{Ver}_\mathcal{M}(\tilde{\$})$ first uses $\mathcal{Z}$ to check that $\tilde{\$}$ has the form $(z, \rho)$, where $z=z_r$ is a valid serial number. If so, then it uses $\mathcal{T}_{\text{primal}}$ and $\mathcal{T}_{\text{dual}}$ to apply $V_{C_r} = \mathsf{H}^{\otimes n} \mathbb{P}_{C_r^\perp} \mathsf{H}^{\otimes n} \mathbb{P}_{C_r}$, and accepts if and only if $V_{C_r}(\rho)$ accepts.
\end{itemize}

Recall that the purpose of the serial number is to index the money state. In the above formal specification, this is made clearer, as we see how $z_r$ is used to determine which classical oracle to use for verification (for example, see the primal subset tester $\mathcal{T}_\text{primal}$).

\subsubsection{Using conjugate coding states}
\label{section:using-conjugate-coding-states}

Given that conjugate coding states are practically simple to prepare, it would be ideal if the money state of our scheme could be prepared with conjugate coding states, rather than directly preparing a subspace state $\ket{A}$. In this section, we show how this can be done with \cref{lemma:cosetBB84equivalence} and, consequently, alter the banknote generator $\mathcal{G}(r)$ and $\textsf{Bank}_{\mathcal{M}}$.

Recalling \cref{lemma:cosetBB84equivalence}, we see that generating a uniformly random coset state $\ket{A_{t,t'}}$ over subspaces $A \subseteq \mathbb{F}^n_2$ with dimension $n/2$ is equivalent to generating a basis $\mathcal{B}=\{u_1, \ldots, u_n\}$ and $x,\theta \in \{0,1\}^n$, such that $\theta$ has Hamming weight $n/2$, uniformly at random. This leads to the following procedure:
\begin{enumerate}
    \item 
        Choose $n\geq 2$ to be even. Select $x \in \{0,1\}^n$ uniformly at random and select $\theta \in\{0,1\}^n$, such that $\theta$ has Hamming weight $n/2$, uniformly at random. Then form the state $\ket{x}_\theta$.
    \item 
        Select a uniformly random basis $\mathcal{B}=\{u_1, \ldots, u_n\}$ of $\mathbb{F}^n_2$.
    \item 
        Set $T = \{i: \theta_i = 0\}$ and then set $A=\text{Span}\{u_i: \theta_i=1\}=\text{Span}\{u_i: i \in \overline{T}\}$. Then $A$ is a uniformly random subspace of $\mathbb{F}^n_2$ of dimension $n/2$. \label{generateSubspace}
    \item 
        Set $t=\sum_{i\in T} x_iu_i$ and $t'=\sum_{i\in \overline{T}}x_iu^i$. \label{generateParameters}
    \item 
        By \Cref{lemma:cosetBB84equivalence}, applying $U_{\mathcal{B}}$ to $\ket{x}_\theta$ yields $\ket{A_{t,t'}}$.
\end{enumerate}

In our scheme, a freshly minted money state (not yet corrupted by noise) is a subspace state $\ket{C}$ with $C \in \mathcal{W}$. To generate a uniformly random subspace state $\ket{C}$, we need $t=t'=0$ in the above construction. A non-trivial choice of $x$ cannot achieve this without contradicting the fact that $\{u_1, \ldots, u_n\}$ forms a basis, thus we must have $x=0$. So, to generate a uniformly random subspace state under the above construction, we must generate $\ket{0}_\theta$. 

To then incorporate conjugate coding states into the formal specification of our scheme, we must alter the behaviour of the banknote generator $\mathcal{G}(r)$ and $\textsf{Bank}_\mathcal{M}$.

\begin{itemize}
    \item 
        The \textbf{banknote generator} $\mathcal{G}(r)$ takes as input a random string $r \in \{0,1\}^n$, and outputs a basis $\mathcal{B}_r = \{u_1, \ldots, u_n\}$ and a $\theta_r \in \{0,1\}^n$ with Hamming weight $n/2$ such that the subspace specified by $\theta_r$ belongs to $\mathcal{W}$, as well as a unique $3n$-bit serial number $z_r \in \{0,1\}^{3n}$. The function $\mathcal{G}$ is chosen uniformly at random, subject to the constraint that the serial numbers are all distinct.
    \item 
        $\textsf{Bank}_\mathcal{M}(0^n)$ chooses $r \in \{0,1\}^n$ uniformly at random. Following this, it looks up $\mathcal{G}(r)=(z_r, \theta_r, \mathcal{B}_r)$, and outputs the banknote $\ket{\$_r} = \ket{z_r} U_{\mathcal{B}_r}\ket{0}_{\theta_r}$.
\end{itemize}
Intuitively, using conjugate coding states in the minting of a banknote does not affect the security of the scheme since a counterfeiter in both variants of the scheme only receives a subspace state and access to the oracles. Formally, this change is handled in \cref{theorem:security-mini-scheme}.

\subsection{Verification}
\label{section:verification}
In this section, we describe our noise-tolerant quantum money scheme. We present two approaches which we refer to as the the subset approach (\cref{section:the-subset-approach}) and the subspace approach (\cref{section:the-subspace-approach}). In each section, we characterize the banknote and the verification procedure. As noted in \cref{section:techniques}, we cover both approaches here because the subset approach is more general while the subspace approach seems to have an obvious instantiation (Zhandry's technique of using iO to instantiate the oracle from \cite{AC12} could be used); we discuss this latter point further in \cref{section:future-directions} as a future direction. Since the subset approach is more general, and conceptually simpler, it will be the primary focus for the rest of the paper. In \cref{section:security}, we analyze the security of our scheme.

\subsubsection{The subset approach}
\label{section:the-subset-approach}

We now describe the verification procedure that we use throughout the rest of the paper. To start, let $\mathcal{K}$ be the set of all subspaces $C \subseteq \mathbb{F}^n_2$ such that $\dim C = n/2$. Recalling \cref{eq:distance-of-code}, let $\mathcal{W}$ be the subset of $\mathcal{K}$ defined as
\begin{equation}
\label{eq:applicable-CSS-codes}
    \mathcal{W} \coloneqq \{C \in \mathcal{K} : d(C) \geq 2q+1 \text{ and } d(C^\perp) \geq 2q+1\},
\end{equation}
In other words, $\mathcal{W}$ consists of applicable CSS codes: $n/2$-dimensional subspaces $C$ of $\mathbb{F}^n_2$ such that, as classical linear codes, both $C$ and $C^\perp$ can each correct $q$ errors. 

A freshly minted money state of our scheme will be a subspace state
\begin{equation}
\label{eq:freshly-minted-money-state}
    \ket{C} = \frac{1}{\sqrt{|C|}}\sum_{v \in C} \ket{v},
\end{equation}
where $C \in \mathcal{W}$. This is similar to \cite{AC12}, where a freshly minted money state is also a subspace state except with $C \in \mathcal{K}$. The reason we must select our subspaces from $\mathcal{W}$ is to ensure that our verification operator, constructed below, is a projector; this is made clear in the proof of \cref{lemma:probability-of-acceptance}. We discuss the existence of such subspaces in the subspace approach (\cref{section:the-subspace-approach}), which also requires the $C \in \mathcal{W}$ condition.

When the state in \cref{eq:freshly-minted-money-state} is corrupted by noise (a bit-flip error $X^e$ and a phase-flip error $Z^{e'}$), it becomes a coset state
\begin{equation*}
    \ket{C_{e,e'}} = X^e Z^{e'} \ket{C}
    =
    \frac{1}{\sqrt{|C|}}\sum_{v \in C} (-1)^{v\cdot e'} \ket{v+e},
\end{equation*}
where $e,e'$ are $n$-bit vectors. Suppose we wish to tolerate up to $q$ arbitrary errors. We know that, for quantum error correction (and CSS codes, in particular), we can achieve this by tolerating up to $q$ bit-flip and $q$ phase-flip errors (\emph{i.e.}, tolerating the set $\mathcal{E}_q$). Taking inspiration from this, we design our scheme to tolerate the set $\mathcal{E}_q$ and then show that, as a result, our scheme indeed tolerates $q$ arbitrary errors (though the reasoning for this is not exactly the same as in quantum error correction). We discuss this further after \cref{lemma:probability-of-acceptance}.

When we say that our scheme should tolerate the set $\mathcal{E}_q$, we mean that we should tolerate the sets $\mathcal{E}_X, \mathcal{E}_Z$ defined by \cref{eq:sets-of-X-and-Z-errors}. To tolerate these two sets, we use classical oracles $U_{C_{\mathcal{E}_X}}$ and $U_{C_{\mathcal{E}_Z}}$ to check for membership in the subsets $C_{\mathcal{E}_X}$ and $C_{\mathcal{E}_Z}$, respectively, where these subsets are defined as
\begin{equation}
\label{eq:subset-collections}
    C_{\mathcal{E}_X} \coloneqq \bigcup_{e \in \mathcal{E}_X} C+e 
    \quad \text{ and } \quad
    C_{\mathcal{E}_Z} \coloneqq \bigcup_{e' \in \mathcal{E}_Z} C^\perp+e'.
\end{equation}

Formally, tolerable bit-flip errors are checked by our first projector $\mathbb{P}_{C}$ which is constructed from the classical oracle $U_{C_{\mathcal{E}_X}}$. This construction uses the same techniques of \cite{AC12} for implementing a classical oracle as a projector (recall \cref{section:classical-oracles} for details). In this context, the construction of $\mathbb{P}_C$ consists of the following steps:
\begin{enumerate}
    \item 
        For the oracle $U_{C_{\mathcal{E}_X}}$, we introduce a control qubit $\ket{+}$ and apply the classical oracle controlled on $\ket{+}$,
        \begin{equation}
            CU_{C_{\mathcal{E}_X}}(\ket{x}, \ket{+}) = \frac{\ket{x}\ket{0} + U_{C_{\mathcal{E}_X}}\ket{x}\ket{1}}{\sqrt{2}}=
            \begin{cases}
                \ket{x}\ket{-} \quad \text{if } x \in {C_{\mathcal{E}_X}}\\
                \ket{x}\ket{+} \quad \text{if } x \notin {C_{\mathcal{E}_X}}.
            \end{cases}
        \end{equation}
    \item 
        We then measure the control qubit and accept if the outcome $\ket{-}$ is observed, and reject otherwise.
\end{enumerate}

The projector $\mathbb{P}_{C^\perp}$, which will check for tolerable phase-flip errors, is constructed in exactly the same way except that the classical oracle $U_{C_{\mathcal{E}_Z}}$ is used and, for this projector to work, it will need to be preceded and followed by the Hadamard transform.

Verification is then performed by the following operator
\begin{equation}
\label{eq:verification-operator-subset-approach}
    V = \mathsf{H}^{\otimes n} \mathbb{P}_{C^\perp} \mathsf{H}^{\otimes n} \mathbb{P}_{C}.
\end{equation}

The following lemma is a generalization of lemma 19 from \cite{AC12}. It shows that our verification operator is a projector onto the subspace spanned by the money state $\ket{C}$ and all its tolerated noisy variants. By linearity, the proof of this result extends to arbitrary errors (see the discussion that follows the proof), and so perfect completeness follows from this lemma and the specification given in \cref{section:formal-specification}.

\begin{lem}
\label{lemma:probability-of-acceptance}
    The verification operator $V$, defined in \cref{eq:verification-operator-subset-approach}, is the projector \newline\mbox{$V=\sum_{(e, e') \in \mathcal{E}}\ket{C_{e,e'}}\bra{C_{e,e'}}$}, i.e., it projects onto the subspace $S$ spanned by $\{\ket{C_{e,e'}}\}_{(e,e') \in \mathcal{E}}$. In particular, the probability of acceptance is given by
    \begin{equation*}
        \Pr[V(\ket{\psi}) = \text{accepts}] = \sum_{(e, e') \in \mathcal{E}} |\braket{\psi | C_{e,e'}}|^2.
    \end{equation*}
\end{lem}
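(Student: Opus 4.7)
The plan is to exploit the orthogonality of the tolerable-error cosets (ensured by the distance conditions $C \in \mathcal{W}$) to decompose both $\mathbb{P}_C$ and $\mathsf{H}^{\otimes n}\mathbb{P}_{C^\perp}\mathsf{H}^{\otimes n}$ into sums of coset projectors, and then show each cross-term in the resulting double sum collapses to a single rank-one projector onto a coset state.

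The first step is to verify coset disjointness. For distinct $e_1, e_2 \in \mathcal{E}_X$, the weight of $e_1 + e_2$ is at most $2q$, so $d(C) \geq 2q+1$ forces $e_1 + e_2 \notin C$, hence $(C+e_1) \cap (C+e_2) = \emptyset$. Consequently $\mathbb{P}_C = \sum_{e \in \mathcal{E}_X} P_{C+e}$ is a sum of mutually orthogonal projectors $P_{C+e} = \sum_{v \in C}\ket{v+e}\bra{v+e}$; the symmetric argument with $d(C^\perp) \geq 2q+1$ yields $\mathbb{P}_{C^\perp} = \sum_{e' \in \mathcal{E}_Z} P_{C^\perp + e'}$. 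The same distance bounds also give orthonormality of $\{\ket{C_{e,e'}}\}_{(e,e') \in \mathcal{E}}$: distinct $e$'s give states on disjoint computational-basis supports; and for fixed $e$, a direct calculation shows $\braket{C_{e,e'_1}|C_{e,e'_2}} = \frac{1}{|C|}\sum_{v \in C}(-1)^{v\cdot(e'_1+e'_2)}$, which vanishes unless $e'_1 + e'_2 \in C^\perp$, excluded by $d(C^\perp) \geq 2q+1$ for distinct $e'_1, e'_2 \in \mathcal{E}_Z$.

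Distributing these decompositions through $V$ gives $V = \sum_{(e,e') \in \mathcal{E}} \mathsf{H}^{\otimes n}P_{C^\perp+e'}\mathsf{H}^{\otimes n}P_{C+e}$, so the heart of the argument is to identify each summand with $\ket{C_{e,e'}}\bra{C_{e,e'}}$. A direct computation of the Hadamard transform of a coset state yields the Fourier-type identity $\mathsf{H}^{\otimes n}\ket{C^\perp_{a,b}} = (-1)^{a\cdot b}\ket{C_{b,a}}$, and substituting this into the coset-state expansion $P_{C^\perp+e'} = \sum_{z}\ket{C^\perp_{e',z}}\bra{C^\perp_{e',z}}$ (sum over representatives $z$ of $\mathbb{F}_2^n/C$) produces $\mathsf{H}^{\otimes n}P_{C^\perp+e'}\mathsf{H}^{\otimes n} = \sum_{z}\ket{C_{z,e'}}\bra{C_{z,e'}}$, as the conjugation phases cancel in the sandwich. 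Multiplying on the right by $P_{C+e}$ and using that coset states on distinct $C$-cosets are orthogonal, only the term with $z \in e + C$ survives; since $\ket{C_{e,e'}}$ lies in the image of $P_{C+e}$, the product collapses to $\ket{C_{e,e'}}\bra{C_{e,e'}}$.

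The main obstacle is the sign-sensitive bookkeeping around the Fourier-type identity and its sandwich under $\mathsf{H}^{\otimes n}$-conjugation: this is a routine but careful calculation tracking how phases arise and cancel when moving between the computational and Hadamard bases. Once the expansion is in hand, $V = \sum_{(e,e') \in \mathcal{E}}\ket{C_{e,e'}}\bra{C_{e,e'}}$ is manifestly the projector onto $S$, as its summands are the rank-one projectors of an orthonormal family. The acceptance probability formula then follows directly, since the two sequential measurements implementing $\mathbb{P}_C$ and then $\mathsf{H}^{\otimes n}\mathbb{P}_{C^\perp}\mathsf{H}^{\otimes n}$ accept jointly with probability $\braket{\psi|V|\psi} = \sum_{(e,e') \in \mathcal{E}}|\braket{\psi|C_{e,e'}}|^2$.
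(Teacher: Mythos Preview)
Your proof is correct and takes a genuinely different route from the paper's. The paper proceeds by verifying directly that $V$ fixes each $\ket{C_{e,e'}}$ (one long chain of equalities), then shows $V\ket{\psi}=0$ for every $\ket{\psi}$ orthogonal to all coset states (another long computation using the explicit orthogonality constraints on the amplitudes), and finally checks that the $\ket{C_{e,e'}}$ are pairwise orthogonal so that the resulting sum of rank-one projectors is itself a projector. Your argument is more structural: you decompose each factor of $V$ into a sum of coset projectors, use the Fourier-type identity $\mathsf{H}^{\otimes n}\ket{C^\perp_{a,b}}=(-1)^{a\cdot b}\ket{C_{b,a}}$ to rewrite $\mathsf{H}^{\otimes n}P_{C^\perp+e'}\mathsf{H}^{\otimes n}$ as $\sum_{z}\ket{C_{z,e'}}\bra{C_{z,e'}}$, and then let coset disjointness collapse the product term-by-term. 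This avoids the paper's two lengthy bare-hands verifications and makes it transparent \emph{why} the product $P_2P_1$ of the two measurement projectors is already Hermitian (hence a projector), namely because each cross-term is itself a rank-one projector. The paper's approach, on the other hand, requires no auxiliary identities and works by brute force on the two eigenspaces of the claimed projector.

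One small point worth making explicit in your write-up: the joint acceptance probability of the two sequential measurements is $\lVert V\ket{\psi}\rVert^2$, not a priori $\braket{\psi|V|\psi}$; these coincide precisely because you have shown $V=V^\dagger=V^2$, so $\lVert V\ket{\psi}\rVert^2=\braket{\psi|V^\dagger V|\psi}=\braket{\psi|V|\psi}$. You allude to this but it deserves one sentence.
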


\begin{proof}
We first show that $V\ket{C_{e,e'}} = \ket{C_{e,e'}}$ for any $(e,e') \in \mathcal{E}$.
    \begin{align*}
        V\ket{C_{e,e'}} &= V \frac{1}{\sqrt{|C|}}\sum_{v \in C} (-1)^{v\cdot e'} \ket{v+e}\\
        &= \mathsf{H}^{\otimes n} \mathbb{P}_{C^\perp} \mathsf{H}^{\otimes n} \mathbb{P}_{C} \frac{1}{\sqrt{|C|}}\sum_{v \in C} (-1)^{v\cdot e'} \ket{v+e}\\
        &= \mathsf{H}^{\otimes n} \mathbb{P}_{C^\perp} \mathsf{H}^{\otimes n} \frac{1}{\sqrt{|C|}}\sum_{v \in C} (-1)^{v\cdot e'} \ket{v+e}\\
        &= \mathsf{H}^{\otimes n} \mathbb{P}_{C^\perp} \frac{1}{\sqrt{|C|}}\sum_{v \in C} (-1)^{v\cdot e'} \frac{1}{\sqrt{2^n}}\sum_{z \in 2^n} (-1)^{(v+e) \cdot z} \ket{z}\\
        &= \mathsf{H}^{\otimes n} \mathbb{P}_{C^\perp} \frac{(-1)^{e \cdot e'}}{\sqrt{2^n |C|}}\sum_{v \in C} \sum_{z \in 2^n} (-1)^{(v+e) \cdot (z + e')} \ket{z}\\
        &= \mathsf{H}^{\otimes n} \mathbb{P}_{C^\perp} \frac{(-1)^{e \cdot e'}}{\sqrt{2^n |C|}} \sum_{z' \in 2^n} \sum_{v \in C} (-1)^{(v+e) \cdot z'} \ket{z'+e'}\\
        &= \mathsf{H}^{\otimes n} \mathbb{P}_{C^\perp} \frac{(-1)^{e \cdot e'}}{\sqrt{2^n /|C|}} \sum_{z' \in C^\perp} (-1)^{z' \cdot e} \ket{z'+e'}\\
        &= \mathsf{H}^{\otimes n} \frac{(-1)^{e \cdot e'}}{\sqrt{|C^\perp|}}\sum_{z' \in C^\perp} (-1)^{z'\cdot e} \ket{z'+e'}\\
        &= \frac{1}{\sqrt{|C|}}\sum_{v \in C} (-1)^{v\cdot e'} \ket{v+e}
    \end{align*}
where we set $z' \equiv z+e'$ and used the result that if $z' \in C^\perp$ then $\sum_{v \in C}(-1)^{v \cdot z'}=|C|$ and if $z' \notin C^\perp$ then $\sum_{v \in C}(-1)^{v \cdot z'}=0$.

We now show that $V\ket{\psi}=0$ for all $\ket{\psi}$ that are orthogonal to \emph{every} tolerable noisy state $\ket{C_{e,e'}}$. We start by writing $\psi$ as
\begin{equation*}
    \ket{\psi} = \sum_{x \in 2^n} c_x \ket{x}.
\end{equation*}
Then the orthogonality condition gives
\begin{align*}
    0 &= \braket{\psi \, | \, C_{e,e'}}\\
    &= \bigg(\sum_{x \in 2^n} c^*_x \bra{x}, \frac{1}{\sqrt{|C|}} \sum_{v \in C} (-1)^{v \cdot e'} \ket{v+e} \bigg)\\
    \Longleftrightarrow 0 &= \sum_{x \in C+e} c^*_x (-1)^{(x+e) \cdot e'}\\
    \Longleftrightarrow 0 &= \sum_{v \in C} c_{v+e} (-1)^{v \cdot e'}\\
\end{align*}
for all $(e, e') \in \mathcal{E}$. Then,
\begin{align*}
    V \ket{\psi} &= \mathsf{H}^{\otimes n} \mathbb{P}_{C^\perp} \mathsf{H}^{\otimes n} \mathbb{P}_{C} \sum_{x \in 2^n} c_x \ket{x}\\
    &= \mathsf{H}^{\otimes n} \mathbb{P}_{C^\perp} \mathsf{H}^{\otimes n} \sum_{e \in \mathcal{E}_X} \bigg(\sum_{v \in C} c_{v+e} \ket{v+e}\bigg)\\
    &= \mathsf{H}^{\otimes n} \mathbb{P}_{C^\perp} \sum_{e \in \mathcal{E}_X} \bigg( \sum_{v \in C} c_{v+e} \bigg(\frac{1}{\sqrt{2^n}}\sum_{y \in 2^n} (-1)^{(v+e) \cdot y}\ket{y}\bigg)\bigg)\\
    &= \mathsf{H}^{\otimes n} \mathbb{P}_{C^\perp} \sum_{e \in \mathcal{E}_X} \bigg( \sum_{v \in C} c_{v+e} \bigg(\frac{1}{\sqrt{2^n}}\sum_{y' \in 2^n} (-1)^{(v+e) \cdot (y'+e')}\ket{y'+e'}\bigg)\bigg)\\
    &= \mathsf{H}^{\otimes n} \mathbb{P}_{C^\perp} \sum_{e \in \mathcal{E}_X} \bigg( \sum_{v \in C} c_{v+e} \bigg(\frac{1}{\sqrt{2^n}}(-1)^{(v+e) \cdot e'}\sum_{y' \in 2^n} (-1)^{(v+e) \cdot y'}\ket{y'+e'}\bigg)\bigg)\\
    &= \mathsf{H}^{\otimes n} \sum_{(e, e') \in \mathcal{E}} \bigg( \sum_{v \in C} c_{v+e} \bigg(\frac{1}{\sqrt{2^n}}(-1)^{(v+e) \cdot e'}\sum_{z \in C^\perp} \ket{z+e'} \bigg)\bigg)\\
    &= \mathsf{H}^{\otimes n} \sum_{(e, e') \in \mathcal{E}} \bigg(\bigg( \sum_{v \in C} c_{v+e} (-1)^{v \cdot e'} \bigg) \bigg(\frac{(-1)^{e \cdot e'}}{\sqrt{2^n}}\sum_{z \in C^\perp} \ket{z+e'} \bigg)\bigg)\\
    &=0
\end{align*}
where $e' \in \mathcal{E}_Z$.

Then $V$ is a sum of projectors, $V=\sum_{(e,e') \in \mathcal{E}}\ket{C_{e,e'}}\bra{C_{e,e'}}$, which is itself a projector. Indeed, $V$ is a projector if and only if $\ket{C_{e,e'}}$ and $\ket{C_{s,s'}}$ are orthogonal whenever \mbox{$(e, e') \neq (s, s')$}.

\begin{align*}
    \braket{C_{e,e'} | C_{s,s'}} = \bigg(\frac{1}{\sqrt{|C|}}\sum_{v \in C} (-1)^{v\cdot e'} \bra{v+e}, \frac{1}{\sqrt{|C|}}\sum_{z \in C} (-1)^{z\cdot s'} \ket{z+s} \bigg)
\end{align*}
The above expression is indeed equal to zero for a fixed $C \in \mathcal{W}$ and when $(e, e') \neq (s, s')$. To see this, first observe that for a fixed $v$, $\braket{v+e \, | \, z + s} =0$ unless $v+e=z+s$ for some $z \in C$, or rather, $z=v+e+s$, but this is impossible for $e \neq s$, since $v,z \in C$ and, because $d(C) \geq 2q+1$, we have $e,s, e+s \notin C$, and hence $z \in C$ while $v+e+s \notin C$. So, suppose that $e=s$ but, by the assumption $(e, e') \neq (s, s')$, we now have $e' \neq s'$. In the Fourier basis (apply the Hadamard transform) we get
\begin{align*}
    \bigg(\frac{(-1)^{e \cdot e'}}{\sqrt{|C^\perp|}}\sum_{v \in C^\perp} (-1)^{v\cdot e} \bra{v+e'}, \frac{(-1)^{e \cdot e'}}{\sqrt{|C^\perp|}}\sum_{z \in C^\perp} (-1)^{z\cdot s} \ket{z+s'} \bigg).
\end{align*}
Again, we get zero unless $z = v+e'+s'$ but $v,z \in C^\perp$ and, because $d(C^\perp) \geq 2q+1$, we have $e', s', e'+s' \notin C^\perp$ and hence $z \in C^\perp$ while $v + e' + s' \notin C^\perp$. Thus, $\ket{C_{e,e'}}$ and $\ket{C_{s,s'}}$ are orthogonal because their error vectors are distinct and because $C \in \mathcal{W}$.
\end{proof}

Now recall that for $q$ arbitrary errors, the error operator $E$ takes the form
\begin{equation}
    E = \sum_{(e,e') \in A} \alpha_{e,e'} X^e Z^{e'},
\end{equation}
where $A$ is some subset of $\mathcal{E}_q$ which contains at least one pair $(e,e')$ that corresponds to an error of weight $q$. Then our noisy money state is of the form
\begin{equation}
\label{eq:arbitrary-noise-money-state}
    E \ket{C} = \sum_{(e,e') \in A} \alpha_{e,e'} \ket{C_{e,e'}}.
\end{equation}
Now suppose we apply the projector $\mathbb{P}_C$ (the process is similar for $\mathbb{P}_{C^\perp}$). That is, we apply the oracle $U_{C_{\mathcal{E}_X}}$ controlled by $\ket{+}$,
\begin{align}
    CU_{C_{\mathcal{E}_X}}(E \ket{C}, \ket{+}) 
    &= \sum_{(e,e') \in A} \alpha_{e,e'} \ket{C_{e,e'}} \ket{-},\\
    &= \left(\sum_{(e,e') \in A} \alpha_{e,e'} \ket{C_{e,e'}}\right) \ket{-},
\end{align}
where the first equality follows by the fact that $\ket{C_{e,e'}} \in C_{\mathcal{E}_X}$ for each $(e,e') \in A$. We will observe $\ket{-}$ when we measure the control qubit, and thus we accept the state. In general, linearity and \cref{lemma:probability-of-acceptance} ensures that a state affected by $q$ arbitrary errors will always be accepted. Notably, this process does not cause a ``collapse" of the arbitrary error as in the typical error correction procedure (see \cref{section:the-stabilizer-formalism}).

We end this section with the following lemma, which we will need to prove \cref{theorem:lower-bound-average-case-counterfeiting}. The need for this lemma is a consequence of the $C \in \mathcal{W}$ condition.

\begin{lem}
\label{lemma:linear-isometry}
    Let $f:\mathbb{F}^n_2 \rightarrow \mathbb{F}^n_2$ be an invertible linear isometry. Then for any $C \in \mathcal{W}$, we have $f(C) \in \mathcal{W}$.
\end{lem}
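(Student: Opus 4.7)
The plan is to show the three defining conditions of $\mathcal{W}$ separately: that $f(C)$ is $n/2$-dimensional, that $d(f(C)) \geq 2q+1$, and that $d(f(C)^\perp) \geq 2q+1$. The first is immediate from linearity and invertibility of $f$, and the second is immediate from the fact that $f$ is a Hamming isometry (so weight, and hence minimum distance, is preserved). The only content is the third condition, which requires understanding how $f$ interacts with duality.

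The key structural observation I would use is that any invertible $\mathbb{F}_2$-linear map $f : \mathbb{F}_2^n \to \mathbb{F}_2^n$ that preserves Hamming weight must be a permutation of the coordinates. Indeed, each standard basis vector $e_i$ has weight $1$, so $f(e_i)$ also has weight $1$, forcing $f(e_i) = e_{\sigma(i)}$ for some map $\sigma : [n] \to [n]$; invertibility of $f$ forces $\sigma$ to be a permutation, so $f$ is represented by a permutation matrix $P_\sigma$.

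Once $f$ is known to be a coordinate permutation, I would show that $f(C^\perp) = f(C)^\perp$. The inclusion $f(C^\perp) \subseteq f(C)^\perp$ follows because permutation matrices preserve the standard bilinear form, i.e.\ $\langle f(u), f(v) \rangle = \langle u, v\rangle$: if $w \in C^\perp$ then $\langle f(w), f(c)\rangle = \langle w, c\rangle = 0$ for every $c \in C$, so $f(w) \in f(C)^\perp$. Equality then follows by a dimension count, since $\dim f(C^\perp) = \dim C^\perp = n/2 = \dim f(C)^\perp$. Combining this with the isometry property gives $d(f(C)^\perp) = d(f(C^\perp)) = d(C^\perp) \geq 2q+1$, completing the proof.

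The only step I would flag as a potential pitfall, rather than a real obstacle, is justifying the structural claim that invertible linear isometries over $\mathbb{F}_2$ are coordinate permutations; this is a short but nontrivial observation, and without it one cannot conclude $f(C^\perp) = f(C)^\perp$ (a general invertible linear map need not preserve the inner product and hence need not commute with taking duals). Everything else is a one-line dimension or weight argument.
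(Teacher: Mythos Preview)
Your proposal is correct and in fact more complete than the paper's own proof. The paper argues only that an isometry preserves Hamming distance between codewords, so $d(f(C)) = d(C) \geq 2q+1$, and stops there; it never verifies the dual condition $d(f(C)^\perp) \geq 2q+1$, which is part of the definition of $\mathcal{W}$. Your structural observation---that an invertible $\mathbb{F}_2$-linear Hamming isometry must send each standard basis vector to another, hence is a permutation matrix and therefore satisfies $f(C)^\perp = f(C^\perp)$---is precisely what is needed to close that gap. (The paper does tacitly use this permutation fact later, in the proof of \cref{theorem:lower-bound-average-case-counterfeiting}, when it asserts that $f(e) \in \mathcal{E}_X$ because ``$f$ is an isometry and thus preserves the weight of $e$.'') So on the first two conditions your argument and the paper's coincide, and on the third you supply what the paper omits.
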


\begin{proof}
    A classical linear code with distance at least $2q+1$ is able to correct up to $q$ errors. Recall that the distance of a code $C$ is defined as
    \begin{equation*}
        d(C) \coloneqq \min_{x,y \in C, x \neq y} d(x,y)
    \end{equation*}
    where $d(x,y)$ is the Hamming distance between $x$ and $y$. Since $f$ is an isometry (\emph{i.e.}, distance-preserving map) we have $d(f(x), f(y)) = d(x,y)$ for all $x,y \in C$ and hence \newline\mbox{$d(C) = d(f(C))$}. Hence, $f(C)$ has the same distance and is able to correct the same number of errors.
\end{proof}

\subsubsection{The subspace approach}
\label{section:the-subspace-approach}

In this approach to verification, let $C \in \mathcal{W}$, where $\mathcal{W}$ is defined by \cref{eq:applicable-CSS-codes} (\emph{i.e.}, $C$ is an applicable CSS code, as described in \cref{section:direct-error-correction}). Then observe that an error syndrome of all zeros indicates a codeword that has been unaffected by noise (see \cref{remark:zero-syndrome}); put another way, the codeword belongs to the common $+1$ eigenspace of all the generators. If the error syndrome contains, for instance, two $1$'s, then we can interpret this as the banknote state belonging to the intersection of all $+1$ eigenspaces and two $-1$ eigenspaces. From this perspective, each valid syndrome corresponds to a particular intersection of eigenspaces. 

\begin{remark}
\label{remark:zero-syndrome}
    In general, an error syndrome of all zeros indicates that the codeword has been unaffected by noise \emph{or} affected by undetectable noise. To elaborate, let $L$ be a logical operator of a stabilizer code and let $\ket{\psi}$ be a codeword. Recall that because $L$ commutes with all elements of the stabilizer, $L\ket{\psi}$ will also produce a syndrome of all zeros (\emph{i.e.}, $L\ket{\psi}$ is still in the codespace). So, it is not necessarily true that a syndrome of all zeros correlates to a noiseless codeword. However, in the case of an applicable CSS code where there is only a single codeword (\emph{i.e.}, no logical operators), we can say that the all-zero syndrome implies a noiseless codeword.
\end{remark}

Formally, let $\mathfrak{S}$ denote the set of good syndromes (\emph{i.e.}, if $s \in \mathfrak{S}$, then the error that produced $s$ is correctable). Under the stabilizer formalism, every $s \in \mathfrak{S}$ corresponds to a particular intersection of eigenspaces of the generators. Denote such a subspace as $Q_s$, where $s \in \mathfrak{S}$ indexes the subspace. Then for each $s \in \mathfrak{S}$, we want to check membership in $Q_s$.

In the context of a CSS code, the syndrome is a concatenation $s=s^X \Vert s^Z$, where $s^X$ ($s^Z$) denotes the syndrome that arises from bit-flip (phase-flip) errors. So, for a given \mbox{$s \in \mathfrak{S}$}, instead of checking membership in the single subspace $Q_s$, one checks for membership in two subspaces, $Q_{s^X}$ and $Q_{s^Z}$. Let $\mathfrak{S}^X, \mathfrak{S}^Z$ denote, respectively, the set of syndromes that correspond to correctable bit-flip and phase-flip errors. Then let $U_{Q_{\mathfrak{S}^X}}, U_{Q_{\mathfrak{S}^Z}}$ denote the classical oracles that check for membership in the subspaces $Q_{\mathfrak{S}^X}, Q_{\mathfrak{S}^Z}$, respectively, where these subspaces are defined as,
\begin{equation*}
    Q_{\mathfrak{S}^X} \coloneqq \bigcup_{s^X \in \mathfrak{S}^X} Q_{s^X}
    \quad \text{ and } \quad
    Q_{\mathfrak{S}^Z} \coloneqq \bigcup_{s^Z \in \mathfrak{S}^Z} Q^\mathsf{H}_{s^Z}.
\end{equation*}
Note that $Q^{\mathsf{H}}_{s^Z}$ denotes $Q_{s^Z}$ in the Hadamard basis. In a similar manner to the subset approach, the first projector $\mathbb{P}_{C}$ is constructed as follows:
\begin{enumerate}
    \item 
        For the oracle $U_{Q_{\mathfrak{S}^X}}$, we introduce a control qubit $\ket{+}$ and apply the classical oracle controlled on $\ket{+}$.
    \item 
        We then measure the control qubit and accept if the outcome $\ket{-}$ is observed, and reject otherwise.
\end{enumerate}
The process for phase-flip errors is only slightly different: repeat the above steps using $Q_{\mathfrak{S}^Z}$ to construct the second projector $\mathbb{P}_{C^\perp}$. The final verification operator would then be given by $V = \mathsf{H}^{\otimes n} \mathbb{P}_{C^\perp} \mathsf{H}^{\otimes n} \mathbb{P}_{C}$. 

With this $V$, \cref{lemma:probability-of-acceptance} also holds, and for the same reasons as in the subset approach, the verification operator $V$ can be used to accept arbitrary errors. It is worth noting that if we do not use an applicable CSS code, then \cref{lemma:probability-of-acceptance} does not necessarily hold, since we can no longer guarantee that $V$ only accepts a codeword and its tolerated noisy variants. A codeword affected by a logical operator (\emph{i.e.}, a highly corrupted codeword) could be accepted, but this possibility is eliminated by using an applicable CSS code.

\begin{remark}
    The subset approach in \cref{section:the-subset-approach} can be viewed as a generalization of this subspace approach. To see this, we can compare $C_{\mathcal{E}_X}$, as defined in \cref{eq:subset-collections}, with $Q_{\mathfrak{S}^X}$. The set $C_{\mathcal{E}_X}$ is a union of subsets where each subset in the union corresponds to a tolerated bit-flip error; likewise, each subspace in the union $Q_{\mathfrak{S}^X}$ is also a subset which corresponds to tolerating a bit-flip error. Additionally, the verification operator $V$ is constructed in the same way for both cases and corresponds to a projector onto the same subspace. The subspace approach is then an interesting case that closely resembles error correction and offers the possibility of being instantiated in the plain model (see \cref{section:future-directions}).
\end{remark}

To end this section, we remark that the existence of applicable CSS codes, $C \in \mathcal{W}$, is guaranteed by the Gilbert-Varshamov bound for CSS codes, \cref{eq:GV-bound}. In our case, we have $k=0$, and so the bound becomes
\begin{equation}
\label{eq:GV-bound-applicable}
    0 \geq 1-2H\left(\frac{2q}{n}\right).
\end{equation}
If we now fix a value for $q$, we can find an appropriate $n$ such that \cref{eq:GV-bound-applicable} is satisfied. This relationship is visualized in \cref{fig:GV-bound}.

\begin{figure}[ht]
    \centering
    \includegraphics[scale=0.5]{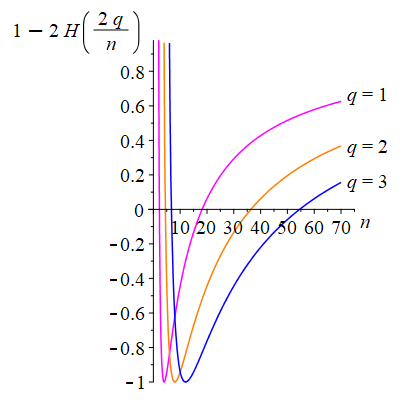}
    \caption{A plot of the right-hand-side of \cref{eq:GV-bound-applicable} for various choices of $q$. For a fixed choice of $q$, \cref{eq:GV-bound-applicable} guarantees the existence of an applicable CSS codes as long as $n$ is chosen such that the corresponding point on the plot is below the horizontal axis.}
    \label{fig:GV-bound}
\end{figure}

As an example of an applicable CSS code, consider the following $[6, 3]$ classical linear code $C$,
\begin{equation*}
    G_C = 
    \begin{pmatrix}
        1 & 0 & 0\\
        0 & 1 & 0\\
        0 & 0 & 1\\
        0 & 1 & 1\\
        1 & 1 & 0\\
        1 & 0 & 1
    \end{pmatrix},
    \quad
    H_C =
    \begin{pmatrix}
        0 & 1 & 1 & 1 & 0 & 0\\
        1 & 1 & 0 & 0 & 1 & 0\\
        1 & 0 & 1 & 0 & 0 & 1
    \end{pmatrix}.
\end{equation*}
Since the distance of this code is $d=3$ and the inequality $d \geq 2q+1$ is satisfied for $q=1$, this code can correct a single error. The dual code $C^\perp$ has the following generator and parity check matrix,
\begin{equation*}
    G_{C^\perp} = H_C^T =
    \begin{pmatrix}
        0 & 1 & 1\\
        1 & 1 & 0\\
        1 & 0 & 1\\
        1 & 0 & 0\\
        0 & 1 & 0\\
        0 & 0 & 1
    \end{pmatrix},
    \quad
    H_{C^\perp} = G_C^T =
    \begin{pmatrix}
        1 & 0 & 0 & 0 & 1 & 1\\
        0 & 1 & 0 & 1 & 1 & 0\\
        0 & 0 & 1 & 1 & 0 & 1
    \end{pmatrix}.
\end{equation*}
The $C^\perp$ code also has a distance of $d=3$ and can thus also correct a single error. Hence, as a CSS code with $C_1, C_2 = C$ we can correct a single arbitrary error. The single codeword of this CSS code is given by
\begin{align*}
    \ket{C} = \frac{1}{\sqrt{8}} 
    \big(
    &\ket{000000}+
    \ket{100011}+
    \ket{010110}+
    \ket{001101}+\\
    &\ket{110101}+
    \ket{101110}+
    \ket{011011}+
    \ket{111000}
    \big).
\end{align*}

\begin{remark}
    Although the existence of applicable CSS codes is easily determined (with the Gilbert-Varshamov bound, \cref{eq:GV-bound-applicable}), explicitly constructing them for arbitrary $n$ and $q$ could pose a difficult challenge. It would be useful, for implementation purposes, to have a method for constructing such codes, though we are unaware of any such techniques. 
\end{remark}


\section{Security}
\label{section:security}

To prove security, we follow the same approach of \cite{AC12}, changing the proofs as necessary. To start, we look at one of the crucial components from \cite{AC12} that was used for proving security: the amplification of counterfeiters theorem. Intuitively, the theorem shows that a counterfeiter that can copy a banknote with small success can be used to copy a banknote almost perfectly; hence, it suffices to prove security against counterfeiters that can copy almost perfectly. The analogous statement for noisy-projective mini-schemes is given as \cref{theorem:amplify-counterfeiters} below. The proof of this theorem in \cite{AC12} essentially starts with the following equality
\begin{equation}
\label{equation:AC12-prob-acceptance}
    \Pr[V(\rho) = \text{accepts}] = F(\rho, S)^2,
\end{equation}
where $F(\rho,S)$ is the fidelity of $\rho$ with the subspace $S$ that the verification algorithm $V$ projects to. Unlike \cite{AC12}, our space $S$ is not 1-dimensional (see \cref{lemma:probability-of-acceptance}), though we can still rely on the proof in \cite{AC12} to get the same complexity bound. However, we will later see how having a ``larger" $S$ will lead to a trade-off between security and noise-tolerance. 

\begin{theorem}[Amplification of counterfeiters]
\label{theorem:amplify-counterfeiters}
    Let $\mathcal{M}=\left(\textsf{Bank, Ver}\right)$ be a noisy-projective mini-scheme, let $S$ be the subspace that the verification operator $V$ projects to, and let \mbox{$\$ = (s,\rho)$} be a valid banknote (recall \cref{defn:valid-banknotes-for-a-noise-tolerant-scheme}). Suppose there exists a counterfeiter $\ccC$ that produces a counterfeit state $\ccC(\$)$ such that
    \begin{equation*}
        F(\ccC(\$), S^{\otimes 2}) \geq \sqrt{\varepsilon}
    \end{equation*}
    and hence
    \begin{equation*}
        \Pr[V^{\otimes 2}(\ccC(\$)) = \text{accepts}] \geq \varepsilon.
    \end{equation*}
    Then for all $\delta > 0$, there is also a modified counterfeiter $\ccC'$ (depending only on $\varepsilon$ and $\delta$, not $\$$), which makes
    \begin{equation}
    \label{eq:amplified-query-complexity-bound}
        \mathcal{O}\left(\frac{\log 1/\delta}{\sqrt{\varepsilon}(\sqrt{\varepsilon} + \delta^2)}\right)
    \end{equation}
    queries to $\ccC, \ccC^{-1}$, and $V$ and which satisfies
    \begin{equation*}
        F(\ccC'(\$), S^{\otimes 2}) \geq 1-\delta
    \end{equation*}
    which implies
    \begin{equation*}
         \Pr[V^{\otimes 2}(\ccC'(\$)) = \text{accepts}] \geq (1-\delta)^2.
    \end{equation*}
\end{theorem}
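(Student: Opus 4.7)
The plan is to follow the amplification argument of \cite{AC12} essentially verbatim, with the one substantive change being that wherever their argument invokes projection onto the one-dimensional subspace spanned by the unique honest banknote, we substitute projection onto the higher-dimensional subspace $S$ spanned by the banknote and its tolerated noisy variants. The overall strategy is to apply fixed-point quantum amplitude amplification to the state produced by $\ccC$, using $V^{\otimes 2}$ (together with $\ccC$ and $\ccC^{-1}$) to implement the two reflections required.

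The first step is to replace the one-dimensional identity $\Pr[V(\rho)=\text{accepts}] = F(\rho,\ket{\psi_s})^2$ used in \cite{AC12} by its subspace analogue. Since $V$ is a projector onto $S$ by \cref{lemma:probability-of-acceptance}, $V^{\otimes 2}$ is a projector onto $S^{\otimes 2}$, and a direct calculation using \cref{eq:fidelity-with-subspace} together with Uhlmann's theorem gives
\begin{equation*}
    \Pr[V^{\otimes 2}(\rho)=\text{accepts}] = \tr(V^{\otimes 2}\rho) = F(\rho, S^{\otimes 2})^2.
\end{equation*}
This is the only place the dimension of $S$ would have been an issue; once this identity is in hand, the remainder of the AC12 analysis treats the ``marked'' subspace abstractly through the projector $V^{\otimes 2}$, so it carries through unchanged.

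Next, I would set up amplitude amplification. Consider a purification of $\ccC(\$)$ obtained by running $\ccC$ on a purification of $\$$ together with fresh ancillas; by Uhlmann's theorem the pure state produced has squared overlap at least $\varepsilon$ with $S^{\otimes 2}$. The reflection $R_{S^{\otimes 2}} = I - 2V^{\otimes 2}$ about the good subspace costs $O(1)$ queries to $V$, and the reflection about the initial state $\ccC(\$)$ is implemented in the usual way by $\ccC\cdot R_0 \cdot \ccC^{-1}$, where $R_0$ reflects about the input to $\ccC$. Feeding these two reflections into a fixed-point amplification scheme (e.g., Grover's $\pi/3$ recursion or the Yoder–Low–Chuang scheme, as used by \cite{AC12}) yields a circuit $\ccC'$ whose output has fidelity at least $1-\delta$ with $S^{\otimes 2}$, which in turn implies $\Pr[V^{\otimes 2}(\ccC'(\$))=\text{accepts}]\geq(1-\delta)^2$ by applying the same fidelity identity once more. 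The query-complexity accounting of the chosen fixed-point scheme, identical to that in \cite{AC12}, produces the bound in \cref{eq:amplified-query-complexity-bound}.

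The main obstacle I anticipate is not really in the amplification step itself but in verifying carefully that moving from a $1$-dimensional target to a higher-dimensional one does not break any intermediate estimate in the AC12 proof; in particular, I would need to check that the $F(\rho,S)$ quantity behaves correctly under purifications (which is exactly what \cref{eq:fidelity-with-subspace} together with Uhlmann delivers) and that the reflection $I-2V^{\otimes 2}$ acts as the true reflection about $S^{\otimes 2}$ rather than about some specific vector. Once these compatibility checks are in place, the standard fixed-point analysis gives the stated query bound.
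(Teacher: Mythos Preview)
Your proposal is correct and follows essentially the same route as the paper: establish the identity $\Pr[V^{\otimes 2}(\rho)=\text{accepts}]=F(\rho,S^{\otimes 2})^2$ via the projector form of $V$, then invoke fixed-point Grover search with $\ccC(\$)$ as the initial state and $S^{\otimes 2}$ as the target, deferring the query-count analysis to \cite{AC12}. The paper's proof is slightly terser about the implementation of the two reflections, but the argument is otherwise the same.
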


\begin{proof}
    We first note that
    \begin{align*}
        \Pr[V(\rho) = \text{accepts}] &= \tr(V \rho)\\
        &= \tr\left(\sum_{(e, e') \in \mathcal{E}} \ket{C_{e,e'}} \bra{C_{e,e'}} \rho \right)\\
        &= \sum_{(e, e') \in \mathcal{E}} \tr\left(\ket{C_{e,e'}} \bra{C_{e,e'}} \rho \right)\\
        &= \sum_{(e,e') \in \mathcal{E}} \bra{C_{e,e'}} \rho \ket{C_{e,e'}}\\
        &= F(\rho, S)^2.
    \end{align*}
    Then, for the case of verifying two banknotes supplied by the counterfeiter $\ccC$, we have,
    \begin{align*}
        \Pr[V^{\otimes 2}(\ccC(\rho)) = \text{accepts}]
        &= \sum_{(e,e'), (t,t') \in \mathcal{E}} \bra{C_{e,e'}} \otimes \bra{C_{t,t'}} \ccC(\rho) \ket{C_{e,e'}} \otimes \ket{C_{t,t'}}\\
        &= \sum_{(e,e'), (t,t') \in \mathcal{E}} F(\ccC(\rho), \ket{C_{e,e'}} \otimes \ket{C_{t,t'}})^2\\
        &= F(\ccC(\rho), S^{\otimes 2})^2
    \end{align*}
    By assumption of the theorem,
    \begin{equation*}
        F(\ccC(\rho), S^{\otimes 2}) \geq \sqrt{\varepsilon}
    \end{equation*}
    and thus we indeed have
    \begin{equation*}
        \Pr[V^{\otimes 2}(\ccC(\rho)) = \text{accepts}] \geq \varepsilon.
    \end{equation*}
    Put simply, one can now apply a fixed-point Grover search with $\ccC(\rho)$ as the initial state and $S^{\otimes 2}$ as the target subspace to produce a state $\rho'$ such that $F(\rho', S^{\otimes 2}) \geq 1-\delta$. As shown in \cite{AC12} (see Section 2.3 and the proof of Theorem 13 for details), this can be done in
    \begin{equation}
    \label{eq:query-bound-in-proof}
        \mathcal{O}\left(\frac{\log 1/\delta}{\sqrt{\varepsilon}(\sqrt{\varepsilon} + \delta^2)}\right)
    \end{equation}
    Grover iterations. Each iteration can be implemented using $\mathcal{O}(1)$ queries to $\ccC, \ccC^{-1}$, and $V$. Thus, the total number of queries to $\ccC, \ccC^{-1}$, and $V$ is also given by \cref{eq:query-bound-in-proof}. Then the probability of $\rho'$ being accepted is
    \begin{align*}
        \Pr[V^{\otimes 2}(\rho') = \text{accepts}] &= \sum_{(e,e'), (t,t') \in \mathcal{E}} F(\rho', \ket{C_{e,e'}} \otimes \ket{C_{t,t'}})^2\\
        &= F(\rho', S^{\otimes 2})^2\\
        &\geq (1-\delta)^2. \qedhere
    \end{align*}
\end{proof}

\subsection{The inner-product adversary method}
\label{section:inner-product-adversary-method}

Put simply, proving security amounts to proving that a counterfeiter, who possesses a copy of $\ket{\psi}$, needs to make exponentially many queries to the verification algorithm to prepare $\ket{\psi_e}\ket{\psi_t}$, where $\ket{\psi_e}$ and $\ket{\psi_t}$ are possibly different noisy copies of $\ket{\psi}$. For simplicity, let us forget about noise-tolerance so that the counterfeiter is now trying to prepare $\ket{\psi}\ket{\psi}$. Suppose that $\ket{\psi}$ and $\ket{\phi}$ are two possible quantum money states. Consider a counterfeiting algorithm $\ccC$ that is executed in parallel to clone $\ket{\psi}$ and $\ket{\phi}$. If, say, $\braket{\psi | \phi} = 1/2$, and if the counterfeiting algorithm $\ccC$ succeeds perfectly (\emph{i.e.}, it maps $\ket{\psi}$ to 
$\ket{\psi}^{\otimes 2}$ and $\ket{\phi}$ to $\ket{\phi}^{\otimes 2}$), then $\bra{\psi}^{\otimes 2}\ket{\phi}^{\otimes 2} = 1/4$, meaning that $\ccC$ must decrease the inner product by at least 1/4. But the analysis of \cite{AC12} shows that the \emph{average} inner product can decrease by at most $1/\textsf{exp}(n)$ from a single query, which results in $2^{\Omega(n)}$ queries being needed. 

This analysis is done using the inner-product adversary method developed in \cite{AC12} (which is an adaptation of the quantum adversary method in \cite{Amb02}). Let us now recall the basic ideas of the method (for further details, see \cite{AC12}). The idea is to get an upper bound on the \emph{progress} a quantum algorithm $\ccC$ can make at distinguishing pairs of oracles, as a result of a single query. To make this a bit more formal, let $\ket{\Psi^U_t}$ be $\ccC$'s state after $t$ queries. For now, assume that $\ket{\Psi^U_0}=\ket{\Psi^V_0}$ for all oracles $U$ and $V$. After the final query $T$, we must have, say, $|\braket{\Psi^U_T | \Psi^V_T}| \leq 1/2$. Thus, if we are able to show that $|\braket{\Psi^U_T | \Psi^V_T}|$ can decrease by at most $\varepsilon$ as the result of a single query, then it follows that $\ccC$ must make $\Omega(1/\varepsilon)$ queries. 

The difficulty, overcome in \cite{AC12}, is that in the quantum money framework, $\ccC$ starts with a valid banknote and hence it is not generally true that $\ket{\Psi^U_0}=\ket{\Psi^V_0}$. This starting point is advantageous for $\ccC$ since it can allow them to do much better in decreasing $|\braket{\Psi^U_T | \Psi^V_T}|$ as a result of a single query. The solution in \cite{AC12} is to choose a distribution $\mathcal{D}$ over oracle pairs $(U,V)$ and then analyze the expected inner product
\begin{equation*}
    \mathop{\mathbb{E}}_{(U,V) \sim \mathcal{D}} \left[|\braket{\Psi^U_t | \Psi^V_t}|\right]
\end{equation*}
and how much it can decrease as the result of a single query to $U$ or $V$. Analysis shows that, although $\ccC$ can succeed in significantly reducing $|\braket{\Psi^U_T | \Psi^V_T}|$ for some oracle pairs $(U,V)$, they cannot do so for most pairs.

\subsubsection{Formal description}

This section, which gives the formal description of the inner-product adversary method, is taken almost directly from \cite{AC12}. Let $\mathcal{O}$ denote a set of quantum oracles acting on $n$ qubits each. For each $U \in \mathcal{O}$, assume there exists a subspace $S_U \subseteq \mathbb{C}^{2n}$ such that
\begin{enumerate}[(i)]
    \item $U \ket{\psi} = -\ket{\psi}$ for all $\ket{\psi} \in S_U$
    \item $U \ket{\eta}$ for all $\ket{\eta} \in S^\perp_U$.
\end{enumerate}
Let $R \subset \mathcal{O} \times \mathcal{O}$ be a symmetric binary relation on $\mathcal{O}$, with the properties that
\begin{enumerate}[(i)]
    \item $(U,U) \notin R$ for all $U \in \mathcal{O}$, and
    \item for every $U \in \mathcal{O}$ there exists a $V \in \mathcal{O}$ such that $(U,V) \in R$.
\end{enumerate}
Suppose that for all $U \in \mathcal{O}$ and all $\ket{\eta} \in S^\perp_U$, we have
\begin{equation*}
    \mathop{\mathbb{E}}_{V : (U,V) \in R} \Big[F(\ket{\eta}, S_V)^2\Big] \leq \varepsilon,
\end{equation*}
where $F(\ket{\eta}, S_V) = \max_{\ket{\psi} \in S_V} |\braket{\eta|\psi}|$ is the fidelity between $\ket{\eta}$ and $S_V$. Let $\ccC$ be a quantum oracle algorithm, and let $\ccC^U$ denote $\ccC$ run with the oracle $U \in \mathcal{O}$. Suppose $\ccC^U$ begins in the state $\ket{\Psi^U_0}$ (possibly dependent on $U$). Let $\ket{\Psi^U_t}$ denote the state of $\ccC^U$ immediately after the $t^{th}$ query. Also, define a \emph{progress measure} $p_t$ by
\begin{equation*}
    p_t \coloneqq \mathop{\mathbb{E}}_{V : (U,V) \in R} \Big[|\braket{\Psi^U_t | \Psi^V_t}|\Big].
\end{equation*}

The following lemma bounds how much $p_t$ can decrease as the result of a single query. See \cite{AC12} for a proof.

\begin{lem}[Bound on progress rate \cite{AC12}]
\label{lemma:bound-on-progress-rate}
    \begin{equation*}
        p_t \geq p_{t-1} - 4\sqrt{\varepsilon}.
    \end{equation*}
\end{lem}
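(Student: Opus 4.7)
The plan is to reduce the bound to analyzing a single application of the oracle and then to apply the method's average fidelity hypothesis via Cauchy--Schwarz. Any unitaries that $\ccC$ applies between consecutive queries are oracle-independent and hence preserve every inner product $\braket{\Psi^U_{t-1}|\Psi^V_{t-1}}$, so without loss of generality I can assume such pre-query unitaries have already been absorbed into $\ket{\Psi^U_{t-1}}$, leaving $\ket{\Psi^U_t} = U\ket{\Psi^U_{t-1}}$.

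For each oracle $W \in \{U,V\}$ I would decompose the pre-query state orthogonally as $\ket{\Psi^W_{t-1}} = \ket{\psi_W} + \ket{\eta_W}$ with $\ket{\psi_W} \in S_W$ and $\ket{\eta_W} \in S_W^\perp$, so that the defining properties of $W$ in the setup give $W\ket{\Psi^W_{t-1}} = -\ket{\psi_W} + \ket{\eta_W} = \ket{\Psi^W_{t-1}} - 2\ket{\psi_W}$. Expanding
\begin{equation*}
\braket{\Psi^U_t | \Psi^V_t} = \braket{\Psi^U_{t-1} - 2\psi_U \,|\, \Psi^V_{t-1} - 2\psi_V}
\end{equation*}
and substituting $\braket{\Psi^U_{t-1}|\psi_V} = \braket{\psi_U|\psi_V} + \braket{\eta_U|\psi_V}$ together with its symmetric counterpart, the four $\braket{\psi_U|\psi_V}$ contributions collapse and leave the key identity
\begin{equation*}
\braket{\Psi^U_t | \Psi^V_t} = \braket{\Psi^U_{t-1} | \Psi^V_{t-1}} - 2\braket{\eta_U|\psi_V} - 2\braket{\psi_U|\eta_V}.
\end{equation*}
By the reverse triangle inequality, $|\braket{\Psi^U_t|\Psi^V_t}| \geq |\braket{\Psi^U_{t-1}|\Psi^V_{t-1}}| - 2|\braket{\eta_U|\psi_V}| - 2|\braket{\psi_U|\eta_V}|$, and this inequality survives taking expectation over $V$ with $(U,V) \in R$.

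The last step is to bound each of the two cross expectations by $\sqrt\varepsilon$. Writing $\hat\eta_U = \ket{\eta_U}/\|\eta_U\|$, the unit vector $\hat\eta_U$ lies in $S_U^\perp$, and using $\|\eta_U\|, \|\psi_V\| \leq 1$ together with the definition of fidelity with a subspace (\cref{eq:fidelity-with-subspace}) I get $|\braket{\eta_U|\psi_V}|^2 \leq F(\hat\eta_U, S_V)^2$. The hypothesis of the method bounds the expectation of the right-hand side by $\varepsilon$, and then Jensen's inequality applied to the concave square root gives $\mathbb{E}_V[|\braket{\eta_U|\psi_V}|] \leq \sqrt\varepsilon$. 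The symmetric term $\mathbb{E}_V[|\braket{\psi_U|\eta_V}|]$ is handled identically, and combining everything yields $p_t \geq p_{t-1} - 4\sqrt\varepsilon$. The only real subtlety is the bookkeeping that forces the $\braket{\psi_U|\psi_V}$ contributions to cancel; securing that cancellation is what enables the clean $\sqrt\varepsilon$ scaling rather than some weaker bound that would fail to deliver the desired $\Omega(1/\sqrt\varepsilon)$ query lower bound.
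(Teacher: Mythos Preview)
Your sketch is correct and is precisely the standard argument from \cite{AC12}; the paper does not reproduce a proof of this lemma at all but simply cites \cite{AC12}, so there is nothing to compare against beyond confirming that what you wrote matches the original source. One small point worth making explicit when you write it up: bounding the second cross term $\mathbb{E}\big[|\braket{\psi_U|\eta_V}|\big]$ ``identically'' relies on the symmetry of $R$ (so that the hypothesis can be applied with the roles of $U$ and $V$ swapped) and on the progress measure being an expectation over ordered pairs $(U,V)\in R$ rather than over $V$ alone for a fixed $U$; both are part of the setup, but your current wording leaves this implicit.
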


As a result of \cref{lemma:bound-on-progress-rate}, we have
\begin{lem}[Inner-product adversary method \cite{AC12}]
\label{lemma:inner-product-adversary-method}
    Suppose that we initially have \mbox{$|\braket{\Psi^U_0 | \Psi^V_0}| \geq c$} for $(U,V) \in R$, whereas by the end we need $|\braket{\Psi^U_T | \Psi^V_T}| \leq d$ for all $(U,V) \in R$. Then $\ccC$ must make 
    \begin{equation*}
        T = \Omega\bigg(\frac{c-d}{\sqrt{\varepsilon}}\bigg)
    \end{equation*}
    oracle queries.
\end{lem}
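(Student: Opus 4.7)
The plan is to recognize this lemma as a direct corollary of \cref{lemma:bound-on-progress-rate}, obtained by iterating the per-query progress bound and then sandwiching with the hypotheses on the initial and final states. Since all the hard work about how amplitudes can shift under a single oracle query is encapsulated in the $4\sqrt{\varepsilon}$ progress bound, the remaining argument is just a counting/averaging step over the relation $R$.

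First I would translate the hypotheses of the lemma into statements about the progress measure $p_t = \mathop{\mathbb{E}}_{V:(U,V)\in R}\left[|\braket{\Psi^U_t | \Psi^V_t}|\right]$. Since $|\braket{\Psi^U_0|\Psi^V_0}| \geq c$ holds for every pair $(U,V) \in R$, taking the expectation preserves the inequality, so $p_0 \geq c$. Similarly, by the endpoint hypothesis $|\braket{\Psi^U_T|\Psi^V_T}| \leq d$ uniformly over $R$, so $p_T \leq d$. Next I would iterate \cref{lemma:bound-on-progress-rate} over the $T$ queries, which yields
\begin{equation*}
    p_T \;\geq\; p_{T-1} - 4\sqrt{\varepsilon} \;\geq\; p_{T-2} - 8\sqrt{\varepsilon} \;\geq\; \cdots \;\geq\; p_0 - 4T\sqrt{\varepsilon}.
\end{equation*}

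Combining the two endpoint bounds with this telescoped inequality gives
\begin{equation*}
    c - d \;\leq\; p_0 - p_T \;\leq\; 4T\sqrt{\varepsilon},
\end{equation*}
and rearranging yields $T \geq (c-d)/(4\sqrt{\varepsilon}) = \Omega((c-d)/\sqrt{\varepsilon})$, as claimed. One mild technicality worth flagging is that the statement only makes sense when $c > d$; otherwise the bound is vacuous and there is nothing to prove.

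There is essentially no obstacle here — the real content of the adversary method is already inside \cref{lemma:bound-on-progress-rate}, whose proof in \cite{AC12} is where the geometry of the subspaces $S_U$ and the $\varepsilon$-fidelity hypothesis are actually exploited. The only subtle points in the present proof are to make sure that the expectation over $R$ respects the pointwise inequalities on both ends (which is immediate from monotonicity of expectation) and that the telescoping is valid for all $T \geq 1$ (which is just induction on $t$). Thus the proof is only a few lines once \cref{lemma:bound-on-progress-rate} is in hand.
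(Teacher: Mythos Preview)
Your proposal is correct and matches the paper's approach: the paper simply states that \cref{lemma:inner-product-adversary-method} follows ``as a result of \cref{lemma:bound-on-progress-rate}'' and cites \cite{AC12} without writing out the telescoping argument, which is exactly what you have supplied. The only content is the iteration of the per-query bound and the sandwiching by $c$ and $d$, and you have done this correctly.
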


\subsubsection{Applying the method}
\label{section:applying-the-method}

For much of the security proof, we suppose that the counterfeiter only has access to the oracles $(U_{C_{\mathcal{E}_X}}, U_{C_{\mathcal{E}_Z}})$ as opposed to the oracles in the noisy-projective mini-scheme specified in \cref{section:formal-specification}. The generalization to the full scheme will be done later in \cref{theorem:security-mini-scheme}.

Our starting point for the analysis will be to think of the pair of oracles $(U_{C_{\mathcal{E}_X}}, U_{C_{\mathcal{E}_Z}})$ as a single oracle. In the case of \cite{AC12}, they have the pair of oracles $(U_A, U_{A^\perp})$ and the corresponding single oracle is constructed in the following way. First, they consider a subset $A^* \subset \{0,1\}^{n+1}$ defined by
\begin{equation*}
    A^* \coloneqq (0,A) \cup (1, A^\perp).
\end{equation*}
Then they let $S_{A^*}$ denote the subspace of $\mathbb{C}^{2n+1}$ that is spanned by basis states $\ket{x}$ such that $x \in A^*$. This allows their collection of oracles $(U_A, U_{A^\perp})$ to be thought of as a single oracle $U_{A^*}$ which satisfies
\begin{equation*}
    U_{A^*}\ket{\psi}
    =
    \begin{cases}
        -\ket{\psi}, \quad &\text{if} \,\, \ket{\psi} \in S_{A^*}\\
        \ket{\psi}, \quad &\text{if} \,\, \ket{\psi} \in S^\perp_{A^*}
    \end{cases}.
\end{equation*}

To generalize this to our setting, first suppose that our scheme only tolerates the errors $\mathcal{E} = \{(e,e') , (t,t')\}$. Then the subset $C^* \subset \{0,1\}^{n+2}$, defined by
\begin{equation}
\label{eq:enlarged-bit-flips-subspace-example}
    C^* \coloneqq (00, C+e) \cup (01, C^\perp+e') \cup (10, C+t) \cup (11, C^\perp+t'),
\end{equation}
would be a generalization of $A^*$. Let us also introduce the following notation which naturally generalizes for a larger set of error vectors,
\begin{align*}
    C^*_{\mathcal{E}_X} &\coloneqq (00, C+e) \cup (10, C+t)\\
    C^*_{\mathcal{E}_Z} &\coloneqq (01, C^\perp+e') \cup (11, C^\perp+t').
\end{align*}
In general, every error vector from $\mathcal{E}_X$ and $\mathcal{E}_Z$ corresponds to a term in the union that defines $C^*$. So, we would have $2|\mathcal{E}_X|$ (\emph{i.e.}, $|\mathcal{E}_X|+|\mathcal{E}_Z|$) terms in the union, meaning $C^* \subset \{0,1\}^{n+k}$ where $k$ is such that
\begin{equation}
\label{equation:k-for-C*}
    2^k = 2|\mathcal{E}_X|
    \Leftrightarrow
    k = 1 + \log(|\mathcal{E}_X|)
\end{equation}
Now let $S_{C^*}$ be the subspace of $\mathbb{C}^{2^{n+k}}$ that is spanned by basis states $\ket{x}$ such that $x \in C^*$. Then our pair of oracles $(U_{C_{\mathcal{E}_X}}, U_{C_{\mathcal{E}_Z}})$ corresponds to the single oracle $U_{C^*}$ which satisfies
\begin{equation}
\label{eqn:single-classical-oracle}
    U_{C^*}\ket{\psi}
    =
    \begin{cases}
        -\ket{\psi}, \quad &\text{if} \,\, \ket{\psi} \in S_{C^*}\\
        \ket{\psi}, \quad &\text{if} \,\, \ket{\psi} \in S^\perp_{C^*}
    \end{cases}.
\end{equation}
A notable consequence of this generalization to the noisy setting is that a single query to $U_{C_{\mathcal{E}_X}}$ equates to $|\mathcal{E}_X|$ queries to $U_{C^*}$, and likewise for $U_{C_{\mathcal{E}_Z}}$. To see this, observe the example of $C^*$ in \cref{eq:enlarged-bit-flips-subspace-example} where a query to $U_{C_{\mathcal{E}_X}}$ is achieved by preparing ancilla qubits set to $\ket{00}$ and $\ket{10}$ and then querying $U_{C^*}$ with each of these ancilla qubits. For security, this means that the lower bound for the number of queries to $(U_{C_{\mathcal{E}_X}}, U_{C_{\mathcal{E}_Z}})$ is the lower bound for queries to $U_{C^*}$ divided by $2|\mathcal{E}_X|$. We return to this point in \cref{corollary:single-oracle-to-oracle-pair}.

The following theorem shows that, for the case of perfect counterfeiting, exponentially many queries are needed. In \cite{AC12}, perfect counterfeiting meant that the counterfeiter, who possessed a copy of $\ket{C}$, had to prepare $\ket{C} \otimes \ket{C}$. In our noise-tolerant setting, the counterfeiter has more flexibility by only needing to prepare noisy copies of $\ket{C}$. This difference adds some extra steps to the proof but the lower bound on the needed queries remains the same as the analogous result in \cite{AC12}.

\begin{theorem}[Lower bound for perfect counterfeiting]
\label{theorem:lower-bound-perfect-counterfeit}
    Given one copy of $\ket{C}$, as well as oracle access to $U_{C^*}$, a counterfeiter needs $\Omega\left(2^{n/4}\right)$ queries to prepare $\ket{C_{e,e'}} \otimes \ket{C_{t,t'}}$, where $(e,e'), (t,t') \in \mathcal{E}$, with certainty (for a worst-case $\ket{C}$).    
\end{theorem}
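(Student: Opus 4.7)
The plan is to apply the inner-product adversary method (\cref{lemma:inner-product-adversary-method}) to the single-oracle formulation $U_{C^*}$ from \cref{section:applying-the-method}, following the template of the analogous result in \cite{AC12} and adapting the analysis to our enlarged distinguishing subspace $S_{C^*}$ and enlarged ``accept'' subspace at the end. I take $\mathcal{O} = \{U_{C^*} : C \in \mathcal{W}\}$ and define a symmetric relation $R$ by declaring $(U_{C^*}, U_{D^*}) \in R$ iff $D = f(C)$ for some invertible linear isometry $f: \mathbb{F}_2^n \to \mathbb{F}_2^n$ that changes exactly one vector of a basis of $C$, so that $\dim(C \cap D) = n/2 - 1$. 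By \cref{lemma:linear-isometry}, $D \in \mathcal{W}$, so $R$ is nonempty; a direct count gives $|\braket{C|D}| = |C \cap D|/\sqrt{|C||D|} = 1/2$ uniformly over $R$-neighbors of $C$.

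The initial progress measure is $p_0 = |\braket{\Psi_0^{C^*} | \Psi_0^{D^*}}| = |\braket{C|D}| = 1/2$, since the counterfeiter's initial state depends only on the input banknote. If after $T$ queries the counterfeiter succeeds with certainty, then $\ket{\Psi_T^{C^*}}$ lies in $S^{\text{out}}_C := \mathrm{Span}\{\ket{C_{e,e'}} \otimes \ket{C_{t,t'}}\}_{(e,e'),(t,t') \in \mathcal{E}}$, and analogously for $D$, so $p_T$ is at most the principal-angle fidelity $\max_{\ket{\psi} \in S^{\text{out}}_C,\, \ket{\phi} \in S^{\text{out}}_D} |\braket{\psi | \phi}|$. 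Since $\{\ket{C_{e,e'}} \otimes \ket{C_{t,t'}}\}$ is orthonormal by \cref{lemma:probability-of-acceptance}, this fidelity equals $\sigma_{\max}(N)^2$, where $N_{a,a'} = \braket{C_a | D_{a'}}$ for $a, a' \in \mathcal{E}$; a direct calculation gives $\braket{C_{e,e'} | D_{s,s'}} = \pm (1/2)\,\mathds{1}_{\{e+s \in C + D\}}\,\mathds{1}_{\{e'+s' \in (C \cap D)^\perp\}}$. Controlling $\sigma_{\max}(N)$ via this shifted subgroup structure together with the sparsity of $\mathcal{E}$ yields $\sigma_{\max}(N)^2 \leq 1/4 + o(1)$, hence $c - d = \Omega(1)$. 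The per-query progress bound $\varepsilon = O(2^{-n/2})$ follows from estimating $\mathbb{E}_{D:(C,D)\in R}[F(\ket{\eta}, S_{D^*})^2]$ for $\ket{\eta} \in S_{C^*}^\perp$ by a counting argument parallel to the analogous step in \cite{AC12}: the $2|\mathcal{E}_X|$ cosets comprising $S_{D^*}$ shift as $D$ varies over the $R$-neighbors of $C$, and the averaging spreads the projector uniformly over a subspace of appropriate dimension. Plugging these bounds into \cref{lemma:inner-product-adversary-method} gives $T = \Omega((c-d)/\sqrt{\varepsilon}) = \Omega(2^{n/4})$, as claimed.

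The main obstacle, compared to the 1-dimensional accept subspace in \cite{AC12} (where the subspace fidelity is simply $|\braket{C|D}|^2 = 1/4$), is controlling the principal-angle fidelity between our $|\mathcal{E}|^2$-dimensional accept subspaces: the counterfeiter's freedom to output any tolerable noisy pair enlarges the target, and we must ensure that the enlarged target for $D$ does not overlap too strongly with the enlarged target for $C$. This reduces to bounding $\sigma_{\max}(N)$, whose nonzero entries have modulus exactly $1/2$ and are supported on the shifted subgroup $(C+D) \times (C \cap D)^\perp$; the technical work is to exploit this convolutional structure together with the weight-$\leq q$ sparsity of $\mathcal{E}$ to keep $\sigma_{\max}(N)$ close to its minimum value $1/2$, which requires careful combinatorial accounting and ultimately determines the parameter regime in which the stated $\Omega(2^{n/4})$ bound holds.
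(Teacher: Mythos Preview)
Your proposal has a genuine gap at the final-state bound, and it stems from over-generalizing the target. The theorem requires the counterfeiter to output a \emph{specific} product state $\ket{C_{e,e'}} \otimes \ket{C_{t,t'}}$ (the indices may depend on $C$, and garbage may be appended), not an arbitrary vector in $S^{\text{out}}_C$. The paper exploits this directly: writing $\ket{f_C} = \ket{C_{e,e'}}\ket{C_{t,t'}}\ket{\text{garbage}_C}$ and likewise for $D$, one has
\[
    |\braket{f_C | f_D}| \;\leq\; |\braket{C_{e,e'}|D_{s,s'}}|\cdot|\braket{C_{t,t'}|D_{r,r'}}|,
\]
and each factor is at most $1/2$ because $(C+e)\cap(D+s)$ is either empty or a coset of $C\cap D$ of size $2^{n/2-1}$. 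Hence $d=1/4$ with no spectral analysis whatsoever. By contrast, your bound $\sigma_{\max}(N)^2\leq 1/4+o(1)$ is asserted but not proved; ``controlling $\sigma_{\max}(N)$ via this shifted subgroup structure together with the sparsity of $\mathcal{E}$'' is not an argument, and with nonzero entries of modulus $1/2$ supported on a potentially large set there is no obvious reason the top singular value stays near $1/2$ without a careful treatment of the phase cancellations. You have identified this as ``the main obstacle'' but have not actually overcome it, whereas the paper sidesteps it entirely.

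A secondary issue concerns your setup of $\mathcal{O}$ and $R$. The paper lets $\mathcal{O}$ range over \emph{all} $n/2$-dimensional subspaces, with $(U_{C^*},U_{D^*})\in R$ iff $\dim(C\cap D)=n/2-1$; this is what makes the $\varepsilon$-bound a clean counting argument (lines \eqref{first-line}--\eqref{ninth-line}). Your restriction to $C\in\mathcal{W}$ and to $D=f(C)$ for an ``invertible linear isometry $f$ that changes exactly one vector of a basis of $C$'' is neither needed here nor well-posed as stated: a linear isometry on $\mathbb{F}_2^n$ with the Hamming metric is a coordinate permutation, and ``changes exactly one basis vector'' is not a natural condition on permutations. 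Isometries and \cref{lemma:linear-isometry} enter the paper only later, in the average-case reduction (\cref{theorem:lower-bound-average-case-counterfeiting}), not in this worst-case lower bound.
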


\begin{proof}
    Let the set $\mathcal{O}$ contain $U_{C^*}$ for every possible subspace $C \subseteq \mathbb{F}^n_2$ such that \mbox{$\dim (C) = n/2$}. Let $(U_{C^*}, U_{D^*}) \in R$ if and only if $\dim(C \cap D) = n/2-1$. Then given $U_{C^*} \in \mathcal{O}$ and $\ket{\eta} \in S^\perp_{C^*}$, let
    \begin{equation*}
        \ket{\eta} = \sum_{x \in \{0,1\}^{n+k} \setminus C^*} \alpha_x \ket{x}.
    \end{equation*}
    We have
    {
    \allowdisplaybreaks
    \begin{align}
        \mathop{\mathbb{E}}_{U_{D^*} \, : \, (U_{C^*},U_{D^*}) \in R} \left[F(\ket{\eta}, S_{D^*})^2\right] 
        &= \mathop{\mathbb{E}}_{D \, : \, \dim(D)=n/2, \, \dim(C \cap D) = n/2-1} \left[\sum_{x \in D^* \setminus C^*} |\alpha_x|^2 \right] \label{first-line} \\
        &\leq \max_{x \in \{0,1\}^{n+k} \setminus C^*} \bigg( \Pr_{D \, : \, \dim(D)=n/2, \, \dim(C \cap D) = n/2-1} [x \in D^*] \bigg) \label{second-line} \\
        &= \max_{x \in \{0,1\}^{n+k-1} \setminus C^*_{\mathcal{E}_X}} \bigg( \Pr_{D \, : \, \dim(D)=n/2, \, \dim(C \cap D) = n/2-1} [x \in D^*_{\mathcal{E}_X}] \bigg) \label{third-line} \\
        &= \frac{|D^*_{\mathcal{E}_X} \setminus C^*_{\mathcal{E}_X}|}{|\{0,1\}^{n+k-1} \setminus C^*_{\mathcal{E}_X}|} \label{fourth-line}\\
        &= \frac{|D^*_{\mathcal{E}_X} \setminus C^*_{\mathcal{E}_X}|}{|\mathcal{E}_X|2^n - |\mathcal{E}_X|2^{n/2}} \label{fifth-line}\\
        & \leq \frac{|\mathcal{E}_X|2^{n/2} - 2^{n/2-1}}{|\mathcal{E}_X|2^n - |\mathcal{E}_X|2^{n/2}} \label{sixth-line}\\
        &= \frac{|\mathcal{E}_X|-2^{-1}}{|\mathcal{E}_X|\left(2^{n/2} - 1\right)} \label{seventh-line} \\
        &\leq \frac{1}{2^{n/2} - 1} \label{eighth-line} \\
        &\leq \frac{1}{2^{n/2-1}} \label{ninth-line}
    \end{align}
    }
    \Cref{first-line} uses the definition of the fidelity. As done in \cite{AC12}, \cref{second-line} uses the easy direction of the minimax theorem. \Cref{third-line} uses the symmetry between $C^*_{\mathcal{E}_X}$ and $C^*_{\mathcal{E}_Z}$. \Cref{fifth-line} uses \cref{equation:k-for-C*} and the following facts: each coset has the same number of elements as the group; two cosets for the same group are either disjoint or identical; and that there is a coset for each error vector. \Cref{sixth-line} uses the fact that $|D^*_{\mathcal{E}_X} \setminus C^*_{\mathcal{E}_X}| = |D^*_{\mathcal{E}_X}| - |D^*_{\mathcal{E}_X} \cap C^*_{\mathcal{E}_X}|$ and that the intersection of two cosets from $D$ and $C$ are either empty or a coset of $D \cap C$; while we do not know how many distinct coset intersections we have, we know that there is at least one such intersection: the trivial coset $C \cap D$ which has $2^{n/2-1}$ elements. The conclusion here is that $\varepsilon \coloneqq 2^{-n/2+1}$.

    Now fix $(U_{C^*}, U_{D^*}) \in R$. Then initially $|\braket{C \, | \, D}| = 1/2$. The counterfeiter must map $\ket{C}$ to some state $\ket{f_C} \coloneqq \ket{C_{e,e'}} \ket{C_{t,t'}} \ket{\text{garbage}_C}$, where $(e,e'), (t,t') \in \mathcal{E}$, and likewise for $\ket{D}$. Thus, $|\braket{f_C | f_D}| \leq 1/4$. To see this, observe the following:
    \begin{align*}
        \braket{C_{e,e'} | D_{t,t'}} &= \bigg(\frac{1}{\sqrt{|C|}}\sum_{v \in C} (-1)^{v\cdot e'} \bra{v+e}, \frac{1}{\sqrt{|D|}}\sum_{z \in D} (-1)^{z\cdot t'} \ket{z+t} \bigg)\\
        &= \frac{1}{2^{n/2}} \bigg(\sum_{v \in C} (-1)^{v\cdot e'} \bra{v+e}, \sum_{z \in D} (-1)^{z\cdot t
        '} \ket{z+t} \bigg)\\
        &\leq \frac{1}{2^{n/2}} \sum_{v \in C} \sum_{z \in D} \braket{v+e | z+t}
    \end{align*}
    Now suppose that the intersection of the two cosets, $C + e \cap D + t$ is non-empty. Then we can interpret the above sum of inner-products as the number of elements in a coset of $C \cap D$, which is equal to $|C \cap D| = 2^{n/2-1}$. Hence,
    \begin{equation*}
        \braket{C_{e,e'} | D_{t,t'}} \leq \frac{2^{n/2-1}}{2^{n/2}} = \frac{1}{2},
    \end{equation*}
    which indeed implies $|\braket{f_C | f_D}| \leq 1/4$.

    Then by \cref{lemma:inner-product-adversary-method}, with $c=1/2, d=1/4,$ and $\varepsilon=2^{-n/2+1}$,
    \begin{equation*}
        \Omega
        \left(
        \frac{c-d}{\sqrt{\varepsilon}}
        \right) 
        = \Omega
        \left(
        2^{\left(n-2\right)/4}
        \right)
        = \Omega
        \left(
        2^{n/4}
        \right). \qedhere
    \end{equation*}
\end{proof}


\begin{remark}
    It would be practically interesting to study the situation where the counterfeiter starts with an \emph{imperfect} copy of $\ket{C}$. This could model the scenario where the initial transfer from the bank is done through a noisy channel, and so a counterfeiter would start with a money state that has already incurred errors. Intuitively, this difference would not result in a substantial change to the analysis. To formally analyze this case, one could apply the inner product adversary method, as done in \cref{theorem:lower-bound-perfect-counterfeit}, but the ``starting point" of $|\braket{C | D}|=1/2$ would now be different on account of the initial noise.
\end{remark}

The following corollary shows that, for the case where the counterfeiter succeeds almost perfectly, exponentially many queries are still needed.

\begin{corollary}[Lower bound for small-error counterfeiting]
\label{cor:lower-bound-small-error-counterfeit}
    Given one copy of $\ket{C}$, as well as oracle access to $U_{C^*}$, a counterfeiter needs $\Omega(2^{n/4})$ queries to prepare a state $\rho$ such that 
    \begin{equation*}
        \Big(\bra{C_{e,e'}} \otimes \bra{C_{t,t'}}\Big) \rho \Big(\ket{C_{e,e'}} \otimes \ket{C_{t,t'}}\Big) \geq 1-\epsilon
    \end{equation*}
    for small $\epsilon$ (say $\epsilon=0.0001$) and some $(e,e'), (t,t') \in \mathcal{E}$, for a worst case $\ket{C}$. The probability that this state passes verification is
    \begin{equation*}
        \Pr[V^{\otimes 2}(\rho) = \text{accepts}] \geq 1-\epsilon.
    \end{equation*}
\end{corollary}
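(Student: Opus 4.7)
The plan is to reduce the approximate-counterfeiting statement to the perfect-counterfeiting lower bound of \cref{theorem:lower-bound-perfect-counterfeit}, using \cref{lemma:triangle-inequality-fidelity} to convert the ``$\geq 1-\epsilon$'' fidelity hypothesis into an inner-product bound that the inner-product adversary method can digest. The key observation is that the oracle family $\mathcal{O} = \{U_{C^*}\}$, the relation $R$ (defined by $\dim(C \cap D) = n/2 - 1$), the adversary progress parameter $\varepsilon = 2^{-n/2+1}$, and the initial inner product $c = 1/2$ all transfer verbatim from the proof of \cref{theorem:lower-bound-perfect-counterfeit}; only the final inner product needs to be relaxed from $1/4$ to $1/4 + 2\epsilon^{1/4}$.

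To carry out the relaxation, let $\ket{\Psi^C_T}$ denote a purification (in a common ancilla space) of the counterfeiter's output $\rho$ when the oracle is $U_{C^*}$, and likewise $\ket{\Psi^D_T}$ for $U_{D^*}$, so that the marginals $\rho_C, \rho_D$ on the two-banknote register satisfy $\bra{\psi_C}\rho_C\ket{\psi_C} \geq 1-\epsilon$ and $\bra{\psi_D}\rho_D\ket{\psi_D} \geq 1-\epsilon$ for some tolerated $\ket{\psi_C} = \ket{C_{e,e'}}\otimes\ket{C_{t,t'}}$ and $\ket{\psi_D} = \ket{D_{s,s'}}\otimes\ket{D_{r,r'}}$. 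Applying \cref{lemma:triangle-inequality-fidelity} yields
\begin{equation*}
    F(\rho_C, \rho_D) \leq |\braket{\psi_C | \psi_D}| + 2\epsilon^{1/4},
\end{equation*}
and the coset-intersection computation already done in the proof of \cref{theorem:lower-bound-perfect-counterfeit} bounds $|\braket{\psi_C | \psi_D}| \leq 1/4$ whenever $(U_{C^*}, U_{D^*}) \in R$, irrespective of which particular errors the counterfeiter picks, since that bound only relies on $\dim(C \cap D) = n/2 - 1$ and on $C, D \in \mathcal{W}$. Uhlmann's theorem applied to the specific purifications then gives $|\braket{\Psi^C_T | \Psi^D_T}| \leq F(\rho_C, \rho_D) \leq 1/4 + 2\epsilon^{1/4}$.

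I would then invoke \cref{lemma:inner-product-adversary-method} with $c = 1/2$, $d = 1/4 + 2\epsilon^{1/4}$, and $\varepsilon = 2^{-n/2+1}$ to conclude $T = \Omega(2^{n/4})$, noting that $c - d$ remains a positive constant for $\epsilon$ as large as $0.0001$. The acceptance statement is almost immediate from \cref{lemma:probability-of-acceptance}: since $\ket{\psi_C} \in S^{\otimes 2}$, we have $\Pr[V^{\otimes 2}(\rho) = \text{accepts}] = F(\rho, S^{\otimes 2})^2 \geq \bra{\psi_C}\rho\ket{\psi_C} \geq 1-\epsilon$.

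The main obstacle I anticipate is the Uhlmann bookkeeping: one has to make sure that $\ket{\Psi^C_T}$ and $\ket{\Psi^D_T}$ are set up as purifications of the marginals $\rho_C, \rho_D$ in a common Hilbert space (money register plus ``garbage''), so that $|\braket{\Psi^C_T | \Psi^D_T}| \leq F(\rho_C, \rho_D)$ is legitimately a consequence of Uhlmann. Once the purifications are fixed consistently across oracle choices, every remaining step is a direct transfer from the proof of \cref{theorem:lower-bound-perfect-counterfeit} with $d = 1/4$ replaced by $d = 1/4 + 2\epsilon^{1/4}$.
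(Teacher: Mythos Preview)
Your proposal is correct and follows essentially the same route as the paper: both arguments reuse the adversary setup from \cref{theorem:lower-bound-perfect-counterfeit}, invoke \cref{lemma:triangle-inequality-fidelity} on the purified outputs to relax the terminal inner product from $1/4$ to $1/4 + 2\epsilon^{1/4}$, and then apply \cref{lemma:inner-product-adversary-method}. Your ``Uhlmann bookkeeping'' worry is a non-issue, since the algorithm's full state $\ket{\Psi^C_T}$ is already a purification of the marginal $\rho_C$ in a fixed common workspace, exactly as the paper implicitly uses.
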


\begin{proof}
    Let $|\braket{C | D}| = c$. If the counterfeiter succeeds, it must map $\ket{C}$ to some state $\rho_C$ such that
    \begin{equation*}
        \Big(\bra{C_{\tilde{e},\tilde{e}'}} \otimes \bra{C_{\tilde{t},\tilde{t}'}}\Big) \rho_C \Big(\ket{C_{\tilde{e},\tilde{e}'}} \otimes \ket{C_{\tilde{t},\tilde{t}'}}\Big) \geq 1-\epsilon
    \end{equation*}
    for some $(\tilde{e},\tilde{e}'),(\tilde{t},\tilde{t}') \in \mathcal{E}$. Likewise, it must map $\ket{D}$ to some state $\rho_D$ such that
    \begin{equation*}
        \Big(\bra{C_{\tilde{s},\tilde{s}'}} \otimes \bra{C_{\tilde{r},\tilde{r}'}}\Big) \rho_D \Big(\ket{C_{\tilde{s},\tilde{s}'}} \otimes \ket{C_{\tilde{r},\tilde{r}'}}\Big) \geq 1-\epsilon
    \end{equation*}
    for some $(\tilde{s},\tilde{s}'),(\tilde{r},\tilde{r}') \in \mathcal{E}$

    Let $\ket{f_C}, \ket{f_D}$ be purifications of $\rho_C, \rho_D$ respectively. Then by \cref{lemma:triangle-inequality-fidelity}
    \begin{align*}
        |\braket{f_C | f_D}| &\leq F(\rho_C, \rho_D)\\
        &\leq \Bigg| \bra{C_{\tilde{e},\tilde{e}'}} \otimes \bra{C_{\tilde{t},\tilde{t}'}} \ket{D_{s,s'}} \otimes \ket{D_{r,r'}} \bigg| + 2\epsilon^{1/4}\\
        &\leq c^2 + 2\epsilon^{1/4}
    \end{align*}
    where the upper bound is achieved by assuming that the above coset intersections are non-empty.
    Thus, with $d \coloneqq c^2 + 2\epsilon^{1/4}$ in \cref{lemma:inner-product-adversary-method},
    \begin{align*}
        \Omega\bigg(\frac{c-d}{\sqrt{\varepsilon}}\bigg) = \Omega\left(\left(c-c^2-2\epsilon^{1/4}\right)\left(2^{\left(n-2\right)/4}\right)\right).
    \end{align*}
    Fixing $c=1/2$ gives $\Omega\left(2^{n/4}\right)$

    The probability that such a state $\rho$ passes verification is
    \begin{align*}
        \Pr[V^{\otimes 2}(\rho) = \text{accepts}] &= \sum_{(e,e')', (t,t') \in \mathcal{E}} \bra{C_{e,e'}} \otimes \bra{C_{t,t'}} \rho \ket{C_{e,e'}} \otimes \ket{C_{t,t'}}\\
        &\geq \left(\bra{C_{\tilde{e},\tilde{e}'}} \otimes \bra{C_{\tilde{t},\tilde{t}'}}\Big) \rho \Big(\ket{C_{\tilde{e},\tilde{e}'}} \otimes \ket{C_{\tilde{t},\tilde{t}'}}\right)\\
        &\geq 1-\epsilon. \qedhere
    \end{align*}
\end{proof}


Recall the implication of \cref{theorem:amplify-counterfeiters}: since a counterfeiter with small success probability can be amplified to one with high success probability, it suffices to prove security against the latter. We now formally use this theorem, and the previous result on counterfeiters that succeed with high probability (\cref{cor:lower-bound-small-error-counterfeit}), to provide a lower bound to the number of queries needed by a counterfeiter that has small success probability.

While the proof technique for this result is practically the same as the analogous result in \cite{AC12}, our result contains a noticeable difference. Since the space $S$ which our verification operator projects to is not 1-dimensional, the probability of a counterfeiter's state being accepted can be notably higher in comparison to the noise-free setting, even if that state has a low fidelity with the tensor product of two tolerable noisy states. This difference marks the trade-off between security and noise-tolerance.

\begin{corollary}[Lower bound for high-error counterfeiting]
\label{cor:lower-bound-high-error-counterfeit}
    Let $1/\varepsilon = o(2^{n/2})$. Given one copy of $\ket{C}$, as well as oracle access to $U_{C^*}$, a counterfeiter needs $\Omega\left(\sqrt{\varepsilon} 2^{n/4}\right)$ queries to prepare a state $\rho$ such that
    \begin{equation*}
        \max_{(e,e'), (t,t') \in \mathcal{E}}\Big(\bra{C_{e,e'}} \otimes \bra{C_{t,t'}}\Big) \rho \Big(\ket{C_{e,e'}} \otimes \ket{C_{t,t'}}\Big) \geq \varepsilon
    \end{equation*}
    for a worst-case $\ket{C}$. Furthermore, the greatest lower bound on the probability of accepting this state is
    \begin{equation*}
        \Pr[V^{\otimes 2}(\rho) = \text{accepts}] \geq |\mathcal{E}|^2 \varepsilon.
    \end{equation*}
\end{corollary}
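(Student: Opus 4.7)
The plan is to follow the strategy of the analogous corollary in \cite{AC12}: amplify the given counterfeiter $\ccC$ to near-perfect success via \Cref{theorem:amplify-counterfeiters}, then invoke the small-error lower bound in \Cref{cor:lower-bound-small-error-counterfeit}. First I would observe that the hypothesis $\max_{(e,e'),(t,t')} \bra{C_{e,e'} \otimes C_{t,t'}} \rho \ket{C_{e,e'} \otimes C_{t,t'}} \geq \varepsilon$ immediately gives $F(\ccC(\$), S^{\otimes 2}) \geq \sqrt{\varepsilon}$, since the subspace fidelity dominates the fidelity with any unit vector in $S^{\otimes 2}$; this matches the hypothesis of \Cref{theorem:amplify-counterfeiters}. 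Choosing a small constant $\delta>0$ with $(1-\delta)^2 \geq 1-\epsilon$, where $\epsilon$ is the fixed small constant from \Cref{cor:lower-bound-small-error-counterfeit}, amplification produces a counterfeiter $\ccC'$ whose output $\rho'$ satisfies $F(\rho', S^{\otimes 2})^2 \geq (1-\delta)^2$, using $\mathcal{O}\bigl(T \cdot \log(1/\delta)/(\sqrt{\varepsilon}(\sqrt{\varepsilon} + \delta^2))\bigr)$ queries to $U_{C^*}$, where $T$ is the query count of $\ccC$.

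Next I would transfer near-perfect success with $S^{\otimes 2}$ to near-perfect success with a single pair $(e^*,e^{*\prime}),(t^*,t^{*\prime}) \in \mathcal{E}$, so that \Cref{cor:lower-bound-small-error-counterfeit} applies. One concrete route is to perform a projective measurement of $\rho'$ in the orthonormal basis $\{\ket{C_{e,e'}} \otimes \ket{C_{t,t'}}\}$ of $S^{\otimes 2}$ (orthonormality is established in \Cref{lemma:probability-of-acceptance}); by pigeonhole at least one outcome occurs with probability $\geq (1-\delta)^2/|\mathcal{E}|^2$, and post-selecting yields a state of unit fidelity with the corresponding basis vector. \Cref{cor:lower-bound-small-error-counterfeit} then forces the amplified counterfeiter to use $\Omega(2^{n/4})$ queries, so equating with the amplification cost and fixing $\delta$ as a small constant gives $T = \Omega\bigl(\sqrt{\varepsilon}(\sqrt{\varepsilon} + \delta^2) \cdot 2^{n/4}\bigr) = \Omega(\sqrt{\varepsilon} \cdot 2^{n/4})$, with $1/\varepsilon = o(2^{n/2})$ ensuring the $\delta^2$ term in the bracket dominates $\sqrt{\varepsilon}$ and thus that the bound is non-trivial.

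For the acceptance probability, I would apply \Cref{lemma:probability-of-acceptance} to $V^{\otimes 2}$ to write $\Pr[V^{\otimes 2}(\rho) = \text{accepts}] = \sum_{(e,e'),(t,t') \in \mathcal{E}} \bra{C_{e,e'} \otimes C_{t,t'}} \rho \ket{C_{e,e'} \otimes C_{t,t'}}$. This is a sum of $|\mathcal{E}|^2$ non-negative terms, of which at least one is $\geq \varepsilon$ by hypothesis. The announced $|\mathcal{E}|^2 \varepsilon$ bound expresses the quantitative noise-tolerance/security trade-off: across all $|\mathcal{E}|^2$ pairs in $\mathcal{E}^2$, an adversary who achieves the per-pair fidelity $\varepsilon$ for some pair can in the worst case spread comparable weight over every pair, so that the acceptance probability grows multiplicatively in the size of the tolerated error set.

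The main obstacle is the pigeonhole step above: amplification controls only the fidelity with the $|\mathcal{E}|^2$-dimensional subspace $S^{\otimes 2}$, whereas \Cref{cor:lower-bound-small-error-counterfeit} wants fidelity with a single rank-one projector. Naively, a $1/|\mathcal{E}|^2$ post-selection penalty would propagate into the lower bound; care must be taken either to absorb this into a constant-factor loss in $\delta$, or to argue (using the symmetry of the construction and the freedom in choosing the target pair) that the amplification can be recentered toward a single pair at $\mathcal{O}(1)$ amortized cost per pair. Getting this bookkeeping right is what determines whether the final lower bound is genuinely $\Omega(\sqrt{\varepsilon} \cdot 2^{n/4})$ rather than a weaker bound polynomially degraded by $|\mathcal{E}|$.
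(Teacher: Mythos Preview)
Your proposal follows the paper's proof essentially step for step: assume a counterfeiter with $o(\sqrt{\varepsilon}\,2^{n/4})$ queries, note that the max-over-pairs condition gives $F(\sigma,S^{\otimes 2})\geq\sqrt{\varepsilon}$, amplify via \Cref{theorem:amplify-counterfeiters} with a small constant $\delta$ to obtain $F(\rho,S^{\otimes 2})\geq 1-\delta$ at cost $\mathcal{O}(1/\sqrt{\varepsilon})$ calls (hence $o(2^{n/4})$ queries in total), and appeal to \Cref{cor:lower-bound-small-error-counterfeit} for a contradiction; the $|\mathcal{E}|^2\varepsilon$ acceptance bound is likewise argued, in the paper's own words, as ``the best scenario for $\ccC$'' in which every one of the $|\mathcal{E}|^2$ summands attains the maximum value $\varepsilon$.

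The obstacle you isolate---bridging from high fidelity with the $|\mathcal{E}|^2$-dimensional subspace $S^{\otimes 2}$ to the rank-one hypothesis of \Cref{cor:lower-bound-small-error-counterfeit}---is \emph{not} addressed in the paper: after deriving $\Pr[V^{\otimes 2}(\rho)\text{ accepts}]\geq(1-\delta)^2$ it simply writes ``but this contradicts \Cref{cor:lower-bound-small-error-counterfeit}.'' So on this point you are being more scrupulous than the paper, not missing a step the paper supplies. One clean way to dissolve the pigeonhole penalty you worry about: since $U_{C^*}$ already exposes the individual-coset membership oracles $U_{C+e}$ and $U_{C^\perp+e'}$ by fixing the prefix bits, one can measure which pair $(e,e'),(t,t')$ the amplified state realizes and then apply $X^eZ^{e'}\otimes X^tZ^{t'}$ to land on $\ket{C}\otimes\ket{C}$ with fidelity $\geq(1-\delta)^2$, at the cost of only $\mathcal{O}(|\mathcal{E}_X|)$ extra queries---no $1/|\mathcal{E}|^2$ factor enters the lower bound.
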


\begin{proof}
    Suppose we have a counterfeiter $\ccC$ that makes $o\left(\sqrt{\varepsilon}2^{n/4}\right)$ queries to $U_{C^*}$, and prepares a state $\sigma$ such that
    \begin{align*}
        \max_{(e,e'), (t,t') \in \mathcal{E}}\Big(\bra{C_{e,e'}} \otimes \bra{C_{t,t'}}\Big) &\sigma \Big(\ket{C_{e,e'}} \otimes \ket{C_{t,t'}}\Big) \geq \varepsilon\\
        \Longrightarrow F(\sigma, S^{\otimes 2})^2 &\geq \varepsilon.
    \end{align*}
    Then
    \begin{align*}
        \Pr[V^{\otimes 2}(\sigma) = \text{accepts}] = F(\sigma, S^{\otimes 2})^2 &\geq \varepsilon.
    \end{align*}
    Let $\delta$ be small, say $\delta=0.00001$. Then by \cref{theorem:amplify-counterfeiters}, there exists an amplified counterfeiter $\ccC'$ that makes
    \begin{equation*}
        \mathcal{O}\left(\frac{\log 1/\delta}{\sqrt{\varepsilon}\left(\sqrt{\varepsilon} + \delta^2\right)}\right) = \mathcal{O}\left(\frac{1}{\sqrt{\varepsilon}}\right)
    \end{equation*}
    calls to $\ccC$ and $V_C$, and that prepares a state $\rho$ such that $F(\rho, S^{\otimes 2}) \geq 1-\delta$ which implies
    \begin{align*}
        \Pr[V^{\otimes 2}(\rho) = \text{accepts}] \geq (1-\delta)^2.
    \end{align*}

    Counting the $o\left(\sqrt{\varepsilon}2^{n/4}\right)$ queries from each use of $\ccC$, and $\mathcal{O}(1)$ queries from each use of $V_C$, the total number of queries that $\ccC'$ makes to $U_{C^*}$ is
    \begin{equation*}
        \left[o\left(\sqrt{\varepsilon}2^{n/4}\right) + \mathcal{O}(1)\right] \cdot \mathcal{O}\left(\frac{1}{\sqrt{\varepsilon}}\right) = o\left(2^{n/4}\right).
    \end{equation*}
    But this contradicts \cref{cor:lower-bound-small-error-counterfeit}.

    Now suppose a counterfeiter uses $\Omega\left(\sqrt{\varepsilon}2^{n/4}\right)$ queries to prepare a state $\rho$ such that
    \begin{equation*}
        \max_{(e,e'), (t,t') \in \mathcal{E}}\Big(\bra{C_{e,e'}} \otimes \bra{C_{t,t'}}\Big) \rho \Big(\ket{C_{e,e'}} \otimes \ket{C_{t,t'}}\Big) \geq \varepsilon.
    \end{equation*}
    If $\ccC$ submits this state for verification, their probability of acceptance will be higher if
    \begin{equation*}
        \Big(\bra{C_{e,e'}} \otimes \bra{C_{t,t'}}\Big) \rho \Big(\ket{C_{e,e'}} \otimes \ket{C_{t,t'}}\Big) > 0,
    \end{equation*}
    for all $(e,e'), (t,t') \in \mathcal{E}$. The best scenario for $\ccC$ is that each term obtains the maximum value, in which case we have
    \begin{align*}
        \Pr[V^{\otimes 2}(\rho) = \text{accepts}] &= \sum_{(e,e'), (t,t') \in \mathcal{E}} \bra{C_{e,e'}} \otimes \bra{C_{t,t'}} \rho \ket{C_{e,e'}} \otimes \ket{C_{t,t'}}\\
        &= |\mathcal{E}|^2 \cdot \max_{(e,e'), (t,t') \in \mathcal{E}}\Big(\bra{C_{e,e'}} \otimes \bra{C_{t,t'}}\Big) \rho \Big(\ket{C_{e,e'}} \otimes \ket{C_{t,t'}}\Big)\\
        &\geq |\mathcal{E}|^2 \varepsilon. \qedhere
    \end{align*}
\end{proof}

The following theorem shows that it suffices for a bank to generate uniformly random banknotes. Indeed, as the proof shows, if a counterfeiter is able to clone a uniformly random banknote, then they can clone any banknote. Note that in the noise-free case of \cite{AC12}, the proof of this theorem used an invertible linear map $f$; in our case, we additionally require $f$ to be an isometry (\emph{i.e.}, a distance-preserving map) to ensure that the noise-tolerance is, in a sense, ``preserved" under the mapping $f$.

\begin{theorem}[Lower bound for average-case counterfeiting]
\label{theorem:lower-bound-average-case-counterfeiting}
    Let $C$ be a uniformly random element of $\mathcal{W}$. Then given one copy of $\ket{C}$, as well as oracle access to $U_{C^*}$, a counterfeiter $\ccC$ needs $\Omega(\sqrt{\varepsilon}2^{n/4})$ queries to prepare a $2n$-qubit state $\rho$ that $V^{\otimes 2}_C$ accepts with probability at least $|\mathcal{E}|^2\varepsilon$, for all $1/\varepsilon=o(2^{n/2})$. Here the probability is taken over the choice $C$, as well as the behaviour of $\ccC$ and $V^{\otimes 2}_C$.
\end{theorem}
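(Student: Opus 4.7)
The plan is to reduce the average-case statement to the worst-case lower bound already established in \cref{cor:lower-bound-high-error-counterfeit}, mirroring the analogous reduction in \cite{AC12} but using invertible linear isometries (rather than arbitrary invertible linear maps) so that $\mathcal{W}$-membership and the tolerated error sets $\mathcal{E}_X, \mathcal{E}_Z$ are preserved. The key enabling fact is \cref{lemma:linear-isometry}: if $f$ is an invertible linear isometry of $\mathbb{F}_2^n$ and $C_0 \in \mathcal{W}$, then $f(C_0) \in \mathcal{W}$.

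Suppose for contradiction there were a counterfeiter $\ccC$ making $o(\sqrt{\varepsilon}2^{n/4})$ queries that, for a uniformly random $C \in \mathcal{W}$, produces a $2n$-qubit state $\rho$ accepted by $V^{\otimes 2}_C$ with probability at least $|\mathcal{E}|^2 \varepsilon$. I would construct from $\ccC$ a worst-case counterfeiter $\ccC'$ as follows. Given $C_0 \in \mathcal{W}$ together with oracle access to $U_{C_0^*}$ and one copy of $\ket{C_0}$, $\ccC'$ first samples a uniformly random invertible linear isometry $f$ and lets $U_f$ be the unitary that permutes standard basis vectors according to $f$. Applying $U_f$ to the banknote register yields $U_f \ket{C_0} = \ket{f(C_0)} = \ket{C}$, and by \cref{lemma:linear-isometry} this $C$ is distributed on $\mathcal{W}$ with the appropriate uniform distribution over the isometry orbit of $C_0$.

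Next, $\ccC'$ runs $\ccC$ on $(\ket{C}, U_{C^*})$, simulating each query to $U_{C^*}$ with one query to $U_{C_0^*}$: apply $U_{f^{-1}}$ to the $n$-qubit subspace register, query $U_{C_0^*}$, then apply $U_f$. Correctness follows because $(b, x) \in C^*$ if and only if $(b, f^{-1}(x)) \in C_0^*$: $f^{-1}$ maps each coset $C + e$ to $C_0 + f^{-1}(e)$ and, being a Hamming-isometry, maps the tolerated error set $\mathcal{E}_X$ bijectively onto itself (same for $\mathcal{E}_Z$ on the dual side). After $\ccC$ outputs a state $\rho$ that $V_C^{\otimes 2}$ would accept with probability at least $|\mathcal{E}|^2 \varepsilon$ (by the assumed average-case success), $\ccC'$ applies $U_{f^{-1}} \otimes U_{f^{-1}}$ to $\rho$ to obtain $\rho'$. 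Because conjugation by $U_{f^{-1}} \otimes U_{f^{-1}}$ carries the subspace spanned by $\{\ket{C_{e,e'}} \otimes \ket{C_{t,t'}}\}_{(e,e'),(t,t')\in \mathcal{E}}$ bijectively onto the corresponding subspace for $C_0$, the verifier $V_{C_0}^{\otimes 2}$ accepts $\rho'$ with the same probability.

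Counting queries, $\ccC'$ uses $o(\sqrt{\varepsilon}2^{n/4})$ queries to $U_{C_0^*}$ in total and succeeds on a worst-case $C_0$ with probability at least $|\mathcal{E}|^2 \varepsilon \geq \varepsilon$, contradicting \cref{cor:lower-bound-high-error-counterfeit}. The main obstacle in turning this sketch into a rigorous proof is verifying that the induced distribution on $f(C_0)$ is indeed (or can be taken to be) the uniform distribution on $\mathcal{W}$ declared in the theorem, since isometry orbits on $\mathcal{W}$ can in principle be proper subsets; I would address this either by checking that the full group of invertible Hamming-isometries acts transitively on $\mathcal{W}$ for the code parameters of interest, or by reinterpreting the theorem's ``uniformly random element of $\mathcal{W}$'' as uniform over such an orbit, which is all that is required for the security proof of the full mini-scheme to go through. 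The remaining steps, which concern simulating oracles through $U_f$ and tracking the action of $U_{f^{-1}} \otimes U_{f^{-1}}$ on the projector $V^{\otimes 2}$, are routine once \cref{lemma:linear-isometry} is in hand.
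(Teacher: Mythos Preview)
Your proposal is correct and takes essentially the same approach as the paper: both reduce the average-case statement to \cref{cor:lower-bound-high-error-counterfeit} by sampling a uniformly random invertible linear isometry $f$, using \cref{lemma:linear-isometry} to ensure $f(C_0)\in\mathcal{W}$, simulating $U_{C^*}$ from $U_{C_0^*}$ via conjugation by $U_f$, and undoing $f$ on both output registers.

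One remark: the transitivity concern you raise is real and is not addressed in the paper either. The paper simply asserts that ``$f(C)$ is a uniformly random element of $\mathcal{W}$ by \cref{lemma:linear-isometry},'' but \cref{lemma:linear-isometry} only guarantees $f(C)\in\mathcal{W}$; it says nothing about transitivity of the isometry group on $\mathcal{W}$. Since the invertible linear Hamming-isometries of $\mathbb{F}_2^n$ are exactly the coordinate permutations, transitivity on $\mathcal{W}$ fails in general (permutation-inequivalent self-dual-distance codes exist). Your suggested resolution---interpreting ``uniform on $\mathcal{W}$'' as uniform over an isometry orbit, which is all \cref{theorem:security-mini-scheme} actually needs---is the cleanest fix, and arguably improves on the paper's presentation.
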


\begin{proof}
    Suppose we had a counterfeiter $\ccC$ that violated the above. Using $\ccC$ as a black box, we construct a new counterfeiter $\ccC'$ that violates \cref{cor:lower-bound-high-error-counterfeit}.

    Given a (deterministically-chosen) banknote $\ket{C}$ and oracle access to $U_{C^*}$, first choose an invertible linear isometry $f: \mathbb{F}^n_2 \rightarrow \mathbb{F}^n_2$ uniformly at random. Then $f(C)$ is a uniformly random element of $\mathcal{W}$ by \cref{lemma:linear-isometry}. Furthermore, the state $\ket{C}$ can be transformed into $\ket{f(C)}$ straightforwardly.

    Using the oracle $U_{C^*}$ and the map $f$, we can simulate the corresponding oracle for the money state $\ket{f(C)}$. Let $U_f$ be the unitary that acts as $U_f\ket{x} = \ket{f(x)}$. Then the corresponding oracle is given by $(I^k \otimes U_f) U_{C^*} (I^k \otimes U^\dag_f)$. To see that this works, observe that,
    \begin{align*}
        x \in C_{\mathcal{E}_X} 
        &\Leftrightarrow
        x \in C+e, \text{ for some } e \in \mathcal{E}_X\\
        &\Leftrightarrow
        f(x) \in f(C)+f(e), \text{ where } f(e) \in \mathcal{E}_X\\
        &\Leftrightarrow
        f(x) \in f(C)_{\mathcal{E}_X}
    \end{align*}
    where $f(e) \in \mathcal{E}_X$ follows because $f$ is an isometry and thus preserves the weight of $e$.

    So, by using $\ccC$ for $n/2$-dimensional states that are drawn uniformly randomly from $\mathcal{W}$, the adversary $\ccC'$ can produce a state $\rho_f$ that $V^{\otimes 2}_{f(C)}$ accepts with probability at least $|\mathcal{E}|^2\varepsilon$. Applying $f^{-1}$ to both registers of $\rho_f$, $\ccC'$ can obtain a state $\rho$ that $V^{\otimes 2}_C$ accepts with probability at least $|\mathcal{E}|^2\varepsilon$, thereby contradicting \cref{cor:lower-bound-high-error-counterfeit}.
\end{proof}

We now rephrase \cref{theorem:lower-bound-average-case-counterfeiting} in terms of the oracle pair $(U_{C_{\mathcal{E}_X}}, U_{C_{\mathcal{E}_Z}})$ as opposed to the single oracle $U_{C^*}$. In the noise-free case of \cite{AC12}, this is immediate as their single oracle $U_{A^*}$ exactly corresponds to the pair $(U_A, U_{A^\perp})$. In our case, we recall (see \cref{section:applying-the-method}) that the query complexity for the pair $(U_{C_{\mathcal{E}_X}}, U_{C_{\mathcal{E}_Z}})$ is the query complexity for $U_{C^*}$ divided by $2|\mathcal{E}_X|$. Under this correspondence, we get the following corollary of \cref{theorem:lower-bound-average-case-counterfeiting}.

\begin{corollary}
\label{corollary:single-oracle-to-oracle-pair}
    Let $C$ be a uniformly random element of $\mathcal{W}$. Then given one copy of $\ket{C}$, as well as oracle access to $(U_{C_{\mathcal{E}_X}}, U_{C_{\mathcal{E}_Z}})$, a counterfeiter $\ccC$ needs $\Omega(\sqrt{\varepsilon}2^{n/4}/|\mathcal{E}_X|)$ queries to prepare a $2n$-qubit state $\rho$ that $V^{\otimes 2}_C$ accepts with probability at least $|\mathcal{E}|^2\varepsilon$, for all $1/\varepsilon=o(2^{n/2})$. Here the probability is taken over the choice $C$, as well as the behaviour of $\ccC$ and $V^{\otimes 2}_C$.
\end{corollary}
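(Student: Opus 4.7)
The plan is to reduce directly to \cref{theorem:lower-bound-average-case-counterfeiting} via a query simulation argument, following the explicit correspondence described at the end of \cref{section:applying-the-method}. The key observation is that a single query to $U_{C_{\mathcal{E}_X}}$ (respectively $U_{C_{\mathcal{E}_Z}}$) can be implemented by $|\mathcal{E}_X|$ queries to $U_{C^*}$: one prepares the appropriate fixed ancilla label $\ket{a}$ for each $a \in \{0,1\}^{k-1}$ indexing a coset $C+e$ with $e \in \mathcal{E}_X$, queries $U_{C^*}$ on $\ket{a}\ket{x}$, and the sign flip happens exactly when $x$ lies in that coset; summing over all labels reproduces the action of $U_{C_{\mathcal{E}_X}}$. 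Since $|\mathcal{E}_X| = |\mathcal{E}_Z|$, a single query to either of the two oracles in the pair costs $|\mathcal{E}_X|$ queries to $U_{C^*}$, so any algorithm using $T$ queries to the pair can be simulated with at most $|\mathcal{E}_X| \cdot T$ queries to $U_{C^*}$.

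I would phrase the argument as a contradiction. Suppose, towards contradiction, that there is a counterfeiter $\ccC$ that makes only $o\bigl(\sqrt{\varepsilon}\,2^{n/4}/|\mathcal{E}_X|\bigr)$ queries to $(U_{C_{\mathcal{E}_X}}, U_{C_{\mathcal{E}_Z}})$ and still produces a $2n$-qubit state $\rho$ which $V^{\otimes 2}_C$ accepts with probability at least $|\mathcal{E}|^2 \varepsilon$, averaged over a uniformly random $C \in \mathcal{W}$. Using the simulation above, we construct a counterfeiter $\ccC^\star$ that has access only to $U_{C^*}$: each time $\ccC$ queries one of $U_{C_{\mathcal{E}_X}}$ or $U_{C_{\mathcal{E}_Z}}$, $\ccC^\star$ executes the $|\mathcal{E}_X|$ queries to $U_{C^*}$ described above. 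The output state of $\ccC^\star$ is identical to that of $\ccC$, so $V^{\otimes 2}_C$ accepts it with probability at least $|\mathcal{E}|^2 \varepsilon$.

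Counting queries, $\ccC^\star$ makes at most
\begin{equation*}
    |\mathcal{E}_X| \cdot o\!\left(\frac{\sqrt{\varepsilon}\,2^{n/4}}{|\mathcal{E}_X|}\right) = o\!\left(\sqrt{\varepsilon}\,2^{n/4}\right)
\end{equation*}
queries to $U_{C^*}$, contradicting \cref{theorem:lower-bound-average-case-counterfeiting}. Therefore any counterfeiter meeting the acceptance threshold must make $\Omega\bigl(\sqrt{\varepsilon}\,2^{n/4}/|\mathcal{E}_X|\bigr)$ queries to the oracle pair, as claimed.

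The argument is essentially bookkeeping, so I do not expect any real obstacle; the only subtle point is to make sure the simulation of $U_{C_{\mathcal{E}_X}}$ really does go through with exactly $|\mathcal{E}_X|$ queries (not, say, $2|\mathcal{E}_X|$ or $|\mathcal{E}_X|+1$), which comes down to the fact that the ancillary register of $C^*$ has $2^k = 2|\mathcal{E}_X|$ labels split evenly into $X$-labels and $Z$-labels. Since $\ccC$ queries only one of the two oracles in the pair per call, each call costs $|\mathcal{E}_X|$ rather than $2|\mathcal{E}_X|$ queries to $U_{C^*}$, giving the stated factor of $|\mathcal{E}_X|$ (and not $2|\mathcal{E}_X|$) in the denominator.
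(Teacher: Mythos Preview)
Your proposal is correct and follows exactly the approach the paper takes: the corollary is obtained from \cref{theorem:lower-bound-average-case-counterfeiting} by the query-simulation observation (stated just before \cref{eqn:single-classical-oracle} and in the paragraph preceding the corollary) that one call to $U_{C_{\mathcal{E}_X}}$ or $U_{C_{\mathcal{E}_Z}}$ can be simulated with $|\mathcal{E}_X|$ calls to $U_{C^*}$, and then dividing the lower bound accordingly. Your write-up is in fact more explicit than the paper's own justification, and your closing remark about the $|\mathcal{E}_X|$ versus $2|\mathcal{E}_X|$ factor correctly resolves the minor discrepancy between the paper's prose (which says ``divided by $2|\mathcal{E}_X|$'') and the stated bound (which has only $|\mathcal{E}_X|$); the factor of $2$ is absorbed into the $\Omega$ anyway.
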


In the noise-free case, \cref{corollary:single-oracle-to-oracle-pair} implies that $\varepsilon$ is $1/\textsf{exp}(n)$, which constitutes the soundness error in this case. In the noisy case, we need $|\mathcal{E}|^2$ to be constant in $n$ to get the same soundness error, \emph{i.e.}, for $|\mathcal{E}|^2 \varepsilon$ to be $1/\textsf{exp}(n)$. More errors can be tolerated at the cost of increasing the soundness error, though we remark that it is not possible to tolerate too many errors. Recall from \cref{eq:number-of-errors} that the size of $|\mathcal{E}_q|$ is given by
\begin{equation*}
    |\mathcal{E}_q|
    = 
    |\mathcal{E}_X| \cdot |\mathcal{E}_Z|
    =
    \left(
    \sum^q_{j=0} \binom{n}{j}
    \right)^2.
\end{equation*}
When $q$ is large (close to $n$), the sum in the parentheses approaches exponential, and so the soundness error approaches $1$. Intuitively, this makes sense since a high noise-tolerance allows highly imperfect copies to be accepted, which is advantageous for a counterfeiter. In addition to increasing the soundness error, \cref{corollary:single-oracle-to-oracle-pair} shows that tolerating more errors has the effect of reducing the number of queries needed by an adversary. On the other hand, when $q=kn$, with $0<k\leq 1/2$, we have the following bound from $\cite{FG06}$,
\begin{equation}
\label{eq:error-bound}
    |\mathcal{E}_q| \leq 2^{2 H(k) n},
\end{equation}
where $H(\cdot)$ is the binary Shannon entropy. Then for a small number of arbitrary errors $q$ (\emph{i.e.}, small $k$), we see that $|\mathcal{E}_q|$ stays small enough to ensure that we can get a reasonably low soundness error.

This trade-off between noise-tolerance and soundness is visualized in \cref{fig:soundness-error-trade-off}. We see that if we fix a desirable value for the soundness error, then increasing $q$ necessitates an increase of $n$ to maintain the same soundness error. Given this, it is interesting to consider what an optimal choice of $n$ would be for a fixed soundness error. If, for example, we consider a soundness error of 0.03, inspection of \cref{fig:soundness-error-trade-off} reveals that the ratio $q/n$ is approximately $1.75\%$ for $(n,q)=(57,1)$, but $1.81\%$ for $(n,q)=(166,3)$, and thus increasing $n$ can improve the noise-tolerance. For larger soundness error, it is less clear that this still holds true. We discuss this further in \cref{section:future-directions} as a direction for future research.

\begin{figure}[ht]
    \centering
    \includegraphics[scale=0.5]{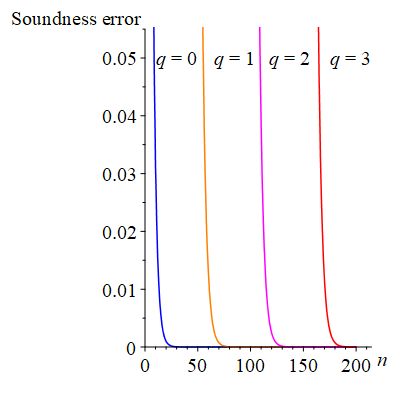}
    \caption{A plot of the soundness error $|\mathcal{E}_q|^2\varepsilon$ for various choices of $q$ (the number of tolerated arbitrary errors) against the number of qubits $n$. Here, we take $\varepsilon=2^{-n/2}$.}
    \label{fig:soundness-error-trade-off}
\end{figure}

We now return to proving security for the noisy-projective mini-scheme $\mathcal{M}$ specified in \cref{section:formal-specification}. To do this, we use the same proof technique as the analogous result in \cite{AC12}; the notable difference here is that we must account for the fact that the formal specification of our scheme uses conjugate coding states, though this does not affect security.

\begin{theorem}[Security of the noisy-projective mini-scheme]
\label{theorem:security-mini-scheme}
    The noisy-projective mini-scheme $\mathcal{M} = (\textsf{Bank}_\mathcal{M}, \textsf{Ver}_\mathcal{M})$, which is defined relative to the classical oracle $U$, has perfect completeness and $|\mathcal{E}|^2/\textsf{exp}(n)$ soundness error.
\end{theorem}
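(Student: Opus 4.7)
The proof splits into two parts: perfect completeness and soundness, the latter being the genuine work.

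For perfect completeness, the plan is to invoke \cref{lemma:probability-of-acceptance} directly. A valid banknote has the form $\$ = (z_r, X^e Z^{e'}\ket{C_r})$ for some $(e,e') \in \mathcal{E}$. Since $\textsf{Ver}_{\mathcal{M}}$ first verifies the serial number using $\mathcal{Z}$ (which succeeds by construction of $\$$), it then applies $V_{C_r}$. By \cref{lemma:probability-of-acceptance}, $V_{C_r}$ is the projector onto $\text{span}\{\ket{C_{r,\tilde e,\tilde e'}}\}_{(\tilde e,\tilde e')\in\mathcal{E}}$, which contains $\ket{C_{r,e,e'}}$, so acceptance is certain.

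For soundness, my plan is to reduce the security of $\mathcal{M}$ to \cref{corollary:single-oracle-to-oracle-pair}. Let $\ccC$ be a polynomial-size counterfeiter in the full oracle model $U$. The key claim is that queries made by $\ccC$ to the four components of $U$ can, with negligible loss, be replaced by queries to just $(U_{C_{\mathcal{E}_X}},U_{C_{\mathcal{E}_Z}})$ for the \emph{target} subspace $C_r$. Specifically: (i) queries to $\mathcal{G}$ at inputs $r' \ne r$ return bases of independent uniformly random subspaces in $\mathcal{W}$, information the counterfeiter could have sampled itself, and so these queries can be simulated without $U$; (ii) queries to $\mathcal{Z}$ return $1$ only on the (at most $2^n$) valid serial numbers out of $2^{3n}$ possibilities, so a random superposition over input serial numbers has weight at most $2^{-2n}$ on accepting inputs, meaning $\mathcal{Z}$ can be simulated by the all-zero oracle up to negligible trace-distance error over any polynomial number of queries; (iii) queries to $\mathcal{T}_{\text{primal}}$ or $\mathcal{T}_{\text{dual}}$ on a serial number $z \ne z_r$ either hit some other valid $z_{r'}$ (negligibly often, by (ii)) and otherwise act trivially, while queries on $z = z_r$ are precisely queries to $(U_{C_r,\mathcal{E}_X},U_{C_r,\mathcal{E}_Z})$. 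Taken together, a polynomial-time counterfeiter against $\mathcal{M}$ can be converted into one against the pair $(U_{C_r,\mathcal{E}_X},U_{C_r,\mathcal{E}_Z})$ alone, with the same (up to negligible) success probability, where $C_r$ is uniformly random in $\mathcal{W}$.

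Finally, the plan is to handle the conjugate-coding-state variant via \cref{lemma:cosetBB84equivalence}: the state $U_{\mathcal{B}_r}\ket{0}_{\theta_r}$ handed to the counterfeiter is exactly a uniformly random subspace state $\ket{C_r}$ with $C_r \in \mathcal{W}$, so no security distinction arises between the two variants of $\textsf{Bank}_{\mathcal{M}}$. Combining these reductions with \cref{corollary:single-oracle-to-oracle-pair} applied at the smallest $\varepsilon$ compatible with polynomial query complexity, namely $\varepsilon = 1/\textsf{exp}(n)$ (which forces $\Omega(\sqrt{\varepsilon}\,2^{n/4}/|\mathcal{E}_X|)$ queries to be polynomial only for exponentially small $\varepsilon$), yields that $\textsf{Ver}^{\otimes 2}_{\mathcal{M}}$ accepts with probability at most $|\mathcal{E}|^2\varepsilon = |\mathcal{E}|^2/\textsf{exp}(n)$. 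The main obstacle I expect is step (ii)–(iii) above: carefully formalizing that the serial-number checker and the off-target tester queries leak a negligible amount of information, so that the adversary's behaviour on the full oracle $U$ is indistinguishable (up to negligible error) from its behaviour when restricted to the target oracle pair. Everything else is a clean composition of previously established lemmas.
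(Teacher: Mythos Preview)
Your overall architecture matches the paper's: perfect completeness via \cref{lemma:probability-of-acceptance}, and soundness by reducing an $\mathcal{M}$-counterfeiter to a counterfeiter against the bare pair $(U_{C_{\mathcal{E}_X}},U_{C_{\mathcal{E}_Z}})$ so that \cref{corollary:single-oracle-to-oracle-pair} applies. The treatment of the conjugate-coding variant via \cref{lemma:cosetBB84equivalence} and the final quantitative step are also in line with the paper.

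However, your simulation argument has a concrete gap in step~(ii), and step~(i) is incomplete in a way that matters. In~(ii) you propose simulating $\mathcal{Z}$ by the all-zero oracle, arguing that valid serial numbers are sparse in $\{0,1\}^{3n}$. But the counterfeiter \emph{holds} $z_r$ as part of its banknote, so a single classical query $\mathcal{Z}(z_r)$ distinguishes the real oracle from the all-zero one with probability~$1$; sparsity alone cannot save you here. Relatedly, in~(i) you only explain how to simulate $\mathcal{G}$ at inputs $r'\neq r$ and never address queries to $\mathcal{G}$ at the actual index $r$; a counterfeiter that guesses $r$ would learn a basis for $C_r$ outright, so you must argue this query is made with negligible amplitude.

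The paper closes both holes with a single cleaner move: the reduction adversary constructs the \emph{entire} oracle $U$ itself, choosing fresh serial numbers and subspaces for every $r'$ (including~$r$), so that $\mathcal{Z}$, $\mathcal{T}_{\text{primal}}$, and $\mathcal{T}_{\text{dual}}$ can be answered exactly --- using the given pair $(U_{C_{\mathcal{E}_X}},U_{C_{\mathcal{E}_Z}})$ only when a tester query carries the target serial number. The sole discrepancy between this simulated oracle and the real one is the value of $\mathcal{G}$ at the single uniformly random point~$r$, and the BBBV hybrid argument bounds the resulting trace-distance shift by $\textsf{poly}(n)/2^{n/2}$. Once you reorganize your (i)--(iii) into this ``build the whole oracle, then BBBV at one point'' form, the gaps disappear and the rest of your plan goes through.
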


\begin{proof}
    Perfect completeness follows from how $\mathcal{M}$ is defined (see \cref{section:formal-specification}) and from \cref{lemma:probability-of-acceptance}.

    \Cref{corollary:single-oracle-to-oracle-pair} almost gives us the soundness result. To formally complete the proof, we must show that if a quantum polynomial-time counterfeiter $\ccC$ is given a banknote of the form \mbox{$\ket{\$_r} = \ket{z
    _r} U_{\mathcal{B}_r}\ket{0}_{\theta_r}$}, then querying the oracles $\mathcal{G}, \mathcal{Z},\mathcal{T}_{\text{primal}}, \mathcal{T}_{\text{dual}}$ provides no additional advantage over querying $(U_{C_{\mathcal{E}_X}, r}, U_{C_{\mathcal{E}_Z},r})$, as done in \cref{corollary:single-oracle-to-oracle-pair}.

    The goal is to show that any successful attack on $\mathcal{M}$ can be converted into successful counterfeiting of $\ket{C}$ given oracle access to $(U_{C_{\mathcal{E}_X}, r}, U_{C_{\mathcal{E}_Z},r})$. This would then contradict \cref{corollary:single-oracle-to-oracle-pair}, thus proving the result. This can be achieved by observing that an adversary trying to counterfeit $\ket{C}$, given $(U_{C_{\mathcal{E}_X}, r}, U_{C_{\mathcal{E}_Z},r})$, can essentially simulate the oracles $\mathcal{G}, \mathcal{Z}, \mathcal{T}_{\text{primal}}, \mathcal{T}_{\text{dual}}$.

    For the oracle $\mathcal{G}$, let $r \in \{0,1\}^n$ be the random string chosen by the bank, so that $\mathcal{G}(r)=(z_r, \theta_r, \mathcal{B}_r)$. Observe that the string $r$ remains uniformly random, even when we condition on $z_r, \theta_r$, and $\mathcal{B}_r$, as well as complete descriptions of $\mathcal{T}_{\text{primal}}, \mathcal{T}_{\text{dual}}$ and $\mathcal{Z}$. Even if we query $\mathcal{G}(r')$ for $r' \neq r$, no information about $r$ is revealed because the values of $\mathcal{G}$ are generated independently. So, for the simulation of $\mathcal{G}$, suppose we instead modify it by setting $\mathcal{G}(r) \coloneqq (z'_r, \theta'_r, \mathcal{B}'_r)$ for some new $3n$-bit serial number $z'_r$ and $\theta'_r, \mathcal{B}'_r$ (such that $\theta'_r$ has Hamming weight $n/2$) chosen uniformly at random. Then the BBBV hybrid argument \cite{BBBV97} says that, in expectation over $r$, this can alter the final state output by the counterfeiter $\ccC(\$_r)$ by at most $\textsf{poly}(n)/2^{n/2}$ in trace distance. Thus, if $\ccC$ succeeded with non-negligible probability before, then $\ccC$ must still succeed with non-negligible probability after we set $\mathcal{G}(r) \coloneqq (z'_r, \theta'_r, \mathcal{B}'_r)$.

    Given this simulation of $\mathcal{G}$, an adversary can now easily simulate $\mathcal{Z}$ for the new serial number $z'_r$. For $\mathcal{T}_{\text{primal}}$, recall that its behaviour is $\mathcal{T}_{\text{primal}}\ket{z}\ket{v} = \ket{z}U_{C_{\mathcal{E}_X}}\ket{v}$, which can be simulated with knowledge of $z$ and $U_{C_{\mathcal{E}_X}}$. Similarly, $\mathcal{T}_{\text{dual}}$ can be simulated with $U_{C_{\mathcal{E}_Z}}$. Note that the security guarantees we consider are query complexity bounds, thus we do not need to be concerned with the computational complexity of these simulations.

    Thus, with these simulated oracles, any successful attack on $\mathcal{M}$ can indeed be converted into successful counterfeiting of $\ket{C}$ with oracle access to only $(U_{C_{\mathcal{E}_X}, r}, U_{C_{\mathcal{E}_Z},r})$. This contradicts \cref{corollary:single-oracle-to-oracle-pair}.
\end{proof}

\begin{remark}
\label{remark:correcting-noise}
    The scheme we have presented tolerates noise but does not correct it. Consequently, a valid banknote will eventually accumulate too many errors to be accepted. It is straightforward to adapt the subset approach to correct tolerable noise. Instead of checking bit-flip errors with a single oracle $U_{C_{\mathcal{E}_X}}$, we would use a collection of oracles $\{U_{C+e}\}_{e \in \mathcal{E}_X}$ where $U_{C+e}$ checks for membership in the coset $C+e$, and similarly for phase-flip errors. Verification would involve querying these oracles sequentially; by doing this, we can determine if the money state is in one of the cosets (in which case we accept it) and, based off which coset it belongs to, we learn which error has occurred (which allows for correction). The security proof would proceed in a similar manner. In \cref{section:applying-the-method}, the construction of $C^*$ would remain the same, but now a single query to $U_{C + e}$ corresponds to a single query to $U_{C^*}$, and so the query complexity in \cref{corollary:single-oracle-to-oracle-pair} would be $\Omega(\sqrt{\varepsilon}2^{n/4})$. The downside of this approach is that verification now involves the oracles $(\{U_{C+e}\}_{e \in \mathcal{E}_X}, \{U_{C+e}\}_{e \in \mathcal{E}_Z})$, whereas the scheme we presented for tolerating noise only involved $(U_{C_{\mathcal{E}_X}}, U_{C_{\mathcal{E}_Z}})$, and so this scheme for error correction could involve many additional queries as part of the verification algorithm.
\end{remark}

\section{Future directions}
\label{section:future-directions}

\paragraph{In \cref{section:verification}, we presented two approaches to verification: the subset approach and the subspace approach. Is one preferable over the other?} Both approaches achieve the same completeness and soundness error, but is one easier to realize in the plain model? Towards answering this question, we note that in \cite{Zha19}, Zhandry instantiated the oracle scheme of \cite{AC12} by using iO to construct what he called a \emph{subspace hiding obfuscator}. This technique seems to naturally apply to our subspace approach, thus implying that it can also be realized in the plain model. Is it any more difficult to use iO to instantiate the classical oracle that checks membership in subsets?

\paragraph{How does adding a layer of error correction compare to our approach?} In \cref{section:a-layer-of-error-correction}, we briefly discussed how one could take a different approach to noise-tolerance by applying a layer of error correction on top of the \cite{AC12} oracle scheme. This approach has no known security proof, though we have identified the following obstacles: translating any necessary computations to versions that operate on encoded states, and determining if/how the original security proof changes when the adversary holds an encoded state instead of a subspace state. Given a realization of this approach, it would be interesting to make a comparison with our scheme. Notably, a layer of error correction typically requires many additional qubits (\emph{i.e.}, large overhead), so one may ask questions like: for a fixed $n$, how does the noise-tolerance and soundness error compare?

\paragraph{What is the optimal noise-tolerance for a fixed soundness error?} In the discussion that followed \cref{corollary:single-oracle-to-oracle-pair}, we observed that for a fixed soundness error, the noise-tolerance $q/n$ could be improved by increasing $n$. Can this observation be formalized in general? Specifically, for a fixed, arbitrary soundness error, what choice of $n$ maximizes the quantity $q/n$? Answering this exactly requires analysis of the expression for $|\mathcal{E}_q|^2 \varepsilon$. For low noise, the bound in \cref{eq:error-bound} could be used to perhaps simplify the analysis.

\paragraph{Is it possible to obtain noise-tolerance for other public-key quantum money schemes?} In \cref{section:further-related-work}, we mentioned some of the other public-key quantum money schemes that rely on hardness assumptions instead of oracles. Is it possible to make any of these schemes noise-tolerant?

\section*{Acknowledgments}
We would like to thank Anne Broadbent for advice and related discussions. We also thank the anonymous referees for their comments. We acknowledge the support of the Natural Sciences and Engineering Research Council of Canada (NSERC), the US Air Force Office of Scientific Research under award number FA9550-20-1-0375, and the University of Ottawa’s Research Chairs program.


\bibliographystyle{bibtex/bst/alphaarxiv.bst}
\bibliography{bibtex/bib/quasar-full.bib,
              bibtex/bib/quasar.bib,
              bibtex/bib/quasar-more.bib}

\end{document}